\newif\ifcscomp
\newif\ifcost
\newif\ifnat
\newif\ifemi
\newif\ifthm
\newif\ifllncs
\newif\ifarticle
\newif\iftesi
\newif\ifmod
\newif\ifqapl
\newif\ifqapl
\newif\ifpar
\newcommand{\cda}{\mbox{CDA$^\sharp$}}
\newcommand{\inleq}[3]{{#1} \leq {#2} \leq {#3}}
\newcommand{\defref}[1]{Def.~\ref{#1}}
\newcommand{\figref}[1]{Fig.~\ref{#1}}
\newcommand{\secref}[1]{\S~\ref{#1}}
\newcommand{\exref}[1]{Ex.~\ref{#1}}
\newcommand{\thmref}[1]{Th.~\ref{#1}}
\newcommand{\el}[2]{{#1}\llbracket{#2}\rrbracket}
\newcommand{\hcvel}[2]{\hcv{#1}\llbracket{#2}\rrbracket}
\newcommand{\qlab}[1]{{\mathfrak L}({#1})}
\newcommand{\hdof}[1]{\mathcal H_{\nretohds{#1}}}
\renewcommand{\ne}{\mathsf{ne}}
\newcommand{\letters}{\mathcal{S}}
\newcommand{\longversion}[1]{}
\newcommand{\nretohds}[1]{\llparenthesis #1 \rrparenthesis}
\newcommand{\step}[3]{#2 \stackrel{#1}{\to} #3}
\newcommand{\trans}{\mathit{tr}}
\newcommand{\rec}{\mathit{reach}}
\newcommand{\emptystr}{\epsilon}
\newcommand{\pop}[1]{\mathtt{pop}}
\newcommand{\push}[1]{\mathtt{push}}
\newcommand{\weight}[1]{|#1|}
 \newcommand{\Real}[1]{\mathrm{Real}}
  \newcommand{\compile}[2]{\ifthenelse{\equal{#1}{yes}}{#2}{}}
  \newcommand{\cf}[2]{
    \fontsize{#1}{#1}{\selectfont{#2}}
  }
    \newcommand{\emi}[1]{{\marginpar{\cf{6}{{#1}}}}}
    \newcommand{\emic}[2]{\ \\[.1em]
      \definecolor{shadecolor}{rgb}{1,0.99,0.9}
      \fcolorbox{red}{shadecolor}{\parbox{\linewidth}{ 
            \color{gray}
            {\color{blue} #2}{\sf #1}
            }}
      \ \\[.1em]}
    \newcommand{\emi}[1]{}
    \newcommand{\emic}[2]{}
  \newcommand{\st}{\ \ \big| \ \ }
  \newcommand{\proofend}{\mbox{$\Box$}}
  \newcommand{\mmdef}{\mbox{$\;\;\stackrel{\mathrm{def}}{=}\;\;$}}
  \newcommand{\upd}[2]{[{#1} \mapsto {#2}]}
  \newcommand{\comment}[1]{}
\newcommand{\conf}[1]{\langle {#1} \rangle}
\newcommand{\sep}{\;\;\mid\;\;}
\newcommand{\names}{\mbox{$\mathcal{N}$}}
  \newtheorem{theorem}{Theorem}[section]
  \newtheorem{definition}{Theorem}[section] 
  \newtheorem{proposition}{Theorem}[section] 
  \newtheorem{lemma}{Theorem}[section]
  \newtheorem{corollary}{Theorem}[lemma] 
  \newtheorem{remark}{Theorem}[section]
  \newtheorem{observation}{Theorem}[section]
  \newtheorem{notation}{Theorem}[section]
  \newtheorem{example}{Theorem}[section]
    \newtheorem{theorem}{Theorem}[section]
    \newtheorem{definition}{Definition}[section] 
    \newtheorem{proposition}{Proposition}[section] 
    \newtheorem{lemma}{Lemma}[section]
    \newtheorem{corollary}{Corollary}[lemma] 
    \newtheorem{remark}{Remark}[section]
    \newtheorem{example}{Example}[section]
\newcommand{\Nat}{\mathbb{N}}
\newcommand{\card}[1]{\left\|#1\right\|}
\newcommand{\dom}[1]{\mathit{dom}(#1)}
\newcommand{\tuple}[1]{\langle#1\rangle}
\definecolor{darkgreen}{rgb}{0,0.5,0}
\definecolor{darkblue}{rgb}{0,0,0.8}
\definecolor{darkred}{rgb}{0.9,0,0}
\begin{document}

\mainmatter

\title{Nominal Regular Expressions \\ for Languages over Infinite Alphabets}
\subtitle{Extended Abstract}

\author{Alexander Kurz\inst{1} \and Tomoyuki Suzuki\inst{2} \and Emilio Tuosto\inst{1}}

\authorrunning{A.~Kurz, T.~Suzuki and E.~Tuosto}

\institute{Department of Computer Science, University of Leicester, UK\\
\and Institute of Computer Science, Academy of Sciences of the Czech Republic,
Czech Republic
}

\toctitle{Lecture Notes in Computer Science}

\maketitle

\begin{abstract}
  We propose regular expressions to abstractly model and
  study properties of resource-aware computations.
  Inspired by nominal techniques -- as those popular in process calculi --
  we extend classical regular expressions with names (to model computational resources) and suitable operators (for allocation,
  deallocation, scoping of, and freshness conditions on resources).
  We discuss classes of such nominal regular expressions, show
  how such expressions have natural interpretations in terms of
  languages over infinite alphabets, and give Kleene theorems to
  characterise their formal languages in terms of nominal automata.
\end{abstract}

\newcommand{\nex}[1]{\mathit{#1}}
\newcommand{\pexpleft}[1]{\langle_{\nex{#1}}}
\newcommand{\pexpright}[2]{\rangle_\nex{#1}^\nex{#2}}
\newcommand{\pexp}[3]{{\pexpleft{#1}\nex{#2}\pexpright{#1}{#3}}}
\newcommand{\npexp}[2]{{\pexpleft{#1}\nex{#2}\pexpright{#1}{}}}
\newcommand{\pclose}[1]{\circlearrowleft_{#1}}
\newcommand{\etw}[3]{#1\ \ddagger\ #2 \ \ddagger\ #3}
\newcommand{\length}[1]{\mathit{lth(#1)}}
\newcommand{\lstlength}[1]{\mathit{lstlth(#1)}}
\newcommand{\lstnames}[1]{\underline{#1}}
\newcommand{\cplus}{\hat+}
\newcommand{\ccirc}{\hat\circ}
\newcommand{\cks}{\hat\ast}
\newcommand{\cdmd}{\hat\Diamond}
\newcommand{\lplus}{\check+}
\newcommand{\lcirc}{\check\circ}
\newcommand{\lks}{\check\ast}
\newcommand{\ldmd}{\check\Diamond}
\newcommand{\llhd}{\mathaccent\lhd{\lhd}}
\newcommand{\langof}{\mathbf{L}}
\newcommand{\namesexp}{\mathsf{N}}
\newcommand{\lstofn}[2]{{#1}^{\left({#2}\right)}}
\newcommand{\gft}[1]{\underline{#1}}
\newcommand{\alang}[1]{\mathcal{L}_{\textit{#1}}}
\newcommand{\ltw}{\alang{two}}
\newcommand{\lpd}{\alang{all}}
\newcommand{\lfst}{\alang{fst}}
\newcommand{\llst}{\alang{lst}}
\newcommand{\ltze}{\alang{tze}}
\newcommand{\ltwl}{\alang{2lst}}
\newcommand{\lwlst}{\alang{wlst}}
\newcommand{\simples}{\alang{ses}}
\newcommand{\simplet}{\alang{onet}}
\newcommand{\moret}{\alang{ths}}
\newcommand{\cof}[1]{\underline{#1}}
\newcommand{\namesofw}[1]{|{#1}|}
\newcommand{\ihsv}[2]{{#2}\natural{#1}}
\newcommand{\hcv}[1]{\mathit{cv}\left({#1}\right)}
\newcommand{\hstry}[1]{{#1}^\text{\mbox{\ding{173}}}}
\newcommand{\hsofcvn}[2]{{#1}^{\left({#2}\right)}}
\newcommand{\fne}{\mathsf{fne}}
\newcommand{\reg}[1]{\mathsf{reg}\left({#1}\right)}
\newcommand{\app}[2]{#2 + [\nex{#1}]}
\newcommand{\ilist}[2]{\left(#1 \right)_{#2}}
\newcommand{\trx}[2]{\left(\nex{#1}\ \nex{#2}\right)}
\newcommand{\lfresh}{\,\#\,}
\newcommand{\gfresh}{\,\underline{\#}\,}
\newcommand{\lgfresh}{\mbox{\textbf{\#}$^\partial$}}
\newcommand{\nec}[1]{\mathbb{C}}
\newcommand{\seqd}[1]{\textsf{#1}}
\newcommand{\start}{\seqd{\footnotesize{start}}}
\newcommand{\aR}{\seqd{A}}
\newcommand{\bR}{\seqd{B}}
\newcommand{\select}{\seqd{select}}
\newcommand{\selection}{\seqd{selection}}
\newcommand{\request}{\seqd{req}}
\newcommand{\result}{\seqd{res}}
\renewcommand{\arraystretch}{1.5}
\newcommand{\lalloc}[1]{{\lfloor #1 \rfloor}}
\newcommand{\galloc}[1]{{\underline{\lfloor #1 \rfloor}}}

\section{Introduction}\label{sec:intro}

  We equip regular expressions with different types of name binders
  in order to define a theoretical framework to model and study
  computations involving resource-handling.
  In particular, we are interested in computations where resources
  can be freshly generated, used, and then deallocated.
  We use names to abstract away from the actual nature of resources;
  in fact, we adopt a very general notion of computational resources
  that encompass e.g., memory cells, communication ports,
  cryptographic keys, threads' identifiers etc.
  This allows us to use an infinite set $\names$ of
  \emph{names} to denote resources while binders and freshness
  conditions formalise the life-cycle of resources.
  Freshness conditions are taken from the theory of \emph{nominal sets} where
  $n \lfresh X$ states that $n \in \names$ does not appear (free) in a
  structure $X$, which can be a set of names or, more generally, a term
  built from names and set-theoretic constructions
  \cite{GabbayP99}.
  To this end, $\names$ is equipped with the
  action of the finitely generated permutations, which then extends to
  words and languages.
  Moreover, in the spirit of nominal sets, all languages of interest
  to us will be closed under the action of permutations.

  Together with the use of classical operators of regular languages we
  then define languages over infinite alphabets including $\names$.
  As we will see, there are different ways of extending regular
  expressions with binders or freshness conditions.
  We consider some natural definitions of nominal regular expressions
  and give Kleene theorems to characterise their languages in terms
  of automata.

Besides having interesting theoretical aspects, automata and languages
over infinite alphabets can also be adopted to specify and verify
properties of systems. This is very much in the spirit of HD-automata
which were invented to check equivalence of $\pi$-calculus processes.
We use a scenario based on distributed choreographies (as those envisaged by
W3C~\cite{w3c:cho}) and show how to specify correct
executions of realisations of choreographies, which can be described as message
sequence charts. The figure below
\begin{wrapfigure}[8]{c}[100pt]{6cm}\vspace{-.7cm}
\resizebox{2.3cm}{!}{
  \begin{sequencediagram}
    \newthread[yellow]{ra}{\aR}
    \newinst[1]{rb}{\bR}

    \mess{ra}{\start}{rb}{}
    \mess{rb}{\select}{ra}
    \mess{ra}{\selection}{rb}
    \begin{sdblock}[green]{Loop}{}
      \mess{ra}{\request}{rb}
      \mess{rb}{\result}{ra}
    \end{sdblock}
  \end{sequencediagram}
}
\end{wrapfigure}
\noindent
describes a protocol between two distributed components
\aR\ and \bR.
After \start ing the protocol, \aR\ waits for a list of services offered by \bR\
to \select\ from (for simplicity, data is not represented).
Upon request, \bR\ replies to \aR\ with a list of options to \select\ from.
Then \aR\ makes her \selection\ and loops to send a number of requests (with a
\request\ message) to each of which \bR\ replies with a result (with the
\result\ message).
We describe a few possible realisations of the above choreography.

As a first implementation, think of \aR\ and \bR\ as repeatedly executing the
protocol.
This can be conveniently captured using session types~\cite{hvk98}, where each
run of the protocol is uniquely identified with \emph{session names}.
Languages over infinite alphabets can suitably specify such runs; for instance,
consider
\begin{equation*}
  \simples = \big\{ a\, b\, r_0 \cdots r_k \st \forall k \in \Nat, \forall
  \inleq 0 {i \not= j} k. r_i \not= r_j \big\}
\end{equation*}
where $a$ and $b$ are two distinct letters representing the two components
executing \aR\ and \bR\ (and $\Nat$ is the set of natural numbers).
A word $a\, b\, r_0 \cdots r_k \in \simples$ corresponds to a trace where $a$ and
$b$ engage in $k$ runs of the protocol and $r_i$ identifies the $i$-th run.

Another suitable implementation would be one where \bR\ is multi-threaded and
for instance activates a new thread for each request in the loop.
A simplification usually adopted in session-based frameworks is that a
thread serving a request cannot be involved in other requests.
Assuming that in each loop \aR\ makes two requests:
\begin{eqnarray*}
  \simplet & = & \big\{ a \, b \, r_0 \, p_0 \, p'_0 \cdots r_k \, p_k \, p'_k
  \st \forall k \in \Nat, \forall \inleq 0 i k. p_i \lfresh \{ p'_i, r_i\} \land
  p'_i \lfresh \{ p_i, r_i \}
  \\ \nonumber & &
  \qquad \qquad \qquad \qquad \qquad \qquad \land r_i \lfresh
  \{r_0,\ldots,r_{i-1}, p_0,\ldots,p_{i-1}, p'_0,\ldots,p'_{i-1}\}
  \big\}
\end{eqnarray*}
where $p_i$ and $p'_i$ are the names of the two processes that serve the first
and the second request from \aR\ in the $i$-th run.
Note that in $\simplet$, $p_i$ is not required to be distinct
from $p_j$ (or from $p'_j$).

In yet another realisation, the threads of $\bR$ would have to activate other
threads to serve the requests from $\aR$.
(In session-based frameworks this is known as \emph{delegation}.)
As a simplified model of  these traces, let
\begin{eqnarray*}
  \alang{thr}(r_i) & = & \bigcup_{h \in \Nat}\big\{ v_0d \cdots v_hd \st \forall
   \inleq 0 j h, \exists p \not= p'\in \names\setminus\{r_i\}. v_j \in
   \{p,p'\}^\ast \big\}
\end{eqnarray*}
(where $d$ marks when a thread wants to delegate its computation) and consider:
\begin{eqnarray*}
  \moret & = & \bigcup_{k \in \Nat}\big\{ ab r_0 w_0 r_0 \cdots r_k w_k r_k \st
  \forall \inleq 0 i k. w_i \in \alang{thr}(r_i) \land r_i \lfresh r_0w_0 \cdots
  r_{i-1}w_{i-1} \big\}
\end{eqnarray*}

  An original contribution of this paper is the introduction of 
  \emph{relative global freshness}, a notion of freshness that enables
  us to control how to \emph{forget} (i.e. deallocate) names.
    For example, we will see that the languages $\simplet$
  and $\moret$ can be accepted by automata using
  relative global freshness.
  In fact, crucially, the freshness condition on the names for threads
  allows names to be re-used once the run is finished.
  This is possible due to the peculiar ability of relative global
  freshness to ``forget'' names.

Related to this, we point out that relative global freshness is
different from global freshness as defined
in~\cite{Tzevelekos11}. Indeed, the classes of languages we consider
are all closed under concatenation, in contrast to
\cite{Tzevelekos11}.


\section{Nominal regular expressions}
We fix a finite set $\letters$ of `letters'  and a countably infinite set $\names$ of `names'
 and consider languages over infinite alphabets as sets
of finite words over $\letters \cup \names$.

We define \emph{nominal regular expressions} (NREs, for short) by
extending classical regular expressions with names $n \in \names$ and
name binders.
We use different types of angled brackets $\npexp {} \_$ (decorated
with sub- and/or super-scripts) to denote binders.
The brackets identify the scope of the binder and are indexed by the
name they bind.
The interpretation of NREs is defined formally in \secref{sec:exp2lang}, here
we discuss the basic ideas.

\medskip\noindent\textbf{Basic nominal regular expressions} (b-NREs) are 
defined by 
\begin{equation}
 \label{eq:bnre}
  \ne \ ::= \ \nex{1} \sep \nex{0} \sep n \sep s \sep \ne + \ne \sep
  \ne \circ \ne \sep \ne^\ast \sep \npexp{n}{\ne}
\end{equation}
where $\nex{1}$ and $\nex{0}$ are constants to denote the language
consisting of the empty word $\emptystr$ only and the empty language,
$n$ and $s$ range over $\names$ and $\letters$, resp., the operators
$+$, $\circ$, and $\_^\ast$ are familiar from regular expressions, and
$\npexp{n}{\ne}$ is our notation for name binding: $\ne$ is the scope
of the binder and the occurrences of $n$ in $\ne$ are bound.
For example, in $n\,\npexp{n}{n}\,n$, the first and the last
occurrences of $n$ are free, whilst the occurrence in the bracket is
bound and is interpreted as a locally fresh name, that is a name
distinct from the occurrences of $n$ outside the scope of the binder.
So the language of $n \, \npexp{n}{n} \, n$ is
\[
 \big\{n m n \in \names^3 \st m \in \names \setminus \{n\} \big\}
\]
The (de-)allocation mechanism featured by b-NREs is very simple: a
fresh name is allocated when entering the scope of a binder and
deallocated when leaving.
The freshness conditions on the allocated name require that it is
distinct from the other currently allocated names.
The next class of NREs has a more sophisticated deallocation mechanism.

\medskip\noindent\textbf{NREs with permutations} (p-NREs)
extend b-NREs by permutation actions:
\begin{equation}
 \label{eq:pnre}
  \ne ::= \nex{1} \sep \nex{0} \sep n \sep s \sep \ne + \ne \sep \ne \circ \ne
  \sep \ne^\ast \sep \pexp{n}{\ne}{m}
\end{equation}
Novel with respect to to b-NREs is the notation $\pexpright{n}{m}$ which evokes the name transposition
$\left(m\ n\right)$ to be applied when leaving the scope of the
binder.
In other words, when $\pexpright{n}{m}$ closes the scope of $n$, the name $m$ is deallocated while we leak $n$
by replacing all free occurrences of $m$  after
$\pexpright{n}{m}$ with $n$.
For example, in the p-NRE $\pexp{m}{m \pexp{n}{n}{m}m}{m}$, the 
occurrences of $m$ after $\pexpright{n}{m}$ actually mean $n$ since
$m$ is replaced by $n$ when the scope of $n$ is closed. 
For example, $\pexp{m}{\ (\pexp{n}{\,n}{m})^{\,\ast\,}\ }{m}$ is the language of all words where any two successive names must be different, see \cite{KurzST12tcs} for more details.

\begin{remark}
  We consider b-NREs as special p-NREs, identifying $\npexp n \ne$  with $\pexp{n}{\ne}{n}$.
\end{remark}

The deallocation device of p-NREs requires some care; intuitively,
a name $m$ can be deallocated only \emph{after} it has been allocated.
This condition is formalised by requiring that in a p-NRE $\ne$ any
subexpression $\pexp n {\ne'} m$ with $n \neq m$ occurs within the scope
of a binder $\pexp m {\_} {m'}$.
We will consider only p-NREs satisfying this condition.

\medskip\noindent\textbf{NREs with underlines} (u-NREs) extend b-NREs by \emph{relative global freshness}:
\[
 \ne ::= \nex{1} \sep \nex{0} \sep n \sep \cof{n} \sep s \sep \ne + \ne \sep \ne
 \circ \ne \sep \ne^\ast \sep \npexp{n}{\ne}
\]
Novel with respect to to b-NREs is the notation $\cof n$, which denotes relative global freshness. It requires that the name represented by $\cof n$ is
  distinct from any name allocated \emph{after} $n$.
  For instance, the language of the u-NRE $\npexp n { n \npexp m {mn} \cof n
    \npexp m m}$ is
  \[
  \big\{ n m n n' m' \in \names^5 \st n \neq m \text{ and } n' \neq m, n
  \text{ and } m' \neq n'\big\}
  \]
  where $n'$ corresponds to the name denoted by $\cof n$ in the u-NRE
  has to be different from both $n$ and $m$ even if the latter has
  been deallocated.
  Note the difference with the freshness condition on $m'$
  (corresponding to the second binder on $m$ in the u-NRE) which is
  required to be different only from $n'$.  

\medskip\noindent\textbf{NREs with underlines and permutations}
(up-NREs) combine p-NREs and u-NREs:
\[
 \ne ::= \nex{1} \sep \nex{0} \sep n \sep \cof{n} \sep s \sep \ne + \ne \sep \ne
 \circ \ne \sep \ne^\ast \sep \pexp{n}{\ne}{m}
\]
where the conditions on p-NREs also hold for up-NREs.

\medskip\noindent\textbf{Examples} of NREs for the languages of
\secref{sec:intro} are
 \begin{table}[htbp]
  \begin{minipage}{\linewidth}\centering
   \begin{tabular}{|c|c|c|}\hline
    Language & Corresponding NRE & Type of NRE \\\hline
    $\simples$ & $ab\npexp{n}{\cof{n}^\ast}$ & u-NRE \\\hline
    $\simplet$ &
	$ab\npexp{n}{\left(\cof{n}\npexp{m}{m\npexp{l}{l}}\right)^\ast}$ & u-NRE
	    \\\hline
    $\moret$ &
	$ab\npexp{n}{\left(\cof{n}\npexp{m}{\left(\pexp{l}{\left(m+l\right)^\ast
	d}{m} + \npexp{l}{\left(m+l\right)^\ast d}\right)^\ast}\right)^\ast}$ &
	    up-NRE \\\hline
   \end{tabular}
  \end{minipage}
 \end{table}

\noindent This correspondences are obtained by applying the method presented in
\secref{sec:exp2lang}.
The details of a more complex example are given in Appendix \ref{sec:complex}.


\section{Chronicle deallocating automata}
\label{sec:asharp}
The class of \emph{chronicle deallocating automata}
characterises languages over $\letters \cup \names$.
\exref{ex:automata:simple} below gives an intuition of our automata
which are defined in \defref{def:cda}.
\begin{example}
 \label{ex:automata:simple}\it
 The language $\simples$ in \secref{sec:intro} is accepted by the automaton in
 the figure below

 \noindent
 \begin{minipage}{\linewidth}
  \begin{wrapfigure}[5]{r}[40pt]{.4\linewidth}\vspace{-.8cm}
   \resizebox{3.5cm}{!}{
   \includegraphics{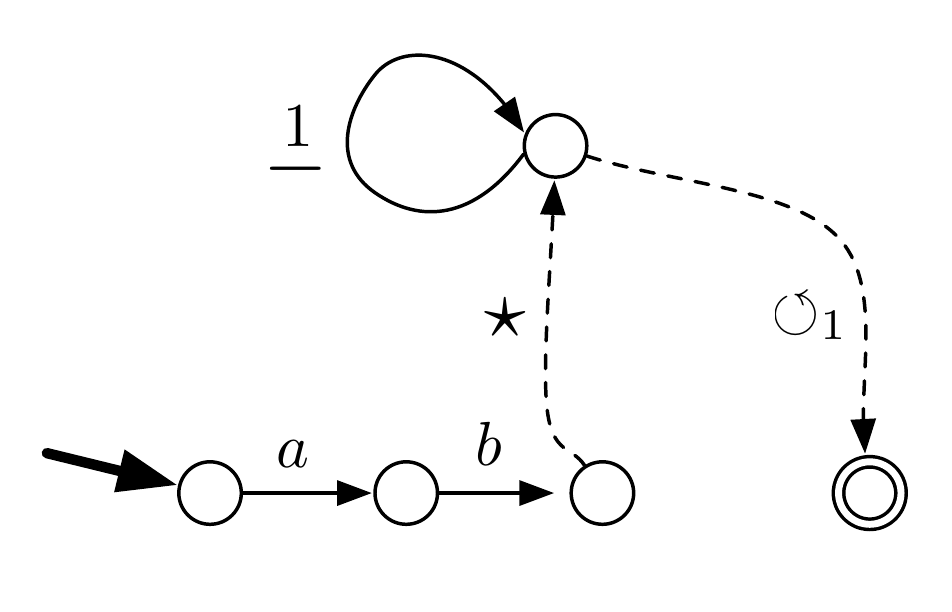}
   }
  \end{wrapfigure}
  that first consumes the two letters $a$ and $b$, then allocates a
  locally fresh name in register $1$ with the $\star$-transition
  (without consuming any letter).
  By means of repeated $\cof{1}$-transitions, it can consume
  $r_0,\ldots,r_k$ guaranteeing their freshness with respect to the
  chronicle for the first register.
  Finally, the name in the first register is deallocated and the
  final state is reached.
 \end{minipage}
\end{example}

Given a natural number $k$, in the following, $\reg k \mmdef
\{1,\ldots,k\}$ denotes a set of $k$ registers.
The empty word is denoted by $\emptystr$.
\begin{definition}
 \label{def:cda}
 A \emph{chronicle deallocating automaton (over
   $\letters$)} is a finite-state automaton $\tuple{Q,q_0,F,\trans}$
 with states $Q$, initial state $q_0 \in Q$, final states $F \subseteq
 Q$ and transition relation $\trans$ such that
 \begin{itemize}
  \item $Q$ is equipped with a map $\card{\_} \colon Q \to \Nat$ such that
	$\card q = 0$ for all $q \in F \cup \{q_0\}$;
  \item writing $\reg{q}$ for $\reg{\card{q}}$, each $q \in Q$ has a
    set of possible labels
    \[
    \qlab{q} \ \mmdef\  \letters \ \cup \ \reg{q} \ \cup \ \{\star\} \ \cup
    \ \{\gft{i} \st i \in \reg{q}\} \cup \{\pclose{i} \st i \in \reg{q}\}
    \]
  \item for $q \in Q$ and $\alpha \in \qlab{q} \cup \{\emptystr\}$,
    the set $\trans(q,\alpha) \subseteq Q$ contains the
    \emph{$\alpha$-successor states} of $q$ satisfying the conditions
    below for all $q' \in \trans(q,\alpha)$:
    \begin{eqnarray*}
      \card{q'} & = & \card q + 1, \qquad \text{if } \alpha = \star
      \\
      \card{q'} & = & \card q - 1, \qquad \text{if } \alpha = \pclose{i}
      \text{ for } i \in \reg{q}
      \\
      \card{q'} & = & \card q, \qquad\quad\ \, \text{if } \alpha =
      \emptystr \text{ or } \alpha \in \letters
      \cup \reg{q} \cup \{\gft{i} \st i \in \reg{q}\}
    \end{eqnarray*}
 \end{itemize}
 We let \cda\ denote the class of chronicle deallocating automata.
 A \cda\ is (i) \emph{deterministic} if, for each $q \in Q$,
 $\weight{\trans(q,\emptystr)} = 0$ and $\weight{\trans(q,\alpha)} = 1$, if
 $\alpha \in \qlab{q}$, (ii) a \emph{chronicle} automaton
 (CA$^\sharp$) if for all $q \in Q$ there is no $\pclose{i}$-transition  for
  $i \in \{1,\ldots,\card{q}-1\}$, (iii) a \emph{deallocating} automaton
 (DA$^\sharp$) if for all $q \in Q$ there is no $\gft{i}$-transition
 for $i \in \{1,\ldots,\card{q}-1\}$, and (iv) an automaton \emph{with
 freshness} (A$^\sharp$) if it is both CA$^\sharp$ and DA$^\sharp$.
\end{definition}

Configurations of \cda\ are defined in terms of \emph{chronicles} that 
keep track of names assigned to registers.
The chronicle $s_i$ of a register $i$ is a non-empty word on $\names$ together
with one of the names in $s_i$ that pinpoints the \emph{current value} of $i$,
denoted as $\hcv{s_i}$.
Let $s$ and $t$ be chronicles.
The \emph{extension} $s@t$ of $s$ with $t$ is the concatenation of the
words $s$ and $t$; while $s \setminus t$ is the word obtained by deleting from
$s$ the names in $t$.

We write $L = [e_1,\ldots,e_k]$ for a list of elements $e_1,\ldots,e_k$ (with
$[]$ being the empty list) and define $\el L i = e_i$.
An \emph{extant chronicle} is a (possibly empty) finite list $E =
[s_1,\ldots,s_k]$ of chronicles; we define $\hcv{E} =
[\hcv{s_1},\ldots,\hcv{s_k}]$,
which is always a list of pairwise
distinct names.
We extend $@$ and $\setminus$ to extant chronicles
element-wise: $E@s = [s_1@s,\ldots,s_k@s]$ and $E \setminus t
= [s_1\setminus t,\ldots,s_k\setminus t]$.
We may identify a list
with the underlying set of its elements (e.g.~writing $e \in L$ when there is
$i$ such that $e = \el L i$ and  $n \in s$ when
$n$ occurs in the chronicle $s$).
Also, for an extant chronicle $E$, $\el{\hcv{E}}i$ and $\el E i$
indicate the current value and the chronicle of a register $i$,
respectively.
Given two extant chronicles $E$ and $E'$, we let $E + E'$ be the list
obtained by appending $E'$ to $E$.

\begin{definition}
 \label{def:configuration}
 A \emph{configuration} of a \cda\ $\mathcal{H}=\tuple{Q,q_0,F,\trans}$ is a
 triple $\tuple{q,w,E}$ where $q \in Q$, $w$ is a word, and $E$ is an extant
 chronicle.
 A configuration $\tuple{q,w,E}$ is \emph{initial} if $q = q_0$ and $E = []$,
 and it is \emph{accepting} if $q \in F$, $w = \emptystr$ and $E = []$.
 Given two configurations $t = \tuple{q,w,E}$ and $t' = \tuple{q',w',E'}$,
 $\mathcal{H}$ \emph{moves from $t$ to $t'$} (written as $\step{\mathcal
 H}{t}{t'}$) if there is $\alpha \in \qlab{q}\cup\{\emptystr\}$ such that $q'
 \in \trans(q,\alpha)$ and
 \[
 \left\{
 \begin{array}{l@{\ }l}
  \alpha \in \reg{q}, & \text{if}\quad w = (\hcvel E \alpha)w' \quad \text{and}
   \quad E' = E
   \\
   \alpha \in \letters\cup\{\emptystr\}, & \text{if}\quad w = \alpha w'
   \quad\text{and}\quad E' = E
   \\
   \alpha = \star, & \text{if}\quad w = w', \quad n \in \names \setminus
   \hcv{E}, \quad E' = (E@n) + [n], \quad
   \\
  & \quad
   \hcvel{E'}{\card{q'}} = n \quad \text{and} \quad \forall i \in
   \reg{q}. \hcvel{E'} i = \hcvel E i
   \\
   \alpha = \gft{i}, & \text{if}\quad w = n w', \quad n \in \names \setminus
   \left(\hcv{E} \cup \el{E}{i}\right), \quad E' = E@n, \quad
   \\
  & \quad
   \hcvel{E'} i = n \quad \text{and} \quad \forall j \in
   \reg{q'}\setminus\{i\}.\hcvel{E'} j = \hcvel{E} j
   \\
   \alpha = \pclose{i}, & \text{if}\quad w = w', \quad E' + \el{E} {\reg{q}} =
   E, \quad \hcvel{E'} i = \hcvel{E} {\card{q}}
   \\
  &
   \qquad \qquad \qquad \quad \quad\text{and} \quad \forall j \in
   \reg{q'}\setminus\{i\}. \hcvel{E'} j = \hcvel{E} j
 \end{array}
 \right.
    \]
 The set $\rec_{\mathcal H}(t)$ of states reached by $\mathcal H$ from the
 configuration $t$ is given by
 \[
 \rec_{\mathcal H}(t) \mmdef
 \begin{cases}
  \{q\}
  & \text{if } t = \conf{q, \emptystr,E}
  \\
  \bigcup_{\step{\mathcal H}{t}{t'}}{\rec_{\mathcal H}(t')}
  & \text{otherwise}
 \end{cases}
 \]
 A \emph{run of $\mathcal H$ on a word $w$} is a sequence of moves of $\mathcal
 H$ from $\tuple{q_0, w,[]}$.
\end{definition}

\comment{
  \begin{wrapfigure}[10]{l}{5cm}\vspace*{-.5cm}
   \begin{minipage}{\linewidth}\centering
    \includegraphics[scale=.4]{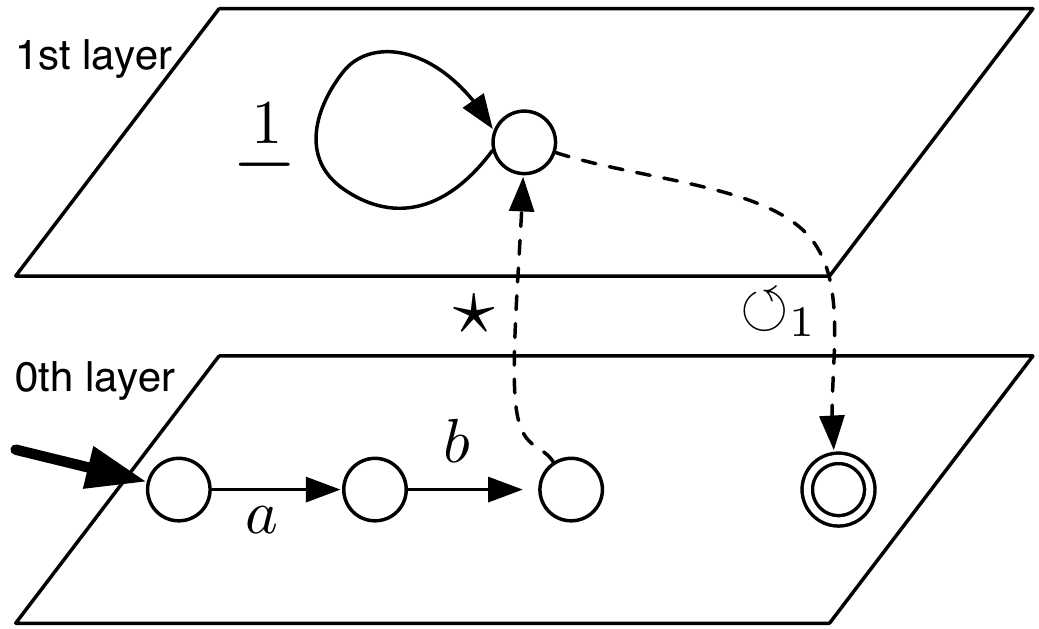}
   \end{minipage}
   \caption{CA$^\sharp$ for the language $\simples$.\label{fig:ca4lses}}
  \end{wrapfigure}
  }

  Intuitively, $\star$-transitions allocate a new register as the
  highest register of the target state (given $N \subseteq \names$ and $n \in \names$,
  we write $\nex n \lfresh N$,
  read `$\nex n$ is fresh for $N$', when $\nex n \not\in N$).
Such a register is initially assigned with a \emph{locally fresh} name $n$,
that is a name $n$ fresh for the current content of the registers;
accordingly the chronicle of the new register is created and initialised
(together with the updates of the other chronicles) to record the use of $n$.
Relative global freshness is implemented by $\gft{i}$-transitions that (as
for local freshness)  assign a name $n$ fresh with respect to the current
values of registers and (unlike in local freshness) also fresh with respect to
the $i$-th register's chronicle; contextually, $n$ is assigned to the $i$-th
register and all the other chronicles are updated to record the use of the new
name $n$.
Transitions labelled by $\pclose{i}$ permute the content, but not the 
chronicle, of the $i$-th register with the highest register of the current
state $q$ and dispose the chronicle of the $\reg q$-th register of $q$.

\begin{definition}
 Given a word $w$ on $\letters \cup \names$, a \cda\ $\mathcal{H}$
 \emph{accepts} (or \emph{recognises}) $w$ when $F \cap
 \rec_{\mathcal{H}}(\tuple{q_0,w,[]}) \neq \emptyset$. The set
 $\mathcal{L}_\mathcal{H}$ of words accepted by $\mathcal{H}$ is \emph{the
 language of $\mathcal{H}$}.
\end{definition}

\begin{example}
 \label{ex:automata:onethread}\it
 The automaton for the language $\simplet$ in \secref{sec:intro} is
 more complex than the one in \exref{ex:automata:simple}.
 Initially the automaton behaves as the one in \exref{ex:automata:simple}.
 After the $\cof{1}$-transition, it allocates the second and the third
 registers to consume   $p_0$ and $p'_0$.
  Note that $p_0$ and $p'_0$ are also recorded into the chronicle of the first
register;

\noindent
 \begin{minipage}{\linewidth}
  \begin{wrapfigure}[10]{r}[20pt]{.4\linewidth}\vspace*{-1.2cm}
   \resizebox{4cm}{!}{
   \includegraphics{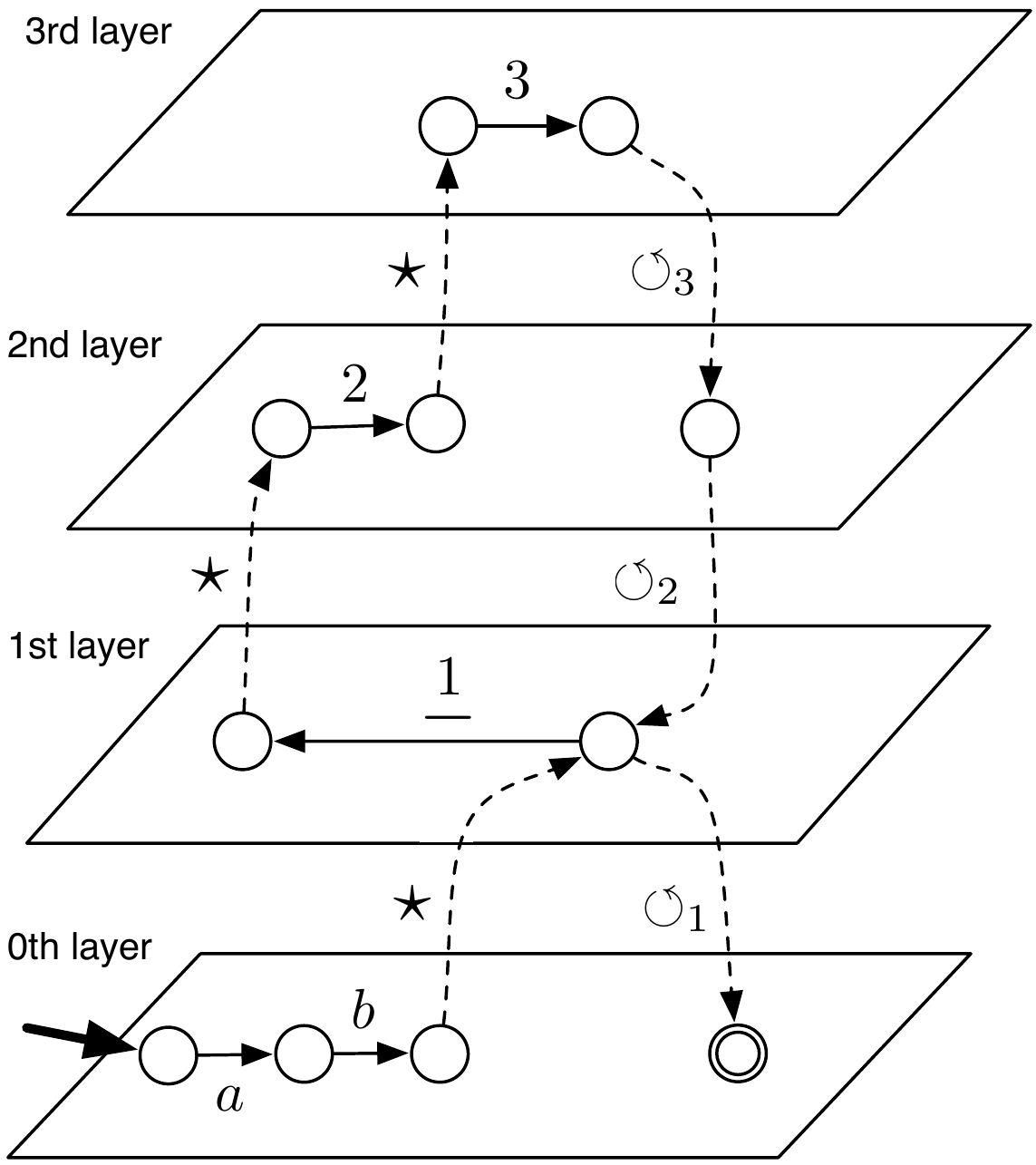}
   }
  \end{wrapfigure}
  hence, when the automaton comes back to the first layer by the
  $\pclose{3}$ and $\pclose{2}$ transitions, the next consumed name
  $r_1$ is guaranteed to be globally fresh (namely $r_1$ is different
  from $r_0$, $p_0$ and $p'_0$).
  On the other
 hand, the other registers cannot remember anything after they
  have been deallocated.
  So when the automaton goes up again to the second and the third layers by
  $\star$-transitions, it has two registers for \emph{locally} fresh names for
  $p_1$ and $p'_1$ with respect to the current session name $r_1$.  And, for
  instance, one of them can be $r_0$.
  
 \end{minipage}
\end{example}

 \begin{example}
  \label{ex:automata:morethreads}\it
  The automaton for the language $\moret$ in \secref{sec:intro} is the most
  complex and is given below.
  The automaton allocates a new register to consume

  \noindent
  \begin{minipage}{\linewidth}
   \begin{wrapfigure}[9]{r}[100pt]{.6\linewidth}\vspace*{-1.2cm}
    \resizebox{3.5cm}{!}{
    \includegraphics{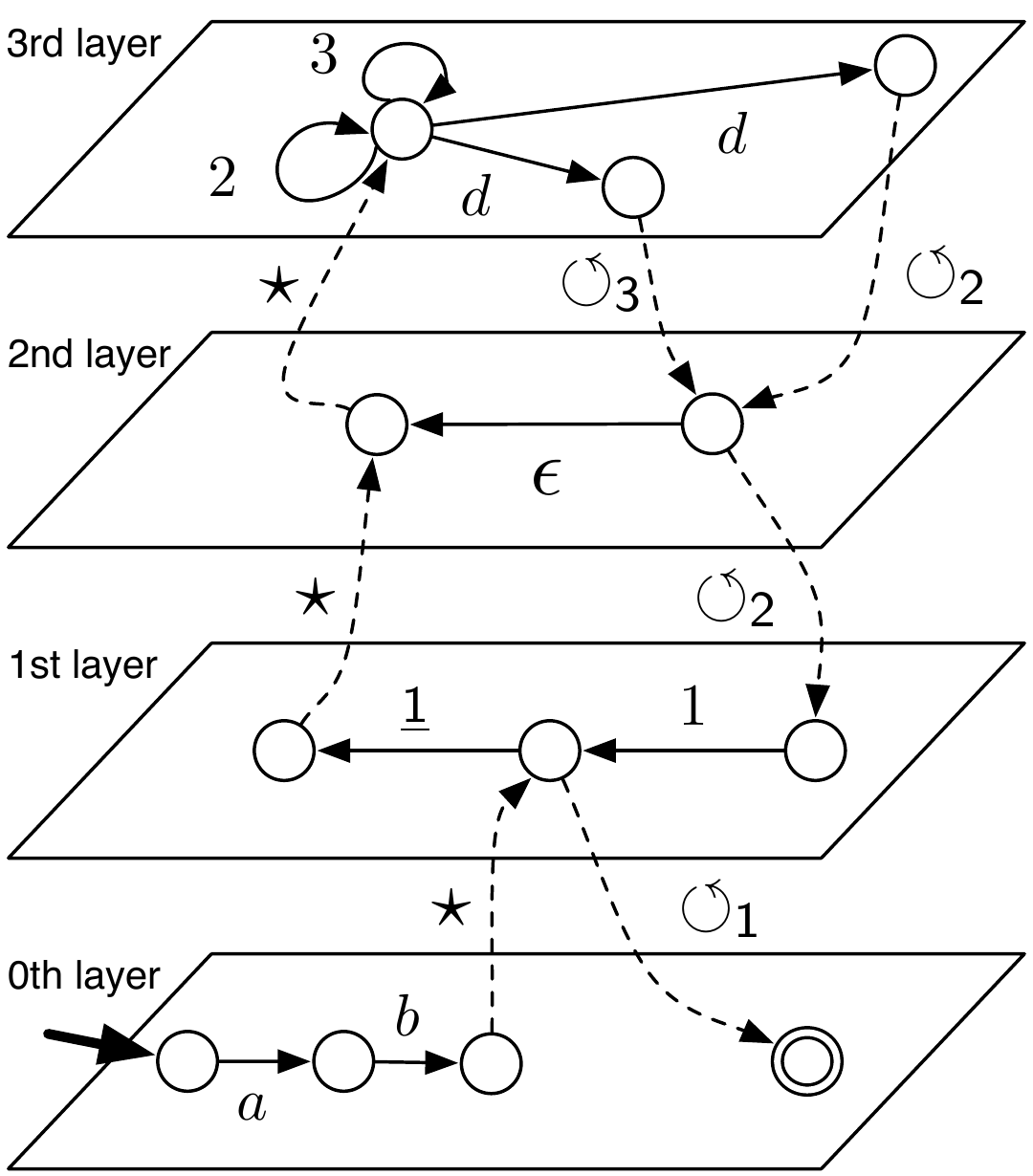}
    }
   \end{wrapfigure}
   a name $r_i$ for each run
   $i$ by means of the $\cof{1}$-transition on the first layer.
   From there the automaton has to accept the $w_i \in \alang{thr}(r_i)$.
   Similarly to the previous cases, two $\star$-transitions allocates the
   registers to accept two locally fresh names $p$ and $p'$.
   Now the automaton can loop in the left-most state of the third layer
   consuming the occurrences of $p$ and $p'$.  When $d$ appears on the input,
   the automaton non-deterministically decides which process to deallocate with
   transitions $\pclose{2}$ or $\pclose{3}$.
   Now the $\emptystr$-transition lets the automaton accept a new
   thread or deallocate the other register and start a new session
   after consuming the name of the current run stored in the first
   register.
   After delegating two threads, we finish the session $r_i$ by the
   $1$-transition on the first layer.
  \end{minipage}
 \end{example}

 By adding to the automaton of \exref{ex:automata:morethreads} an
 $\emptystr$-transition from the final state to the initial state, we
 can repeat the whole process by reusing also the names of previous
 runs (since the first chronicle is released when the automaton moves
 back to the lowest layer).


\newcommand{\phw}[2]{[\hspace{-2.5pt}(#1 \mid #2 )\hspace{-2.5pt}]}
\newcommand{\phl}{\davidsstar}

\section{Interpreting NREs}
\label{sec:exp2lang}

We formally define the language associated with a nominal regular expression.
Technically, we adapt the \emph{context} and the \emph{language calculus}~\cite{KurzST12tcs} for the new classes of NREs.

\newcommand*\circled[1]{\tikz[baseline=(char.base)]{
            \node[shape=circle,draw,inner sep=.5pt] (char) {#1};}}
\medskip\noindent\textbf{Schematic words.}
    To assign languages over infinite alphabets to NREs, it is natural
    to introduce the notion of \emph{schematic words}.
    We consider a countably infinite collection of \emph{placeholders $\phl_i$}
    A \emph{schematic word $\phw{\phl_1 \cdots \phl_k}{\phi}$}
    consists of a finite word of placeholders and a condition $\phi$ of
    the form
    \[
     \phi ::= \ \phl_i \not= \phl_j \st \phi , \phi \st \phi \lor \phi
    \]
    with the intention that $\phl_i \not= \phl_j$ means that $\phl_i$ and $\phl_j$ are not
    identical, `,' means `and' and $\vee$ means `or'.
    So, for example, a schematic word $\phw{\phl_1 \phl_2 \phl_3
    \phl_1}{\phl_1 \not= \phl_3}$ expresses the collection of words whose third
    letter is different from the first and the last letters (but the
    second letter can be any name): $\{abca \st a,b,c \in \names, a \not= c\}$.

    Languages over infinite alphabets recognised by ``nominal automata'' are
    typically closed under permutations, see e.g.~Proposition \ref{prop:alpha}.
    Schematic words describe such languages by means of inequations
    (freshness) of names.
    In Fig.~\ref{fig:languagecalculus}, we shall use schematic
    words which are extended to contain names from
    $\names$ (as well as placeholders).

\medskip\noindent\textbf{From maps to permutations.}
    Given a function $f$ with domain $\dom f$, the update $f_{\upd a b}$ has domain $\dom f
    \cup \{a\}$ with $f_{\upd a b}(a) = b$; $\bot$ is the empty map.
    We consider lists over $\names$ with no repeated elements (ranged over by
    $N$, $M$).
    Let $\length{N}$ be the length of $N$ and, for $\nex n \in \names$.
    The transposition of $\nex n$ and $\nex m$, denoted by $\trx m n$, is the
    bijection that swaps $\nex m$ and $\nex n$ and is the identity on any other
    names.
    Given two lists $N$ and $M$ of length $k$, let $N{\rhd}M$ be the map from
    $N$ to $M$ such that
    \[
    N{\rhd}M: \el N i \mapsto \el M i \qquad \text{for each } i \in
    \{1,\ldots,k\}
    \]
    which we extend it to a permutation 
    $\pi_{[N{\rhd}M]}$ on $\names$ that restricts to a bijection on $N\cup M$ and to the identity on $\names\setminus(N\cup M)$, see also
    \cite{KurzST12tcs}.
    In Fig.~\ref{fig:languagecalculus}, to transfer name (placeholder)
    information, we consider the above permutation for a list of current values
    $C = [n_1, \ldots, n_k]$ and an extant chronicle $E = [s_1, \ldots, s_k]$,
    with respect to the natural bijection: $n_i \mapsto \hcv{s_i}$ for each $i$
    (abusing notation, we write $\pi_{\left[{C}{\rhd}{E}\right]}$ for
    $\pi_{\left[{C}{\rhd}{\hcv{E}}\right]}$).
    Also, we may include some placeholders for $\pi_{\left[{C}{\rhd}{E}\right]}$.

\medskip\noindent\textbf{Permutations on expressions and extant chronicles.}
For an up-NRE $\ne$ and a bijection $\pi$ on $\names$ , the
permutation action of $\pi$ on $\ne$, denoted as $\pi\cdot\ne$, is
    \begin{enumerate}
     \item $\pi\cdot\nex{1} = \nex{1}$ \hfil $\pi\cdot\nex{0} =
       \nex{0}$ \hfil $\pi\cdot\nex{n} = \pi(\nex{n})$ \hfil $\pi
       \cdot \cof n = \cof{\pi(n)}$ \hfil $\pi\cdot\nex{s} =
       \nex{s}$ 
     \vspace{5pt}
     \item $\pi\cdot\left(\ne_1+\ne_2\right) = \left(\pi\cdot\ne_1\right) +
	   \left(\pi\cdot\ne_2\right)$
     \hfill $\pi\cdot\left(\ne_1\circ\ne_2\right) = \left(\pi\cdot\ne_1\right)
	   \circ \left(\pi\cdot\ne_2\right)$  \vspace{4pt}
     \item $\pi\cdot\left(\ne^\ast\right) = \left(\pi\cdot\ne\right)^\ast$
     \hfill $\pi\cdot\left(\pexp{n}{\ne}{m}\right) =
	   \pexp{\pi(n)}{\left(\pi\cdot\ne\right)}{\pi(m)}$
    \end{enumerate}
    The permutation action of $\pi$ on a chronicle $s_i= n_{i_1}\ldots n_{i_k}$,
    denoted as $\pi\cdot s_i$, is $t =
    \pi\left(n_{i_1}\right)\ldots\pi\left(n_{i_k}\right)$ with $\hcv t =
    \pi\cdot\hcv{s_i}$.
    Finally, the permutation action of $\pi$ on an extant chronicle $E = [s_1,
    \ldots, s_k]$ is $\pi \cdot E = [\pi\cdot s_1,\ldots,\pi\cdot s_k]$.
    Note that we may include placeholders in contexts in
    Fig.~\ref{fig:languagecalculus}.

    \label{subsec:langcomp}
    \renewcommand{\lfresh}{\,\mbox{\bf\#}\,}
    \renewcommand{\gfresh}{\,\mbox{\bf\underline{\#}}\,}
    \newcommand{\ctxc}{CTXC} \newcommand{\lngc}{LNGC}

    \medskip\noindent\textbf{Contextualised expressions} are triples
    $\etw{C}{\ne}{E}$ where $\ne$ is a nominal regular expression, $C$
    is a finite list of pairwise distinct (including indices) names
    and placeholders, i.e.~$C \in \left(\names \cup \{\phl, \phl_1,
      \ldots\}\right)^\ast$ (called \emph{pre-context}) and $E$ is an
    extant chronicle (\emph{post-context}).
To compute languages
from NREs, we may
    include placeholders in (extant) chronicles.
Placeholders appear in contexts in the language calculus
    Fig.~\ref{fig:languagecalculus} only to abstract some names.
    Intuitively, $C$ is the list of the names and placeholders ``used before''
    $\ne$ and $E$ is the extant chronicle ``established after'' $\ne$.
    More precisely, the post-context $E$ possesses two important data:
    for \ctxc\ and the construction of corresponding automata (on the
    inductive step for $\pexp{n}{\ne}{m}$), it tells ``which registers
    are permuted when brackets are closes'' and, for
    \lngc, it reserves numbers of relative-global fresh names for each
    register, which is necessary to consider ($\lcirc$) in
    Fig.~\ref{fig:languagecalculus}.
    It is useful to explicitly express the current values of (extant) chronicles
    and write $\ihsv{s}{n}$ for the chronicle $s$ with $\hcv{s} = n$ and
    $\left[\ihsv{s_1}{n_1},\ldots,\ihsv{s_k}{n_k}\right]$ for an extant
    chronicle $E = [s_1,\ldots,s_k]$ with $\hcv E = \left[n_1,\ldots,n_k\right]
    = \left[\hcv{s_1},\ldots,\hcv{s_k}\right]$.

\smallskip
\noindent\textbf{The context calculus \ctxc} is defined in \figref{fig:contextcalculus}, where we assume that $\star \# C$,
    \begin{figure}[!t]
     \begin{center}
      \AxiomC{$\etw{C}{\ne_1+\ne_2}{E}$}
      \RightLabel{($\cplus_1$)}
      \UnaryInfC{$\etw{C}{\ne_1}{E}$}
      \DisplayProof
      \hfil
      \AxiomC{$\etw{C}{\ne_1+\ne_2}{E}$}
      \RightLabel{($\cplus_2$)}
      \UnaryInfC{$\etw{C}{\ne_2}{E}$}
      \DisplayProof
      \\[7pt]
      \rootAtTop
      \AxiomC{$\etw{C}{\ne_1}{\nec{C}}$}
      \AxiomC{$\etw{C}{\ne_2}{E}$}
      \RightLabel{($\ccirc$)}
      \BinaryInfC{$\etw{C}{\ne_1\circ\ne_2}{E}$}
      \DisplayProof
      \hfil
      \AxiomC{$\etw{C}{\ne^\ast}{E}$}
      \RightLabel{($\cks$)}
      \UnaryInfC{$\etw{C}{\underbrace{\ne\circ\cdots\circ\ne}_{h \text{
      times}}}{E}$}
      \DisplayProof
      \\[5pt]
      \AxiomC{$\etw{C}{\pexp{n}{\ne}{m}}{E} \qquad m = n \qquad C' = C+[\star]
      \qquad E' = (E@\star) + [\star]$}
      \RightLabel{($\cdmd_{=}$)}
      \UnaryInfC{$\etw{C'}{\left(n\ \star\right)\cdot\ne}{\ihsv{E'}{\hcv{E'}}}$}
      \DisplayProof
      \\[7pt]
      \AxiomC{$\etw{C}{\pexp{n}{\ne}{m}}{E} \quad m \not= n \quad C' = C+
      [\star] \quad E' = (E@\star) + [\star m] \quad \hcv{E'} =
      \hcv{E}+[\star]$}
      \RightLabel{($\cdmd_{\not=}$)}
      \UnaryInfC{$\etw{C'}{\left(n\
      \star\right)\cdot\ne}{\ihsv{E'}{\left(\left(m\
      \star\right)\cdot\hcv{E'}\right)}}$}
      \DisplayProof
     \end{center}
     \caption{CTXC: Rules for contextualised expressions\label{fig:contextcalculus}}
    \end{figure}
    $C = [n_1,\ldots,n_k]$ is a list of pairwise distinct names, and $\nec{C} =
    [\ihsv{s_1}{n_1},\ldots,\ihsv{s_k}{n_k}]$ is the extant chronicle where for
    each $\inleq 1 i k$, $s_i = n_i \cdots n_k$.
    As in \cite{KurzST12tcs}, the rules in
    \figref{fig:contextcalculus} propagate pre- and post-contexts to
    all subexpressions of $\ne$ with a top-down visit of the abstract
    syntax tree of $\ne$.
    The rule $(\cks)$ unfolds the Kleene star an arbitrary but bound
    number of times $h$; later (c.f. \lngc\ in ~\figref{fig:languagecalculus})
    we will take the union of the languages computed for each unfolding.
    By rules ($\cdmd_=$) and ($\cdmd_{\neq}$) it is clear that in pre-contexts
    it is necessary just to record the names already used from the root to the
    current node of the tree.
    Instead, in the post-context it is necessary to keep track of the ``names
    created'' in the current subexpression for relative global freshness.
    We will see that this is crucial for computing the local freshness and the
    concatenation of the languages of NREs.
    Notice that in rule ($\cdmd_{\neq}$), for an extant chronicle
    $E=[s_1,\ldots,s_k]$, the notation $\ihsv E {\hcv E}$ abbreviates
    $[\ihsv{\hcv{s_1}}{s_1},\ldots,\ihsv{\hcv{s_k}}{s_k}]$.

    \begin{remark}
      For NREs without permutations, we do not need to consider the rule
      ($\cdmd_{\neq}$) in Fig.~\ref{fig:contextcalculus}. For NREs
      without underlines, only current values of registers matter.
    \end{remark}
    Note how ($\cdmd_{\neq}$) deals with permutations: the expression
    $\ne$ in the binder is contextualised by a local renaming of the
    (content corresponding to) $n$ with $\star$ which ``after'' $\ne$
    (namely in the post-context) is also replaced for $m$ in the
    current values, while $m$ is added to the chronicle corresponding
    to the new register.
    \begin{example}\it
     \label{ex:ctx}
     For the up-NRE $\npexp{n}{\cof{n}\npexp{m}{\pexp{l}{m}{m}}\cof{n}}$,
     contexts are computed as
     {\small
     \[
     \begin{minipage}{\linewidth}\centering
      \rootAtTop
      \AxiomC{$\etw{[a]}{\cof{a}}{[\ihsv{a}{a}]}$}
      \AxiomC{$\etw{[a, b, c]}{b}{[\ihsv{abc}{a},\ihsv{bc}{c},\ihsv{cb}{b}]}$}
      \RightLabel{($\cdmd_=$)}
      \UnaryInfC{$\etw{[a, b]}{\pexp{l}{b}{b}}{[\ihsv{ab}{a},\ihsv{b}{b}]}$}
      \RightLabel{($\cdmd_=$)}
      \UnaryInfC{$\etw{[a]}{\npexp{m}{\pexp{l}{m}{m}}}{[\ihsv{a}{a}]}$}
      \AxiomC{$\etw{[a]}{\cof{a}}{[\ihsv{a}{a}]}$}
      \RightLabel{($\ccirc$)}
      \BinaryInfC{$\etw{[a]}{\npexp{m}{\pexp{l}{m}{m}}\cof{a}}{[\ihsv{a}{a}]}$}
      \RightLabel{($\ccirc$)}
      \BinaryInfC{$\etw{[a]}{\cof{a}\npexp{m}{\pexp{l}{m}{m}}\cof{a}}{[\ihsv{a}{a}]}$}
      \RightLabel{($\cdmd_{\not=}$)}
      \UnaryInfC{$\etw{[]}{\npexp{n}{\cof{n}\npexp{m}{\pexp{l}{m}{m}}\cof{n}}}{[]}$}
      \DisplayProof
     \end{minipage}
     \]
     }
     Note that the up-NRE is quite similar to a simple trace of $\moret$ in \secref{sec:intro}.
    \end{example}

    \smallskip\noindent\textbf{The language calculus} (\lngc\ for short) is given in
    Fig.~\ref{fig:languagecalculus} (the new notations are explained below in
    the comment of the rules).
    \begin{figure}[!t]
     \begin{center}
      \AxiomC{$\etw{C}{\nex{1}}{E}$}
      \RightLabel{($\nex{1}$)}
      \UnaryInfC{$\etw{C}{\phw{\emptystr}{}}{E}$}
      \DisplayProof \hfil
      \AxiomC{$\etw{C}{\nex{0}}{E}$}
      \RightLabel{($\nex{0}$)}
      \UnaryInfC{$\etw{C}{\phw{\emptyset}{\bot}}{E}$}
      \DisplayProof
      \hfil
      \AxiomC{$\etw{C}{\nex{s}}{E}$}
      \RightLabel{($\nex{s}$)}
      \UnaryInfC{$\etw{C}{\phw{s}{}}{E}$}
      \DisplayProof
      \hfil
      \AxiomC{$\etw{C}{\nex{n}}{E}$}
      \RightLabel{($\nex{n}$)}
      \UnaryInfC{$\etw{C}{\phw{n}{}}{E}$}
      \DisplayProof
      \\[7pt]
      \AxiomC{$\etw{C}{\cof{\nex{n}}}{E}$}
      \AxiomC{$\el{C}{i} = n$}
      \RightLabel{($\cof{\nex{n}}$)}
      \BinaryInfC{$\etw{C}{\phw{\phl}{\phl \lfresh C, \phl \gfresh^i
      \el{\nec{C}}{i}}}{\ihsv{\left(E@\phl\right)}{\left(\left(n\
      \phl\right)\cdot\hcv{E}\right)}}$}
      \DisplayProof
      \\[7pt]
      \AxiomC{$\etw{C}{\phw{\bullet_1^1 \cdots \bullet^1_{k_1}}{\phi_1}}{E_1}$}
      \AxiomC{$\etw{C}{\phw{\bullet^2_1 \cdots \bullet^2_{k_2}}{\phi_2}}{E_2}$}
      \RightLabel{($\lcirc$)}
      \BinaryInfC{$\etw{C}{\phw{\bullet^1_1 \cdots \bullet^1_{k_1} @
      \pi_{\left[{C}{\rhd}{E_1}\right]} \cdot \left(\bullet^2_1 \cdots
      \bullet^2_{k_2}\right)}{\phi_1, \left(\pi_{\left[{C}{\rhd}{E_1}\right]}
      \cdot \phi_2 \right)}}{E_1 @ \left(\pi_{\left[{C}{\rhd}{E_1}\right]} \cdot
      E_2 \right)}$}
      \DisplayProof
      \\[7pt]
      \AxiomC{$\etw{C + [n]}{\phw{\bullet_1 \cdots \bullet_k}{\phi}}{E +
      [\ihsv{t}{m}]}$}
      \RightLabel{($\ldmd$)}
      \UnaryInfC{$\etw{C}{\phw{\left(n\ \phl\right)\cdot \bullet_1 \cdots
      \bullet_k}{\phl \lfresh C, \left(\left(n\ \phl\right) \cdot
      \phi\right)}}{\left(n\ \phl\right) \cdot E}$}
      \DisplayProof
     \end{center}
     \caption{LNGC: Rules for computing schematic words\label{fig:languagecalculus}}
    \end{figure}
    Given an NRE $\ne$, the rules in \figref{fig:languagecalculus} are meant to
    be applied ``bottom up'' to the proof trees computed by the \ctxc\ starting
    from the contextualised expressions $\etw{[]} \ne {[]}$.
    Also, in each instance of the rules $\ldmd$ and $\cof{\nex n}$
    a completely fresh placeholder has to be introduced.
    Therefore, in the \lngc, we do not have rules for $\_^\ast$ nor $+$.
    Instead, we take unions for $\_^\ast$ and $+$ after computing languages of
    each tree.
    On each application of the rules ($\cof{n}$) and ($\ldmd$), we
    must use a new $\phl$, see eg the occurrence of $\phl_1$ and
    $\phl_4$ in Ex.\ \ref{ex:langex}.
    We use $\bullet$'s to range over $\names \cup \{\phl, \phl_1,
    \ldots \}$.

\begin{remark}
  We can use the same pre-context $C$ in both premises of rule
  ($\lcirc$) in \figref{fig:languagecalculus} because \ctxc\
  duplicates the pre-contexts when decomposing the expression while
  \lngc\ recovers the same pre-contexts when visiting the tree.
\end{remark}

\medskip\noindent Rules ($\nex{1}$), ($\nex{0}$), ($\nex{n}$) and ($\nex{s}$) yield the natural
 interpretation for the corresponding elementary expressions.
 Note that $\phw{\emptyset}{\bot}$ in rule ($\nex{0}$) generates nothing by concatenations of
 schematic words.
 The remaining rules are more delicate since our NREs encompass both freshness
 ``with respect to current values'' (local freshness) and freshness ``with
 respect to chronicles'' (relative global freshness).
 Therefore, it is important to identify which names have to be locally and which
 relative globally fresh.
 Formally, this is done by noticing that each language obtained by \lngc\ from
 $+$-free NREs can be expressed as a finite conjunction of freshness conditions
 that we write as $\phl_1 \lfresh S_1 \ \mathsf{and} \cdots \mathsf{and}\
 \phl_r \gfresh^i S'_r$ where $S_j, S'_{j'}$ are lists of names in $C$ or some
 placeholders $\phl$ (note that $\lfresh$ is not the $\#$ operation, it is just
 a syntactic device to mark the type of freshness required on placeholders; similarly,
 $\gfresh^i$ represent marks that relative global freshness with respect to the
 chronicles of the $i$-th regist is required).
 ).
 This presentation of a language can be obtained by inspecting the corresponding
 NRE and noting the conditions on names that occur in the scopes of binders.

 Rule ($\cof{\nex n}$) is for relative global freshness; $\phl \lfresh C$ means
 that $\phl$ is locally fresh for current values $C$ and $\phl \gfresh^i
 \el{\nec{C}}{i}$ for $\el{C}{i} = n$ does that $\phl$ is fresh with respect to
 the $i$-th chronicle.

 Rule ($\lcirc$) deals with the concatenation of two languages by attaching each
 schematic word $v$ of second language to each schematic word $w$ of the
 first; note that, since permutations and underlines may change the
 post-contexts, it is necessary to use a permutation
 $\pi_{\left[{C}{\rhd}{E_1}\right]}$ to rename everything and update global
 fresh information before concatenating schematic words.
 Given two schematic words $\phw{\bullet^1_1 \cdots \bullet^1_{k_1}}{\phi_1}$
 and $\phw{\bullet^2_1 \cdots \bullet^2_{k_2}}{\phi_2}$, we append the first
 schematic word with the second schematic word permuted by
 $\pi_{\left[{C}{\rhd}{E_1}\right]}$.
 In addition, we also update the freshness condition $\phi_2$.
 There are two types of ``updates'' in $\pi_{\left[{C}{\rhd}{E_1}\right]} \cdot
 \phi_2$: update for local freshness $\lfresh$ and relative global freshness
 $\gfresh^i$: see rules ($\cof{n}$) and ($\ldmd$).
 For the local freshness $\lfresh$ in $\phi_2$, say $\bullet_j \lfresh
 \bullet_{j_1} \cdots \bullet_{j_l}$, we just replace this condition as
 $\pi_{\left[{C}{\rhd}{E_1}\right]} \cdot \bullet_j \lfresh
 \pi_{\left[{C}{\rhd}{E_1}\right]} \cdot \left(\bullet_{j_1} \cdots
 \bullet_{j_l} \right)$ in $\pi_{\left[{C}{\rhd}{E_1}\right]} \cdot \phi_2$.
 And for the global freshness with respect to the register $i$ $\gfresh^i$ in
 $\phi_2$, say $\bullet_j \gfresh^i \bullet_{j_1} \cdots \bullet_{j_l}$, we
 replace this condition ``with appending'' the register $i$'s chronicle
 freshness information as $\pi_{\left[{C}{\rhd}{E_1}\right]} \cdot \bullet_j
 \gfresh^i \el{E_1}{i} @ \left(\pi_{\left[{C}{\rhd}{E_1}\right]} \cdot
 \left(\bullet_{j_1} \cdots \bullet_{j_l} \right)\right)$.
 Therefore, for $\gfresh^i$, we not only permute the freshness conditions but
 also update the corresponding previous chronicle information in $E_1$ in
 $\pi_{\left[{C}{\rhd}{E_1}\right]} \cdot \phi_2$.
 We also note that, for $E_1 @ \left( \pi_{\left[{C}{\rhd}{E_1}\right]} \cdot
 E_2 \right)$ in the same rule, we keep the current values of the whole extant
 chronicle $\hcv{\pi_{\left[{C}{\rhd}{E_1}\right]} \cdot E_2}$,
 i.e.~$\pi_{\left[{C}{\rhd}{E_1}\right]} \cdot \hcv{E_2}$.

 Rule ($\ldmd$) deallocates the last name $n$ in the pre-context and the last
 chronicle $t$ in the post-context, i.e~$\ihsv{t}{m}$.
 Accordingly, we abstract the name $n$ to a placeholder $\phl$.

\exref{ex:langex} below shows an application of the rules; the reader
is referred to Appendix \ref{sec:complex} for an example of a more
complex language.
 \begin{example}\it
  \label{ex:langex}
  To show the difference between local and global freshness we compute
  the language considered in \exref{ex:ctx}:
  {\small\[
  \begin{minipage}{\linewidth}\centering
   \AxiomC{$\etw{[a]}{\cof{a}}{[\ihsv{a}{a}]}$}
   \RightLabel{($\cof{a}$)}
   \UnaryInfC{$\etw{[{a}]}{\phw{\phl_1}{\phl_1 \lgfresh
   a}}{[\ihsv{a\phl_1}{\phl_1}]}$}
   \dashedLine
   \UnaryInfC{$\mathfrak{L}_l$}
   \DisplayProof
   \hfil
   \AxiomC{$\etw{[a]}{\cof{a}}{[\ihsv{a}{a}]}$}
   \RightLabel{($\cof{a}$)}
   \UnaryInfC{$\etw{[a]}{\phw{\phl_4}{\phl_4 \lgfresh
   a}}{[\ihsv{a\phl_4}{\phl_4}]}$}
   \dashedLine
   \UnaryInfC{$\mathfrak{L}_r$}
   \DisplayProof

   \bigskip

   \AxiomC{$\mathfrak{L}_l$}
   \AxiomC{$\etw{[a, b, c]}{b}{[\ihsv{abc}{a},\ihsv{bc}{c},\ihsv{cb}{b}]}$}
   \RightLabel{($b$)}
   \UnaryInfC{$\etw{[a, b, c]}{\phw{b}{}}{[\ihsv{abc}{a},\ihsv{bc}{c},\ihsv{cb}{b}]}$}
   \RightLabel{($\ldmd$)}
   \UnaryInfC{$\etw{[a, b]}{\phw{b}{\phl_3 \lfresh ab}}{[\ihsv{ab
   \phl_3}{a},\ihsv{b \phl_3}{\phl_3}]}$}
   \RightLabel{($\ldmd$)}
   \UnaryInfC{$\etw{[a]}{\phw{\phl_2}{\phl_2 \lfresh a, \phl_3 \lfresh a
   \phl_2}}{[\ihsv{a \phl_2 \phl_3}{a}]}$}
   \AxiomC{$\mathfrak{L}_r$}
   \RightLabel{($\ccirc$)}
   \BinaryInfC{$\etw{[a]}{\phw{\phl_2 \phl_4}{\phl_2 \lfresh a, \phl_3 \lfresh a
   \phl_2, \phl_4 \lgfresh a \phl_2 \phl_3}}{[\ihsv{a \phl_2 \phl_3 a
   \phl_4}{\phl_4}]}$}
   \RightLabel{($\ccirc$)}
   \BinaryInfC{$\etw{[a]}{\phw{\phl_1 \phl_2 \phl_4}{\phl_1 \lgfresh a, \phl_2
   \lfresh \phl_1, \phl_3 \lfresh \phl_1 \phl_2, \phl_4 \lgfresh a \phl_1 \phl_2
   \phl_3}}{[\ihsv{a \phl_1 \phl_1 \phl_2 \phl_3 \phl_1 \phl_4}{\phl_4}]}$}
   \RightLabel{($\cdmd$)}
   \UnaryInfC{$\etw{[]}{\phw{\phl_1 \phl_2 \phl_4}{\phl_1 \lgfresh \phl, \phl_2
   \lfresh \phl_1, \phl_3 \lfresh \phl_1 \phl_2, \phl_4 \lgfresh \phl \phl_1
   \phl_2 \phl_3}}{[]}$}
   \DisplayProof
  \end{minipage}
	 \]}
 where, for compactness, conditions of the form $\phl \lfresh S \ \mathsf{and}\
 \phl \gfresh^1 S'$ are abbreviated as $\phl \lgfresh S'$ provided that $S
 \subset S'$).
 Note how $\gfresh^1$ (and $\lgfresh$) differs from $\lfresh$.
 Since the expression's depth is at most three, only by local freshness
 $\lfresh$, we cannot encounter freshness conditions with respect to more than
 four other placeholders, as $\phl_4 \lgfresh \phl \phl_1 \phl_2 \phl_3$.
 \end{example}

The language of an up-NRE $\ne$ with no free names is obtained by three steps:
1. compute schematic words with \lngc\ on all the proof-trees generated by
\ctxc\ starting with $\etw{[]}{\ne}{[]}$, 2. interpret all schematic words
naturally into languages over infinite alphabets, and 3. take the union of all
the languages.
We denote the language obtained from $\ne$ by $\langof(\ne)$.
 \begin{definition}
  A language over infinite alphabets is \emph{nominal regular} if there is an
  up-NREs $\ne$ such that the language is $\langof(\ne)$.
 \end{definition}
\begin{proposition}
 \label{prop:alpha}
 All nominal regular expressions are closed under $\alpha$-equivalence.
\end{proposition}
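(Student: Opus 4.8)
The plan is to exploit the fact that the context calculus \ctxc\ of \figref{fig:contextcalculus} discards the name of a binder \emph{as soon as} it enters the binder's scope, replacing it by the fresh symbol $\star$ via the transposition in rules $(\cdmd_=)$ and $(\cdmd_{\neq})$. Consequently the name chosen for a bound variable never propagates into the pre-/post-contexts nor into the schematic words produced by \lngc, so the computed language can only depend on the binding \emph{structure} of $\ne$, not on the particular bound names. Since $\alpha$-equivalence is the congruence generated by the renamings $\npexp{n}{\ne}\equiv_\alpha\npexp{n'}{\left(n\ n'\right)\cdot\ne}$ and $\pexp{n}{\ne}{m}\equiv_\alpha\pexp{n'}{\left(n\ n'\right)\cdot\ne}{m}$ (where $n'$ is chosen fresh, so in particular $n'\lfresh\ne$ and $n'\neq m$), and since \ctxc\ propagates contexts in a way that is insensitive to the names bound inside subexpressions, it suffices to treat a single such renaming and to check that it leaves the proof-tree unchanged at the affected binder.

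The computation on which this rests is the transposition identity: for $n'\lfresh\ne$ and $n,n'\neq\star$,
\[
 \left(n'\ \star\right)\cdot\left(\left(n\ n'\right)\cdot\ne\right)
 \;=\;
 \left(n\ n'\right)\cdot\left(\left(n\ \star\right)\cdot\ne\right)
 \;=\;
 \left(n\ \star\right)\cdot\ne,
\]
where the first equality is a direct computation of the composite permutation and the second holds because neither $n$ (already mapped to $\star$) nor $n'$ (fresh for $\ne$) occurs in $\left(n\ \star\right)\cdot\ne$. Applying $(\cdmd_=)$ to $\npexp{n}{\ne}$ rewrites the body to $\left(n\ \star\right)\cdot\ne$, whereas applying it to $\npexp{n'}{\left(n\ n'\right)\cdot\ne}$ rewrites it to $\left(n'\ \star\right)\cdot\left(\left(n\ n'\right)\cdot\ne\right)$; by the identity these coincide, and the pre- and post-contexts manufactured by the rule are functions of $C$ and $E$ alone. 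Hence the two binders spawn identical \ctxc\ subtrees, identical \lngc\ derivations (up to the immaterial choice of fresh placeholders, on which the interpretation into languages does not depend), and therefore the same language.

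The delicate case, and the one I expect to be the main obstacle, is the permutation binder handled by $(\cdmd_{\neq})$ together with the relative-global-freshness bookkeeping. There the rule not only turns $n$ into $\star$ in the body but also threads the leaked name $m$ through the post-context, extending it by $[\star m]$ and correcting the current values by $\left(m\ \star\right)$, while the chronicles accumulated by \lngc\ record the history needed for $\gfresh^i$. One must verify that renaming the bound $n$ perturbs none of this: because a legitimate renaming takes $n'$ fresh we have $n'\neq m$, so $m$ and its chronicle entry are untouched, and the same transposition identity equates the two rewritten bodies; since the chronicles are built from the ($\star$-based, hence identical) body, the relative-global-freshness conditions produced on both sides agree. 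Shadowing of a bound name and the reuse of the symbol $\star$ across nested binders also have to be handled here, but these are controlled by the side condition $\star\lfresh C$ attached to each binder.

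Finally, the closure of nominal regular languages under permutations follows as a corollary. The same rule-by-rule induction shows the construction is equivariant, $\langof(\sigma\cdot\ne)=\sigma\cdot\langof(\ne)$, the only non-immediate step being the naturality of the glueing permutation in $(\lcirc)$, namely $\pi_{\left[{\sigma\cdot C}{\rhd}{\sigma\cdot E_1}\right]}=\sigma\circ\pi_{\left[{C}{\rhd}{E_1}\right]}\circ\sigma^{-1}$, which is immediate from the pointwise definition $\el N i\mapsto\el M i$. For a closed NRE every name is bound, so any permutation $\sigma$ merely renames bound variables and thus $\sigma\cdot\ne\equiv_\alpha\ne$; combining equivariance with the $\alpha$-invariance just established gives $\sigma\cdot\langof(\ne)=\langof(\sigma\cdot\ne)=\langof(\ne)$, i.e.\ $\langof(\ne)$ is closed under the action of $\names$-permutations.
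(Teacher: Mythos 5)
Your proof is correct, and its mechanism is precisely the one the paper relies on but never writes down: Proposition~\ref{prop:alpha} appears in the paper \emph{without} a proof, the property being implicitly delegated to the design of \ctxc\ and \lngc. Your central claims check out: the rules $(\cdmd_=)$ and $(\cdmd_{\not=})$ rewrite the body with the bound name replaced by the freshly allocated symbol, their conclusions' pre-/post-contexts are functions of $C$, $E$ and the leaked name $m$ alone, and the transposition identity $(n'\ \star)\cdot\left((n\ n')\cdot\ne\right)=(n\ \star)\cdot\ne$ (valid since $n'$ occurs nowhere in $\ne$ and $\star$ is reserved) makes the two rewritten bodies coincide; with $n'\neq m$ this covers all four classes of NREs. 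Two points deserve tightening. First, the symbol $\star$ cannot literally be reused at nested binders, since pre-contexts must stay pairwise distinct; as the paper's appendix makes explicit, each binder is instantiated with an arbitrary name fresh for the current $C$, so ``identical subtrees'' should read ``the same set of admissible instantiated subtrees'', whose union is what defines $\langof(\ne)$. Second, for the permutation-closure reading that the paper invokes just before \figref{fig:languagecalculus}, there is a shorter route than your equivariance-plus-$\alpha$-invariance argument: for a closed NRE every name reaching the root of an \lngc\ derivation has been abstracted to a placeholder by $(\ldmd)$, so the root schematic words contain only placeholders and inequations, and their interpretation is permutation-invariant by construction. Your equivariance argument still has independent value for expressions with free names, though it silently assumes a (true but unstated) naturality of the extension of $N{\rhd}M$ to the permutation $\pi_{[N{\rhd}M]}$, which the paper defines only by reference to prior work.
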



\renewcommand{\lfresh}{\,\#\,}
\renewcommand{\gfresh}{\,\underline{\#}\,}

\section{Kleene theorems}
\label{sec:kleene}
We now give our main results.
\begin{theorem}
 \label{thm:nre2noma}
 Nominal regular languages are accepted by \cda\ as follows:
 
 \begin{tabular}{ll}
  \begin{minipage}[l]{.43\linewidth}\hspace{-.53cm}
   $
   \resizebox{6cm}{!}{
   \xymatrix{
   {} & \text{up-NREs} \leftrightarrow \text{CDA$^\sharp$} & {}\\
   \text{u-NREs} \leftrightarrow \text{CA$^\sharp$} \ar[ru]^{\text{permutation}}
   & {} & \text{p-NREs}\leftrightarrow \text{DA$^\sharp$}
   \ar[lu]_{\text{underline}}\\
   {} & \text{b-NREs}\leftrightarrow \text{A$^\sharp$}
   \ar[lu]^{\text{underline}} \ar[ru]_{\text{permutation}} & {} }
   }
   $
  \end{minipage}
	  &
  \begin{minipage}[l]{.5\linewidth}\scriptsize
   \begin{itemize}
    \item languages described by up-NREs are accepted by CDA$^\sharp$
    \item languages described by u-NREs are accepted by CA$^\sharp$
    \item languages described by p-NREs are accepted by DA$^\sharp$
    \item languages described by b-NREs are accepted by A$^\sharp$.
   \end{itemize}
  \end{minipage}
 \end{tabular}
\end{theorem}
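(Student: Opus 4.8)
The plan is to prove the four bullets (the forward inclusions) by structural induction on NREs, using the \ctxc-derivations as the recursion skeleton and building a \cda\ compositionally, in the spirit of the classical Thompson construction. Concretely, for each contextualised expression $\etw{C}{\ne}{E}$ occurring in a \ctxc-tree I would construct an automaton fragment with a distinguished entry state $q_C$ (with $\card{q_C}=\length{C}$ registers) and exit state $q_E$ (with $\card{q_E}=\length{E}$ registers), maintaining the invariant that the moves of the fragment, started from any configuration whose extant chronicle is compatible with $C$, realise exactly the schematic word(s) that \lngc\ assigns to $\etw{C}{\ne}{E}$. Applying this to the root $\etw{[]}{\ne}{[]}$ yields a fragment whose entry and exit states carry $0$ registers, i.e.\ a genuine \cda\ recognising $\langof(\ne)$. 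The fragments are allowed to be nondeterministic and to use $\emptystr$-moves, which is harmless since CA$^\sharp$, DA$^\sharp$ and A$^\sharp$ are cut out by restrictions on transition \emph{labels}, not by determinism.

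The base and propositional cases are routine: $\nex{1}$, $\nex{0}$, $s$ and a name $n=\el{C}{i}$ (bound from outside) are handled by the empty-word fragment, the no-path fragment, an $s$-edge and a register-read edge labelled $i$, matching rules $(\nex{1})$, $(\nex{0})$, $(\nex{s})$, $(\nex{n})$ of \lngc. Relative global freshness $\cof{n}$ with $\el{C}{i}=n$ is realised by a single $\gft{i}$-edge, whose semantics in \defref{def:configuration} (draw a name fresh for the current values and for the $i$-th chronicle, then re-assign register $i$) is exactly the rule $(\cof{\nex n})$. Union $\ne_1+\ne_2$ is a nondeterministic branch between the two sub-fragments sharing entry and exit, and Kleene star $\ne^\ast$ is the usual $\emptystr$-loop around the fragment of $\ne$ together with an $\emptystr$-bypass; since \ctxc's rule $(\cks)$ unfolds $\_^\ast$ into $h$-fold concatenations (each iteration preserving the register count $\length{C}$) and one takes the union over $h$, the loop's language is precisely that union.

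The two load-bearing cases are concatenation and binding. For $\ne_1\circ\ne_2$ I would glue the exit of the fragment of $\ne_1$ (built for post-context $\nec{C}$) to the entry of the fragment of $\ne_2$ by an $\emptystr$-edge; both states carry $\length{C}$ registers, so the splice is well typed. The subtlety is that \lngc's rule $(\lcirc)$ concatenates the second schematic word only after applying $\pi_{[C\rhd E_1]}$ and, for each relative-global-freshness mark, prepends the chronicle $\el{E_1}{i}$. In the automaton none of this is performed explicitly: the registers simply carry the runtime names left by the run of $\ne_1$, and a later $\gft{i}$-edge automatically tests freshness against the already-grown chronicle $\el{E}{i}$. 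Hence the crux is a lemma asserting that the operational bookkeeping of current values and chronicles enacted by \defref{def:configuration} coincides with the symbolic bookkeeping of $\pi_{[C\rhd E_1]}$ and of the chronicle-extension in $(\lcirc)$; this is where I expect the main difficulty, and I would discharge it by induction on runs, relating configurations to the $\langof$-interpretation of schematic words. For a binder $\pexp{n}{\ne}{m}$ I would prefix a $\star$-edge (allocating register $\length{C}+1$ with a locally fresh name, per the clause $\card{q'}=\card{q}+1$ of \defref{def:cda}), run the fragment of $\ne$ in context $C+[\star]$, and close with a $\pclose{}$-edge: when $m=n$ (rule $(\cdmd_=)$) by the top deallocation $\pclose{\length{C}+1}$, and when $m\neq n$ (rule $(\cdmd_{\neq})$) by the non-top deallocation $\pclose{j}$ with $\el{C}{j}=m$, which leaks the top register's content into register $j$ and disposes the top chronicle, exactly mirroring the transposition $(m\ \star)$ on current values in $(\cdmd_{\neq})$. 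The well-formedness condition on p-/up-NREs guarantees that register $j$ exists, so $\pclose{j}$ is legal.

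Finally I would read off the classification from the labels introduced. Binders on b-/u-NREs only ever close the top register, hence produce no $\pclose{i}$ with $i<\card{q}$, and the absence of underlines yields no $\gft{i}$ at all; so b-NREs give automata with freshness (A$^\sharp$). Underlines introduce $\gft{i}$ that under nested binders may reference a non-top register, while permutations introduce $\pclose{j}$ with $j<\card{q}$. Therefore u-NREs give chronicle automata (CA$^\sharp$: no non-top $\pclose$), p-NREs give deallocating automata (DA$^\sharp$: no non-top $\gft$), and up-NREs give general \cda, which is the claimed correspondence. The converse inclusions, reading each transition back into its constructor ($\star/\pclose$ into a binder, $\gft{i}$ into $\cof{\cdot}$, non-top $\pclose$ into a permuting binder), are symmetric and deliver the full $\leftrightarrow$ of the diagram.
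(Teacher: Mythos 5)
Your proposal is correct and is essentially the paper's own proof: the paper likewise builds automata-in-contexts $\etw{C}{\hdof{\ne}}{E}$ by a Thompson-style induction over the \ctxc-tree, closes binders with a $\star$-edge followed by a $\pclose{i}$-edge chosen via $\el{C}{i}=m$ (the top register when $m=n$), and identifies the same crux, namely that the automaton's runtime bookkeeping of current values and chronicles realises the permutation $\pi_{[C \rhd E_1]}$ and the chronicle extension of rule ($\lcirc$), with the classification into CA$^\sharp$, DA$^\sharp$ and A$^\sharp$ read off from which labels the construction introduces. The only divergence is your closing remark that the converse direction is ``symmetric''---the paper proves that separately (Theorem~\ref{thm:noma2nre}, via layer-wise determinization and Kleene-style path extraction), but that direction is not among this theorem's claims.
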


For up-NREs, we inductively construct the corresponding $\cda$.
To do so, we extend the notion of $\cda$
to $\cda$ \emph{in-contexts},
$\etw{C}{\hdof{\ne}}{E}$, languages to languages \emph{in-contexts},
$\etw{C}{\langof{\left(\ne\right)}}{E}$, and up-NREs to up-NREs
\emph{in-contexts}, $\etw{C}{\ne}{E}$.
For automata-in-contexts $\etw{C}{\hdof{\ne}}{E}$ with $\hdof{\ne} =
\tuple{Q,q_0,\trans,F}$, we let
 \begin{itemize}
  \item $\card{q} \geq \length{C}$ for each $q \in Q$, especially, $\card{q_0} =
	\length{C}$ and $\card{q} = \length{C}$ for $q \in F$,
  \item the initial configuration is $\tuple{q_0,w,\nec{C}}$ and final
	configurations $\tuple{q,\emptystr,E'}$ for some $q \in F$ and some
	extant chronicle $E'$ ($\length{E'} = \length{C}$),
  \item for each free name $n$ in $\ne$, there is a unique index $i \in
	\{1,\ldots,\length{C}\}$ with $\el{{C}}{i} = n$.
 \end{itemize}

\subsection{From NRE to $\cda$}

We show the inductive construction of \cda\ for up-NREs; the
construction is similar to the one in \cite{KurzST12fossacs}.
The inductive construction is informally depicted in Fig.~\ref{fig:indsteps}
where the CDA$^\sharp$
\begin{equation}
 \label{eq:indaut}
\hdof{\ne} = \tuple{Q,q_0,\trans,F}
\qquad\text{and}\qquad
\hdof{\ne_h} = \tuple{Q_h,q_{(h,0)},\trans_h, F_h},
\ \ h=1,2
\end{equation}
in the inductive steps are instances of the generic automaton of Fig.~\ref{fig:indhdof} (which, for
readability, encompasses only two final states, but in general could have more)
\begin{figure}[!t]
 \begin{minipage}{\linewidth}\centering
  \includegraphics[scale=.4]{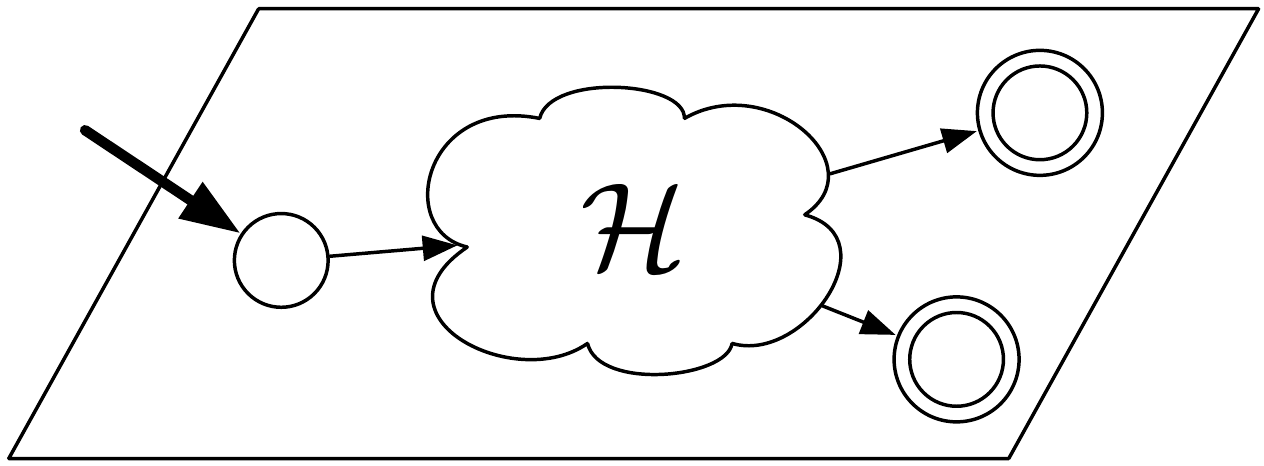}
 \end{minipage}
 \caption{The CDA$^\sharp$ $\hdof{\ne}$.\label{fig:indhdof}}
\end{figure}
and respectively correspond to the following up-NREs in-contexts
\[
\etw{C}{\ne}{E}
\qquad
\etw{C}{\ne_1}{E}
\qquad
\etw{C}{\ne_2}{E}
\]
In the last case of \figref{fig:indhdof} for the NRE
$\pexp{n}{\ne}{m}$, we let the contexts for $\ne$ be $C+[n]$ and
$E+[\ihsv{t}{m}]$,
i.e.~$\etw{C+[n]}{\hdof{\ne}}{E+\left[\ihsv{t}{m}\right]}$.
Note also that the automata in~\eqref{eq:indaut}
used in the inductive cases may generate states at higher levels, see
the last case of \figref{fig:indsteps}.

Notice that, for simplicity, we assume that NREs have no free
names.
Hence, $n$ or $\cof{n}$ are local names which must be stored in a
unique register $i$ and pre-contexts $C$.
In addition, post-contexts $E$ do not coincide with the ``real''
post-contexts when each word is accepted.
This is because, for example, if a NRE has a Kleene star, the real
post-contexts may change from time to time depending on how many times
we make loops to accept words.

\noindent\textbf{Base cases.}
    If $\ne = \nex{1}$, we let the corresponding CDA$^\sharp$ in-contexts
    $\etw{C}{\hdof{\nex{1}}}{E}$ with $\hdof{\nex{1}} = \tuple{Q,q_0,\trans,F}$
    where $Q = \{q_0\}$, $\trans = \emptyset$ and $F = \{q_0\}$.
    Note that the number of registers in $q_0$ is determined by $\length{C}$,
    i.e.~$\card{q_0} = \length{C}$.
    \begin{figure}[htbp]
     \begin{minipage}{\linewidth}\centering
      \includegraphics[scale=.5]{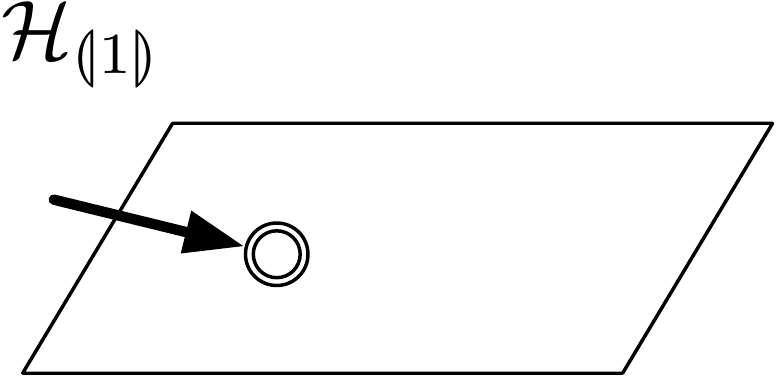}
     \end{minipage}
    \end{figure}

    If $\ne = \nex{0}$, we let the corresponding CDA$^\sharp$ in-contexts
    $\etw{C}{\hdof{\nex{0}}}{E}$ with $\hdof{\nex{0}} = \tuple{Q,q_0,\trans,F}$
    where $Q = \{q_0\}$, $\trans = \emptyset$ and $F = \emptyset$.
    Note that the number of registers in $q_0$ is determined by $\length{C}$.
    \begin{center}
     \begin{minipage}{\linewidth}\centering
      \includegraphics[scale=.5]{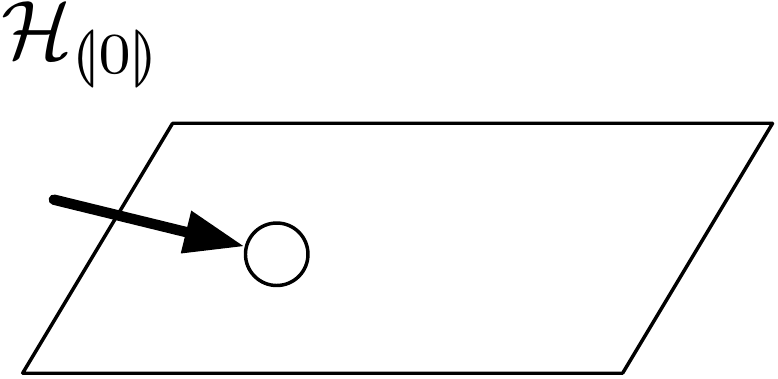}
     \end{minipage}
    \end{center}

    If $\ne = \nex{s}$, we let the corresponding CDA$^\sharp$ in-contexts
    $\etw{C}{\hdof{\nex{s}}}{E}$ with $\hdof{\nex{s}} = \tuple{Q,q_0,\trans,F}$
    where $Q = \{q_0,q_1\}$, $F = \{q_1\}$ and
    \[
    \begin{cases}
     \trans(q,\alpha) = \{q_1\} & \text{if } q = q_0 \text{ and } \alpha = s\\
     \trans(q,\alpha) = \emptyset & \text{otherwise}
    \end{cases}
    \]
    Note that the number of registers in $q_0$ and $q_1$ is the same as
    $\length{C}$, i.e.~$\card{q_0} = \card{q_1} = \length{C}$.
    \begin{center}
     \begin{minipage}{\linewidth}\centering
      \includegraphics[scale=.5]{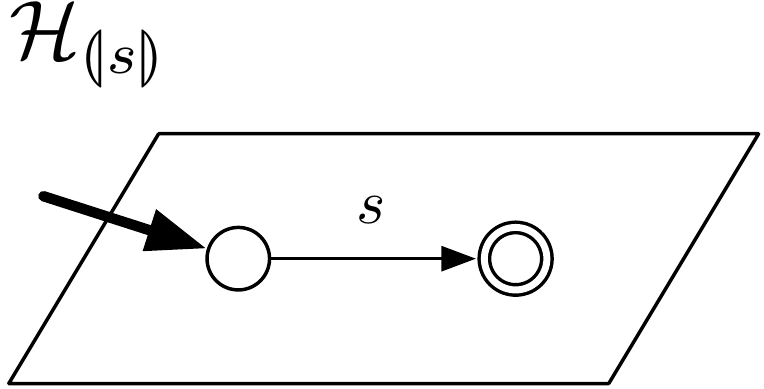}
     \end{minipage}
    \end{center}

    If $\ne = \nex{n}$, we let the corresponding CDA$^\sharp$ in-contexts
    $\etw{C}{\hdof{\nex{n}}}{E}$ with $\hdof{\nex{n}} = \tuple{Q,q_0,\trans,F}$
    where $Q = \{q_0,q_1\}$, $F = \{q_1\}$ and
    \[
    \begin{cases}
     \trans(q,\alpha) = \{q_1\} & \text{if } q = q_0 \text{ and } \el{C}{\alpha}
     = n\\
     \trans(q,\alpha) = \emptyset & \text{otherwise}
    \end{cases}
    \]
    Note that the number of registers in $q_0$ and $q_1$ is the same as
    $\length{C}$, i.e.~$\card{q_0} = \card{q_1} = \length{C}$.
    \begin{center}
     \begin{minipage}{\linewidth}\centering
      \includegraphics[scale=.5]{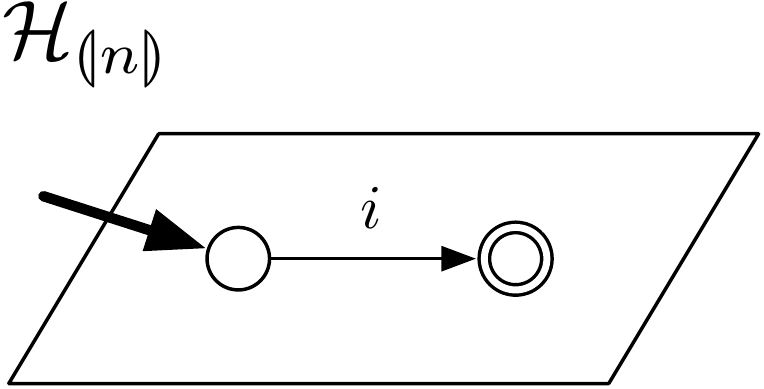}
     \end{minipage}
    \end{center}
    Remember that, since we are only considering closed up-NREs (no free name),
    this $n$ must be local (stored in a unique register in $q_0$) and appears in
    the pre-context $C$.
    This unique register is identified by $\el{C}{\alpha} = n$, and the above
    picture is assuming $\el{C}{i} = n$.

    If $\ne = \cof{n}$, we let the corresponding CDA$^\sharp$ in-contexts
    $\etw{C}{\hdof{\cof{n}}}{E}$ with $\hdof{\cof{n}} = \tuple{Q,q_0,\trans,F}$
    where $Q = \{q_0,q_1\}$, $F = \{q_1\}$ and
    \[
    \begin{cases}
     \trans(q,\alpha) = \{q_1\} & \text{if } q = q_0 \text{ and } \el{C}{\alpha}
     = n\\
     \trans(q,\alpha) = \emptyset & \text{otherwise}
    \end{cases}
    \]
    Note that the number of registers in $q_0$ and $q_1$ is the same as
    $\length{C}$, i.e.~$\card{q_0} = \card{q_1} = \length{C}$.
    \begin{center}
     \begin{minipage}{\linewidth}\centering
      \includegraphics[scale=.5]{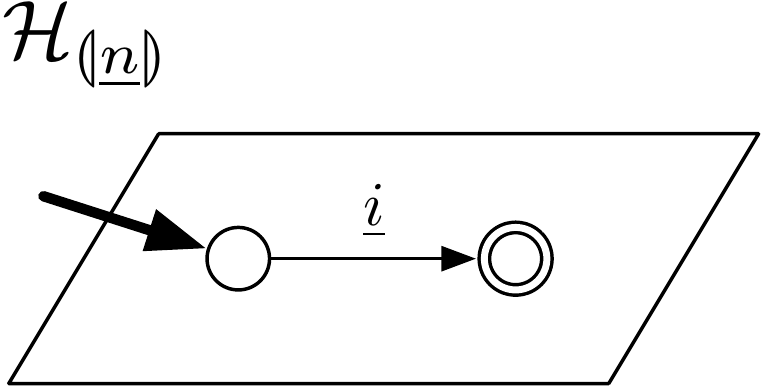}
     \end{minipage}
    \end{center}
    By our assumption, the underline $\cof{n}$ can be added only for local
    names.
    So, $n$ is local (stored in a unique register in $q_0$) and appears in the
    pre-context $C$.
    The register number is identified $\el{C}{\alpha} = n$ as above, and the
    picture is assuming $\el{C}{i} = n$.

\noindent\textbf{Inductive cases.}
    For $\ne_1 + \ne_2$, the corresponding CDA$^\sharp$ in-contexts
    $\etw{C}{\hdof{\ne_1+\ne_2}}{E}$ is $\hdof{\ne_1+\ne_2} =
    \tuple{Q^+,q_0,\trans^+,F^+}$ where
     \begin{itemize}
      \item $q_0$ is a new initial state with $\card{q_0} = \length{C}$
      \item $Q^+ = \{q_0\} \cup Q_1 \cup Q_2$
	 \item $\begin{cases}
		 \trans^+(q,\alpha) = \{q_{\left(1,0\right)},
		 q_{\left(2,0\right)}\} & \text{if } q = q_0 \text{ and } \alpha
		 = \emptystr\\
		 \trans^+(q,\alpha) = \trans_1(q,\alpha) & \text{if } q \in
		 Q_1\\
		 \trans^+(q,\alpha) = \trans_2(q,\alpha) & \text{if } q \in Q_2
		\end{cases}$
      \item $F^+ = F_1 \cup F_2$
     \end{itemize}
     Notice that the previous initial states $q_{(1,0)}$ and $q_{(2,0)}$ have
     the same amount of registers as $q_0$, by the inductive hypothesis,
     i.e.~$\card{q_{(1,0)}} = \card{q_{(2,0)}} = \card{q_0} = \length{C}$.
     \begin{center}
      \begin{minipage}{\linewidth}\centering
       \includegraphics[scale=.3]{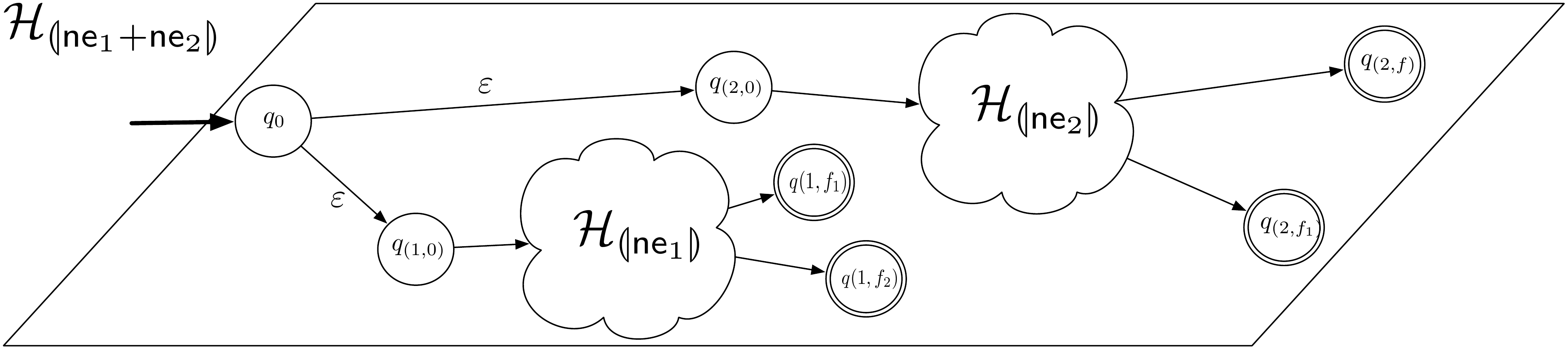}
      \end{minipage}
     \end{center}

     For $\ne_1 \circ \ne_2$, the corresponding CDA$^\sharp$ in-contexts
     $\etw{C}{\hdof{\ne_1\circ\ne_2}}{E}$ is $\hdof{\ne_1\circ\ne_2} =
     \tuple{Q^\circ,q_{\left(1,0\right)},\trans^\circ,F_2}$ where
      \begin{itemize}
       \item $Q^\circ = Q_1 \cup Q_2$
       \item $\begin{cases}
	       \trans(q,\alpha) = \trans_1(q,\alpha) \cup
	       \{q_{\left(2,0\right)}\} & \text{if } q \in F_1 \text{ and }
	       \alpha = \emptystr\\
	       \trans(q,\alpha) = \trans_1(q,\alpha) & \text{if } q \in Q_1
	       \setminus F_1 \text{ or } \alpha \not= \emptystr\\
	       \trans(q,\alpha) = \trans_2(q,\alpha) & \text{if } q \in Q_2
	      \end{cases}$
      \end{itemize}
      By our construction, we easily check that, in each automaton in-contexts,
      the initial state has the same number of registers as those of all final
      states.
      By the induction hypothesis, notice that the final states in
      $\hdof{\ne_1}$ has the same amount of registers as that of the initial
      state $q_{(2,0)}$ in $\hdof{\ne_2}$.
      \begin{center}
       \begin{minipage}{\linewidth}\centering
	\includegraphics[scale=.3]{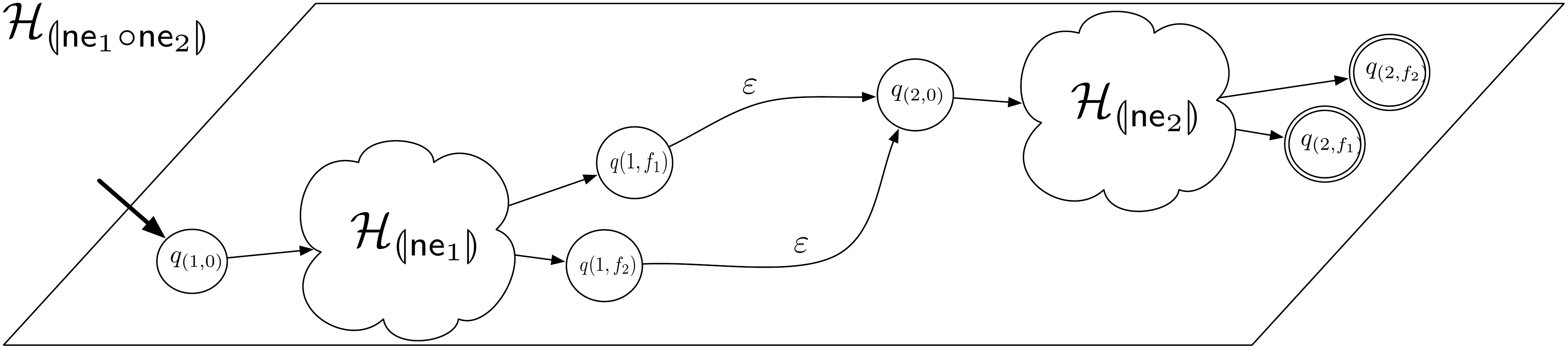}
       \end{minipage}
      \end{center}

      For $\ne^\ast$, the corresponding CDA$^\sharp$ in-contexts
      $\etw{C}{\hdof{\ne^\ast}}{E}$ is $\hdof{\ne^\ast} =
      \tuple{Q,q_0,\trans^\ast,F^\ast}$ where
       \begin{itemize}
	 \item $\begin{cases}
		 \trans^\ast(q,\alpha) = \trans(q,\alpha) \cup \{q_0\} &
		 \text{if } q \in F \text{ and } \alpha = \emptystr\\
		 \trans^\ast(q,\alpha) = \trans(q,\alpha) & \text{otherwise}
		\end{cases}$
	\item $F^\ast = \{q_0\}$
       \end{itemize}
       Notice that the initial state has the same number of registers as those
       of all the final states.
       \begin{center}
	\begin{minipage}{\linewidth}\centering
	 \includegraphics[scale=.3]{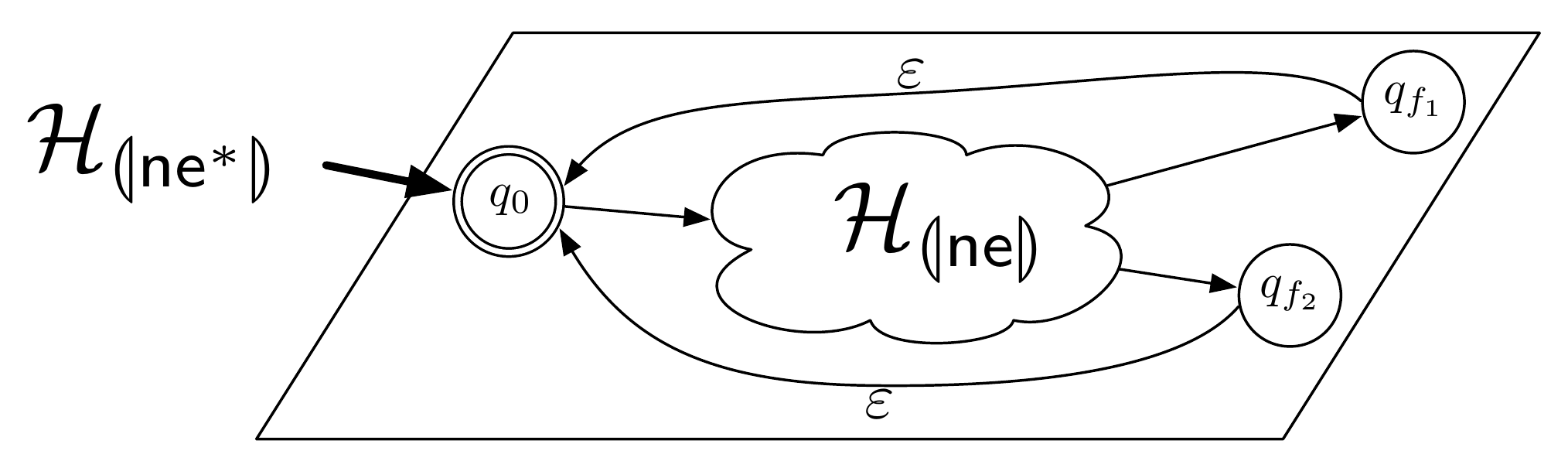}
	\end{minipage}
       \end{center}

       For $\pexp{n}{\ne}{m}$, the corresponding CDA$^\sharp$ in-contexts
       $\etw{C}{\hdof{\pexp{n}{\ne}{m}}}{E}$ is $\hdof{\pexp{n}{\ne}{m}} =
       \tuple{Q^\Diamond,q_s,\trans^\Diamond,F^\Diamond}$ where
       \begin{itemize}
	\item $q_s$ and $q_t$ are new states with $\card{q_s} = \card{q_t} =
	      \length{C}$ (remember $\hdof{\ne}$ is in-contexts $C + [n]$ and
	      $E+[\ihsv{t}{m}]$)
	\item $Q^\Diamond = \{q_s, q_t\} \cup Q$
	\item $q_s$ is the initial state
	 \item $\begin{cases}
		 \trans^\Diamond(q,\alpha) = \{q_0\} & \text{if } q = q_s \text{
		 and } \alpha = \star\\
		 \trans^\Diamond(q,\alpha) = \emptyset & \text{if } \left(q =
		 q_s \text{ and } \alpha \not= \star \right) \text{ or } q =
		 q_t\\
		 \trans^\Diamond(q,\alpha) = \{q_t\} & \text{if } q \in F \text{
		 and } \alpha = \pclose{i} \text{ with }
		 \el{\hcv{E+[\ihsv{t}{m}]}}{i} = n\\
		 \trans^\Diamond(q,\alpha) = \trans(q,\alpha) & \text{otherwise}
		\end{cases}$
	\item $F^\Diamond = \{q_t\}$
       \end{itemize}
       Remember that, for this case, we are assuming that $\hdof{\ne}$ are
       in-contexts between $C + [n]$ and $E + [\ihsv{t}{m}]$.
       Namely, $\card{q_s} = \card{q_t} = \length{C}$ and $\card{q_0} =
       \length{C} + 1$.
       Hence, $q_s$ can take a $\star$-transition to $q_0$ and final states in
       $\hdof{\ne}$ can take $\pclose{i}$-transitions to $q_t$.
       \begin{center}
	\begin{minipage}{\linewidth}\centering
	 \includegraphics[scale=.3]{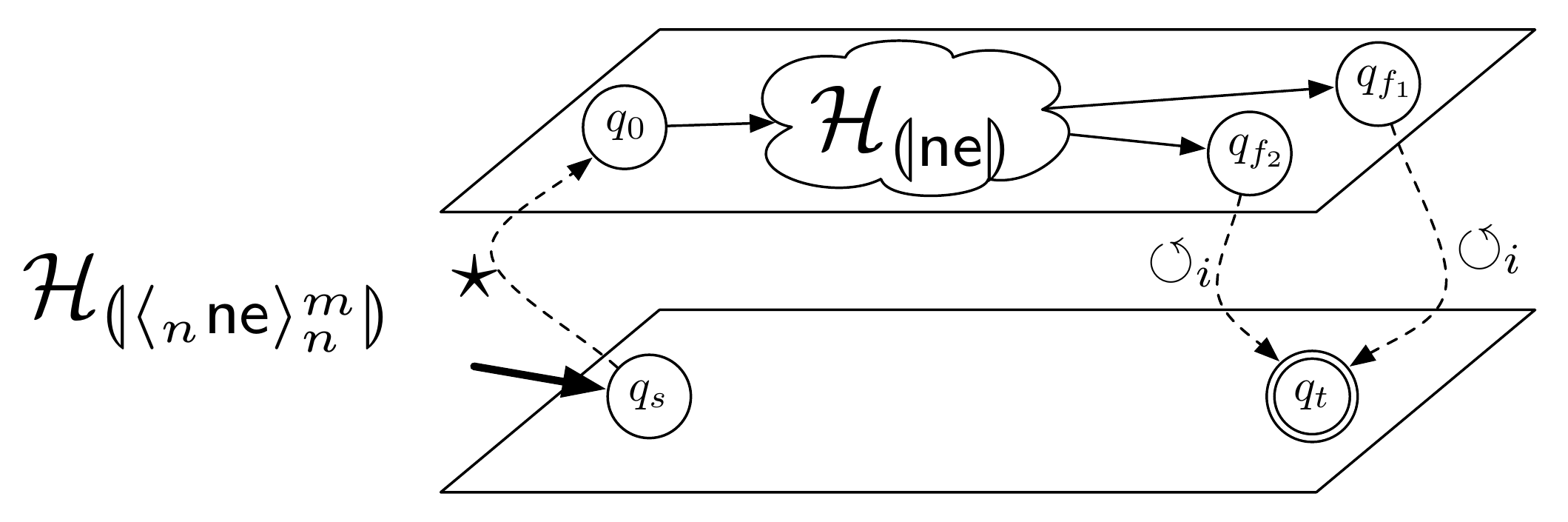}
	\end{minipage}
       \end{center}
       Moreover, notice that, by our assumption for the superscript names on
       closing brackets, $m$ must be in a unique register in the pre-context
       $C$, which can be identified with $\el{\hcv{E+[\ihsv{t}{m}]}}{i} = n$ and
       $\el{C}{i} = m$.
       Recall the rules ($\cdmd_=$) and ($\cdmd_{\not=}$) in
       Fig.~\ref{fig:contextcalculus}.

\subsection{\cda\ accepts NREs}
\begin{proposition}
 \label{prop:base}
 For up-NREs in-contexts $\etw{C}{\nex{1}}{E}$, $\etw{C}{\nex{0}}{E}$,
 $\etw{C}{s}{E}$, $\etw{C}{\nex{n}}{E}$ and $\etw{C}{\cof{n}}{E}$, the
 corresponding CDA$^\sharp$ in-contexts $\etw{C}{\hdof{1}}{E}$,
 $\etw{C}{\hdof{0}}{E}$, $\etw{C}{\hdof{s}}{E}$, $\etw{C}{\hdof{n}}{E}$ and
 $\etw{C}{\hdof{\cof{n}}}{E}$ accept the language in-contexts
 $\etw{C}{\langof{\left(\nex{1}\right)}}{E}$,
 $\etw{C}{\langof{\left(\nex{0}\right)}}{E}$,
 $\etw{C}{\langof{\left(\nex{s}\right)}}{E}$,
 $\etw{C}{\langof{\left(\nex{n}\right)}}{E}$ and
 $\etw{C}{\langof{\left(\cof{n}\right)}}{E}$, respectively.
\end{proposition}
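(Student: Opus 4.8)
The plan is to dispatch the five base cases one at a time. For each atomic $\ne \in \{\nex 1, \nex 0, s, n, \cof n\}$ I would, on the one side, read off the schematic word produced by the unique \lngc\ rule that matches $\ne$, and, on the other side, enumerate all runs of the automaton-in-context $\etw{C}{\hdof{\ne}}{E}$ from its initial configuration $\tuple{q_0, w, \nec C}$ using \defref{def:configuration}. Since each of $\hdof{\nex 1}, \hdof{\nex 0}, \hdof{s}, \hdof{n}, \hdof{\cof n}$ has at most one transition out of $q_0$ and no other edges, every run has length at most one, so the reachable-configuration analysis is completely determined. The claim to verify in each case is that the words accepted by the automaton coincide with those denoted by the schematic word, and that the associated freshness side-conditions match the moves allowed by \defref{def:configuration}.

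The four cases $\nex 1, \nex 0, s, n$ are immediate. For $\nex 1$ we have $\trans = \emptyset$ with $q_0 \in F$, so the only reachable configuration is $\tuple{q_0, w, \nec C}$, accepting iff $w = \emptystr$, which matches $\phw{\emptystr}{}$. For $\nex 0$ we have $F = \emptyset$, so nothing is accepted, matching $\phw{\emptyset}{\bot}$. For $s$ the unique move $\step{}{\tuple{q_0, s w', \nec C}}{\tuple{q_1, w', \nec C}}$ uses the clause $\alpha \in \letters$ and accepts iff $w = s$, matching $\phw{s}{}$. For $n$, closedness puts $n$ in the unique register $i$ with $\el{C}{i} = n$; the move is labelled $\alpha = i \in \reg{q_0}$, and the clause $\alpha \in \reg{q}$ forces the consumed letter to be $\hcvel{\nec C}{i} = n$, matching $\phw{n}{}$. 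In all four cases the chronicle is left untouched.

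The only case requiring genuine care is $\cof n$, which exercises relative global freshness. Again $i$ with $\el{C}{i} = n$ is unique, and the single move out of $q_0$ is the $\gft i$-transition. By the clause for $\alpha = \gft i$ in \defref{def:configuration}, from the canonical chronicle $\nec C$ the automaton consumes any name $m \in \names \setminus (\hcv{\nec C} \cup \el{\nec C}{i})$ and then accepts iff the rest of the word is $\emptystr$. On the language side, rule $(\cof{\nex n})$ yields $\phw{\phl}{\phl \lfresh C,\ \phl \gfresh^i \el{\nec C}{i}}$, denoting exactly those single names $m$ with $m \notin C \cup \el{\nec C}{i}$. Since $C$ is precisely the list of current values $\hcv{\nec C}$, the freshness tests on the two sides coincide, so the accepted words equal the denoted language.

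The step I expect to be the main obstacle is the chronicle bookkeeping in the $\cof n$ case, and in particular disentangling the two roles of the pre-context $C$ and the post-context $E$. It matters that the freshness condition emitted by $(\cof{\nex n})$ is taken with respect to $\el{\nec C}{i}$ --- the canonical chronicle built from $C$, which is exactly what the in-context automaton starts from --- so the two relative-global-freshness tests match verbatim; the post-context $\ihsv{(E@\phl)}{((n\ \phl)\cdot\hcv E)}$ recorded by \lngc\ is instead expressed over the generic $E$, which may already carry permuted history (cf.~\exref{ex:ctx}), and coincides with the automaton's reached chronicle $\nec C @ m$ only once $E$ is identified with $\nec C$. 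I would therefore keep the leaf proof at the level of accepted words, leaving the propagation of histories to the concatenation rule $(\lcirc)$ and to the inductive construction, which are precisely designed to append $\el{E_1}{i}$ to each $\gfresh^i$ constraint. As a final sanity check one records that in $\nec C = [\ihsv{s_1}{n_1},\ldots,\ihsv{s_k}{n_k}]$ with $s_i = n_i \cdots n_k$ one has $\el{\nec C}{i} \subseteq \hcv{\nec C} = C$ as sets, so the $\gfresh^i$ clause is here subsumed by local freshness and the language is simply the single names not occurring in $C$.
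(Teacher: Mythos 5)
Your proposal is correct and follows essentially the same route as the paper's own proof: the four cases $\nex{1},\nex{0},s,n$ are dispatched as immediate, and the only case given real attention is $\cof{n}$, where you match the single $\gft{i}$-move from the initial chronicle $\nec{C}$ against the schematic word $\phw{\phl}{\phl \lfresh C,\ \phl \gfresh^{i} \el{\nec{C}}{i}}$ produced by rule ($\cof{\nex{n}}$), exactly as the paper does. Your closing observations — that at the leaf the $\gfresh^{i}$ condition is subsumed by $\lfresh$ since $\el{\nec{C}}{i} \subseteq C$, and that the distinction only becomes material under concatenation via rule ($\lcirc$) — are also the paper's own remarks (made in its proof and again in Appendix case D), so nothing is missing.
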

\begin{proof}
    The only non-trivial case is $\etw{C}{\cof{n}}{E}$.
    In this case, the language in-contexts is obtained by the language calculus
    as follows:
    \[
    \left\{\star \in \names \sep \star \lfresh C \text{ and } \star \gfresh^i
    \el{\nec{C}}{i}\right\}
    \]
    with the post context $\ihsv{\left(E@\star\right)}{\left(\left(n\
    \star\right)\cdot\hcv{E}\right)}$, where $i$ is the register number whose
    current value is $n$, i.e.~$\el{\hcv{C}}{i} = n$.
    The corresponding CDA$^\sharp$ is given as follows:
    \begin{center}
     \begin{minipage}{\linewidth}\centering
      \includegraphics[scale=.4]{hdofcofn}
     \end{minipage}
    \end{center}
    Hence, the initial configuration $\tuple{q_0,\star,\nec{C}}$ can reach the
    final state $q_1$ if and only if $\star \lfresh C$ and $\star \gfresh^i
    \el{\nec{C}}{i}$.
    This is because $\star$ must be fresh for all the current names $\star
    \lfresh C$ also for the chronicle $i$, i.e.~$\el{\nec{C}}{i}$.
    Notice that, later on, when we concatenate with other languages, we may
    change, by permuting names and appending chronicles, the local and global
    freshness conditions (recall rule ($\lcirc$) in
    Fig.~\ref{fig:languagecalculus} and see how it works in Appendix
    \ref{sec:complex}).
    Also, the post-context must correspond to the post-context given by the
    language calculus, i.e.~$\ihsv{\left(E@\star\right)}{\left(\left(n\
    \star\right)\cdot\hcv{E}\right)}$, by the definition of the movement of
    CDA$^\sharp$.
\qed
\end{proof}

\begin{figure}\centering
  \resizebox{!}{4cm}{
  \begin{minipage}{\linewidth}\centering
   \includegraphics[scale=.2]{hdofunion}
   \\[1cm]
   \includegraphics[scale=.2]{hdofconcat}
   \\[1cm]
   \includegraphics[scale=.2]{hdofks}
   \qquad
   \includegraphics[scale=.2]{hdofdmd}
  \end{minipage}
  }
 \caption{The inductive constructions for $\hdof{\ne_1 + \ne_2}$,
 $\hdof{\ne_1\circ\ne_2}$, $\hdof{\ne^\ast}$ and
 $\hdof{\pexp{n}{\ne}{m}}$.\label{fig:indsteps}}
\end{figure}

The construction from NREs to automata is summarised by the next two
propositions.
\begin{proposition}
 Given two NREs $\ne_1$ and $\ne_2$, a pre-context $C$, and a post-context $E$,
 the CDA$^\sharp$ in-contexts $\etw{C}{\hdof{\ne_1+\ne_2}}{E}$ recognises the
 language in-contexts $\etw{C}{\langof{\left(\ne_1+\ne_2\right)}}{E}$ while the
 CDA$^\sharp$ in-contexts $\etw{C}{\hdof{\ne_1\circ\ne_2}}{E}$ recognises the
 language in-contexts $\etw{C}{\langof{\left(\ne_1\circ\ne_2\right)}}{E}$.
\end{proposition}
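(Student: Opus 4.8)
The plan is to prove both statements as the inductive step of the construction, taking as inductive hypothesis that $\etw{C}{\hdof{\ne_i}}{E}$ recognises $\etw{C}{\langof(\ne_i)}{E}$ for $i=1,2$. In each case I would establish a two-way correspondence between accepting runs of the composite CDA$^\sharp$ (from the initial configuration $\tuple{q_0,w,\nec{C}}$, resp.\ $\tuple{q_{(1,0)},w,\nec{C}}$) and the schematic words computed by \lngc\ over the proof-trees of \ctxc; since the interpretation of schematic words into languages is a fixed semantic step, it suffices to match schematic words together with their freshness conditions and post-contexts. Soundness (each accepting run yields a schematic word of the target language) and completeness (each schematic word is realised by an accepting run) are proved together by tracing how the move relation of Definition~\ref{def:configuration} manipulates the word and the extant chronicle.

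For the \textbf{union} case the argument is direct. By the context-calculus rules $(\cplus_1)$ and $(\cplus_2)$, the proof-trees for $\etw{C}{\ne_1+\ne_2}{E}$ are exactly the proof-trees for $\etw{C}{\ne_1}{E}$ together with those for $\etw{C}{\ne_2}{E}$; since \lngc\ takes the union over all such trees, $\langof(\ne_1+\ne_2)=\langof(\ne_1)\cup\langof(\ne_2)$ in-contexts. On the automaton side the only moves out of the fresh initial state $q_0$ are the $\emptystr$-transitions to $q_{(1,0)}$ and $q_{(2,0)}$; because $\card{q_0}=\card{q_{(1,0)}}=\card{q_{(2,0)}}=\length{C}$ the chronicle $\nec{C}$ is admissible at all three states, and an $\emptystr$-move alters neither the word nor the extant chronicle. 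Hence every accepting run of $\hdof{\ne_1+\ne_2}$ is an initial $\emptystr$-move into one summand followed by an accepting run of $\hdof{\ne_1}$ or $\hdof{\ne_2}$ on the identical configuration $\tuple{q_{(i,0)},w,\nec{C}}$, and the inductive hypothesis gives the recognised language as $\langof(\ne_1)\cup\langof(\ne_2)$, as required.

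The \textbf{concatenation} case is the main obstacle, because rule $(\lcirc)$ of Fig.~\ref{fig:languagecalculus} explicitly inserts a renaming $\pi_{\left[{C}{\rhd}{E_1}\right]}$ and an update of the relative-global-freshness marks $\gfresh^i$, whereas the automaton $\hdof{\ne_1\circ\ne_2}$ merely threads the extant chronicle left by the $\hdof{\ne_1}$-phase straight into the $\hdof{\ne_2}$-phase. I would decompose an accepting run as follows: it starts at $q_{(1,0)}$ with chronicle $\nec{C}$, runs through $\hdof{\ne_1}$ on a prefix $v_1$ of the input, reaches some $q\in F_1$ in a configuration $\tuple{q,v_2,E_1'}$ (with $w=v_1v_2$ and $\length{E_1'}=\length{C}$, since the initial and final states of $\hdof{\ne_1}$ carry equally many registers), takes the $\emptystr$-move to $q_{(2,0)}$, and finally runs $\hdof{\ne_2}$ on $v_2$ \emph{starting from $E_1'$ rather than from $\nec{C}$}. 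The crux is a relabelling lemma: since the move relation is equivariant under name permutations (a routine structural fact, in the spirit of Proposition~\ref{prop:alpha}), running $\hdof{\ne_2}$ from $E_1'$ accepts exactly the words $\pi_{\left[{C}{\rhd}{E_1'}\right]}\cdot u$ with $u$ accepted by $\hdof{\ne_2}$ from $\nec{C}$, where $\pi_{\left[{C}{\rhd}{E_1'}\right]}$ sends each $\el{C}{i}$ to $\hcvel{E_1'}{i}$; I must verify that this is precisely the permutation $\pi_{\left[{C}{\rhd}{E_1}\right]}$ of $(\lcirc)$ with $E_1$ instantiated to the concrete chronicle $E_1'$ realised by the $\hdof{\ne_1}$-run.

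The truly delicate point, and where most of the effort lies, is matching relative global freshness. When $\hdof{\ne_2}$ fires a $\gft{i}$-transition from $E_1'$, the move relation forces the consumed name to avoid the \emph{whole} accumulated chronicle $\el{E_1'}{i}$ of register $i$ — both the names recorded during the $\hdof{\ne_1}$-phase and those recorded so far during the $\hdof{\ne_2}$-phase — together with the current values $\hcv{E_1'}$. I would check register by register that this coincides with the $\gfresh^i$-update of $(\lcirc)$, which prepends $\el{E_1}{i}$ to the permuted freshness list, using that the permuted triangular base $\pi_{\left[{C}{\rhd}{E_1'}\right]}\cdot\el{\nec{C}}{i}$ is a sublist of $\hcv{E_1'}$ and is therefore already subsumed by local freshness; likewise the final extant chronicle of the combined run equals $E_1'@\left(\pi_{\left[{C}{\rhd}{E_1'}\right]}\cdot E_2'\right)$, matching the post-context built by $(\lcirc)$. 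Completeness follows by reversing the decomposition: any schematic word of $\langof(\ne_1\circ\ne_2)$ splits at the $(\lcirc)$-boundary as $w_1@\pi_{\left[{C}{\rhd}{E_1}\right]}\cdot w_2$, and I realise $w_1$ by an accepting $\hdof{\ne_1}$-run (inductive hypothesis) ending in $E_1'$ and extend it through the $\emptystr$-move by the $\hdof{\ne_2}$-run accepting $\pi_{\left[{C}{\rhd}{E_1'}\right]}\cdot w_2$ supplied by the relabelling lemma. The genuinely non-routine step is thus establishing the relabelling lemma and confirming that the automaton's implicit chronicle threading reproduces exactly the permutation-and-update performed by $(\lcirc)$.
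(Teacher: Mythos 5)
Your overall plan --- structural induction with the IH on the two sub-automata, decomposition of an accepting run at the $\emptystr$-move, and matching the automaton's chronicle threading against rule $(\lcirc)$ --- is exactly the paper's argument, and your union paragraph is fine (the paper does not even spell that case out). The genuine problem is your ``relabelling lemma'': it is false as stated, and equivariance cannot prove it. Concretely, take $C=[a]$, $\ne_1=\npexp{m}{m}$ and $\ne_2=\cof{a}$. An accepting run of $\hdof{\ne_1}$ from the initial chronicle $[\ihsv{a}{a}]$ allocates some $m\neq a$, consumes it and deallocates, ending in $E_1'=[\ihsv{am}{a}]$; the permutation $\pi_{\left[{C}{\rhd}{E_1'}\right]}$ is the identity. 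But from $E_1'$ the $\gft{1}$-move of $\hdof{\ne_2}$ accepts exactly the names outside $\{a,m\}$, whereas from $[\ihsv{a}{a}]$ it accepts every name other than $a$. So the language accepted from $E_1'$ is in general a \emph{proper subset} of $\pi_{\left[{C}{\rhd}{E_1'}\right]}\cdot\bigl(\text{language accepted from }\nec{C}\bigr)$: only one inclusion holds. The reason is the very feature your next paragraph describes: $E_1'$ and $\pi_{\left[{C}{\rhd}{E_1'}\right]}\cdot\nec{C}$ agree on current values but not on chronicles, and $\gft{i}$-moves consult the chronicle; equivariance only relates configurations that differ by a permutation, and these two do not.

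This is not a cosmetic slip, because your plan routes both directions through the lemma: completeness invokes precisely the false inclusion (``the $\hdof{\ne_2}$-run accepting $\pi_{\left[{C}{\rhd}{E_1'}\right]}\cdot w_2$ supplied by the relabelling lemma''), and soundness, if run through the language-level statement, forgets exactly the freshness facts needed to satisfy the $\gfresh^i$-conditions that $(\lcirc)$ prepends. The repair is what your final paragraph already sketches and what the paper's (admittedly informal) proof does: work at the level of runs rather than languages. A run of $\hdof{\ne_2}$ from $\nec{C}$ can be replayed from $E_1'$ if and only if at every $\gft{i}$-move the consumed name additionally avoids $\el{E_1'}{i}$; all other move types ($\star$, register moves, letters, $\pclose{i}$) and local freshness are governed by current values alone and are handled by the permutation. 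That extra requirement is precisely the list $\el{E_1}{i}$ that $(\lcirc)$ prepends to the $\gfresh^i$-conditions of $\phi_2$ (and, as in my example, it is exactly what makes the composite language smaller than the naive permuted concatenation). With the lemma replaced by this two-way run correspondence, your soundness and completeness arguments go through and coincide with the paper's proof.
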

\begin{proof}
    Because of the context calculus, for an NRE in-contexts
    $\etw{C}{\ne_1\circ\ne_2}{E}$, we assume, as the inductive hypothesis, the
    languages in-contexts $\etw{C}{\langof{(\ne_1)}}{\nec{C}}$ and
    $\etw{C}{\langof{(\ne_2)}}{E}$ obtained by $\etw{C}{\ne_1}{\nec{C}}$ and
    $\etw{C}{\ne_2}{E}$ are accepted by automata in-contexts
    $\etw{C}{\hdof{\ne_1}}{\nec{C}}$ and $\etw{C}{\hdof{\ne_2}}{E}$,
    respectively.
    Here we let the schematic words for $\langof(\ne_1)$ and $\langof(\ne_2)$
    be $\phw{\bullet^1_1 \ldots \bullet^1_{k_1}}{\phi_1}$ and $\phw{\bullet^2_1
    \cdots \bullet^2_{k_2}}{\phi_2}$, respectively.
    Note that, the post-contexts in languages in-contexts are not necessarily
    reflecting the real extent chronicles in their final states.
    This is because it may have loops or unions in $\hdof{\ne_1}$ and
    $\hdof{\ne_2}$, then the extant chronicles may change depending on how many
    times each word makes loops until it is recognised, etc.
    However, when we consider each path (without unions and Kleene stars),
    i.e.~schematic words in Fig.~\ref{fig:languagecalculus}, each
    post-contexts reflects the 'real' extant chronicle in the final
    configuration in the corresponding \cda: see Appendix \ref{sec:complex}.

    By the language calculus, we obtain the following schematic word for each
    pair of schematic words $\langof_1$ and $\langof_2$:
    \[
    \phw{\bullet^1_1 \cdots \bullet^1_{k_1} \circ
    \pi_{\left[{C}{\rhd}{E_1}\right]} \cdot \left(\bullet^2_1 \cdots
    \bullet^2_{k_2}\right)}{\phi_1, \left(\pi_{\left[{C}{\rhd}{E_1}\right]}
    \cdot \phi_2 \right)}
    \]
    Since $\ne_1$ and $\ne_2$ may contain $+$ or $\_^\ast$, the post-contexts
    $\nec{C}$ and $E$ obtained in Fig.~\ref{fig:contextcalculus} may change to
    some other extant chronicles depending on which path we take or how many
    time we make loops etc during \lngc.
    Hence, we assume for the current schematic words in-contexts that they
    have $E_1$ and $E_2$ as their post-contexts.
    Notice that, the post-chronicles $E_1$ and $E_2$ reflect the extant
    chronicles in their final configurations.
    As the permutation action $\pi_{\left[{C}{\rhd}{E_1}\right]}$, by
    definition, permutes the current values to start $\hdof{\ne_2}$ to the
    current values of $E_1$.
    Not only that, it appends the chronicles in $E_1$ to the initial
    configuration of the second \cda.
    Accordingly, it updates the local freshness conditions and the relative
    global freshness conditions in $\phi_2$ to the appropriate one: also see how
    it works in Appendix \ref{sec:complex}.
    Hence, the construction of $\etw{C}{\hdof{\ne_1\circ\ne_2}}{E}$ works.
    \begin{figure*}[h!]\centering
      \includegraphics[scale=.25]{hdofconcat}
    \end{figure*}  
\qed
\end{proof}

\begin{proposition}
 Given an NRE $\ne$, a pre-context $C$, and a post-context $E$, the CDA$^\sharp$
 in-contexts the CDA$^\sharp$ in-contexts $\etw{C}{\hdof{\ne^\ast}}{E}$
 recognises the language in-contexts
 $\etw{C}{\langof{\left(\ne^\ast\right)}}{E}$ while the CDA$^\sharp$ in-contexts
 $\etw{C}{\hdof{\pexp{n}{\ne}{m}}}{E}$ recognises the language in-contexts
 $\etw{C}{\langof{\left(\pexp{n}{\ne}{m}\right)}}{E}$.
\end{proposition}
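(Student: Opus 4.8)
The plan is to finish the structural induction on $\ne$ begun in the previous propositions, treating the two remaining operators. By induction hypothesis I may assume that for every proper subexpression $\ne'$ and any contexts, the automaton $\etw{C'}{\hdof{\ne'}}{E'}$ recognises $\etw{C'}{\langof(\ne')}{E'}$; in particular the base cases of Proposition~\ref{prop:base} and the $+$ and $\circ$ cases of the preceding proposition are available. Correctness here means a twofold inclusion, and I would prove both directions simultaneously by setting up a correspondence between accepting runs and LNGC derivation trees, tracking each configuration $\tuple{q,w,E}$ against the contextualised schematic word produced at the matching node.

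For $\ne^\ast$ I would reduce to the concatenation case already proved. The CTXC rule $(\cks)$ replaces $\etw{C}{\ne^\ast}{E}$ by $\etw{C}{\ne\circ\cdots\circ\ne}{E}$ ($h$ times) for each $h\ge 0$, and LNGC then takes the union, so that $\langof(\ne^\ast)=\bigcup_h\langof(\ne^{\circ h})$. On the automaton side the construction of $\hdof{\ne^\ast}$ only adds $\emptystr$-transitions from each $q\in F$ back to $q_0$ and sets $F^\ast=\{q_0\}$; since $\card{q_0}=\length{C}$ equals the register count of every final state (by the induction hypothesis), each back-loop is register-correct. Consequently an accepting run of $\hdof{\ne^\ast}$ factors, for some $h\ge 0$, into $h$ consecutive passages through $\hdof{\ne}$ linked by back-loops (the empty run $h=0$ accepting $\emptystr$), i.e.\ into an accepting run of $\hdof{\ne^{\circ h}}$. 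Crucially the extant chronicle persists across each $\emptystr$-loop, so the accumulated history seen by a $\gft{i}$-transition in a later passage is exactly the history that the concatenation rule $(\lcirc)$ appends to the $\gfresh^i$ conditions; applying the concatenation proposition inductively and taking the union over $h$ then yields the claim. I would also flag the caveat stressed in the text: the fixed post-context $E$ does not record the real chronicle at acceptance (which varies with $h$), so the argument is phrased on each individual unfolding rather than on $E$ itself.

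For $\pexp{n}{\ne}{m}$ I would decompose each accepting run of $\hdof{\pexp{n}{\ne}{m}}$ into three segments and match them to the context and language calculi. First, the unique $\star$-transition $q_s\to q_0$ allocates the new top register with a locally fresh name; by Definition~\ref{def:configuration} this name is fresh for the current values $\hcv{\nec{C}}=C$, matching the side-condition $\phl\lfresh C$ and the abstraction of $n$ to a fresh placeholder $\phl$ in LNGC rule $(\ldmd)$, and matching the context extension $C'=C+[\star]$ of the CTXC rules $(\cdmd_=)$ and $(\cdmd_{\not=})$. Second, the passage through $\hdof{\ne}$, which lives in-contexts $C+[n]$ and $E+[\ihsv{t}{m}]$, recognises $\langof(\ne)$ in those contexts by the induction hypothesis. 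Third, the closing $\pclose{i}$-transition to $q_t$ — at the register $i$ with $\el{\hcv{E+[\ihsv{t}{m}]}}{i}=n$, equivalently $\el{C}{i}=m$ — performs exactly the deallocation-with-permutation prescribed by $(\ldmd)$: by Definition~\ref{def:configuration} it disposes the top chronicle and copies the old top current value into register $i$, which realises the transposition $(n\ \phl)$ applied by $(\ldmd)$ to the schematic word, its freshness condition and its post-context (and the transposition $(m\ \star)$ on current values in the CTXC post-contexts). The register count returns to $\length{C}=\card{q_t}$, closing the cardinality induction.

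The hard part will be this last segment of the binder case: verifying that the register-level bookkeeping of $\pclose{i}$ coincides \emph{on the nose} with the name-level permutation of the language calculus, and in particular separating the regimes $m=n$ (rule $(\cdmd_=)$) and $m\neq n$ (rule $(\cdmd_{\not=})$). In the latter, $n$ is ``leaked'' by replacing $m$ in all subsequent free positions, and I would have to check that the current-value swap effected by $\pclose{i}$ propagates this leak precisely as $(m\ \star)$ does in the CTXC post-context and $(n\ \phl)$ does in $(\ldmd)$, while disposing only the top, deallocated chronicle and leaving register $i$'s chronicle — hence its relative-global-freshness obligations — intact, in accordance with the informal reading that $\pclose{i}$ permutes contents but not chronicles. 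Getting this alignment exactly right, together with confirming that no spurious $\emptystr$-transitions arise at $q_s$ or $q_t$, is the delicate point; the remaining verifications are routine inductive bookkeeping.
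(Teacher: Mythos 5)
Your proposal is correct and follows essentially the same route as the paper's own proof: for $\pexp{n}{\ne}{m}$ you allocate a locally fresh name via the $\star$-transition (matching $\phl \lfresh \hcv{C}$ in rule ($\ldmd$)), run $\hdof{\ne}$ in-contexts $C+[n]$ and $E+[\ihsv{t}{m}]$ by the induction hypothesis, and close with the $\pclose{i}$-transition at the register $i$ with $m = \el{C}{i}$, whose existence both you and the paper justify by rule ($\cdmd_{\not=}$) and the scoping constraint on closing brackets. The paper in fact treats only this binder case and dismisses the star case with ``the proof of the other cases is similar,'' so your explicit factoring of accepting runs of $\hdof{\ne^\ast}$ into $h$ passages through $\hdof{\ne}$ (reducing to the concatenation proposition via the ($\cks$) unfolding, with the chronicle persisting across the $\emptystr$-back-loops) supplies detail the paper omits, and the ``delicate point'' you flag -- that the register-level effect of $\pclose{i}$ realises the name-level transposition of ($\ldmd$) on the nose -- is likewise left implicit in the paper's closing sentence.
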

\begin{proof}
 Let a language in-contexts
 $\etw{C+[n]}{\langof{(\ne)}}{E+[\ihsv{t}{m}]}$ be recognised by the
 CDA$^\sharp$ in-contexts $\etw{C+[n]}{\hdof{\ne}}{E+[\ihsv{t}{m}]}$,
 with the schematic word for $\langof{(\ne)}$ being $\phw{\bullet_1
   \cdots \bullet_k}{\phi}$.
 By the rules of \lngc, a schematic word for
 $\langof{\pexp{n}{\ne}{m}}$ is

 \begin{equation}\label{eq:sw}
   \phw{\left(n\ \phl\right) \cdot
   \bullet_1 \cdots \bullet_k}{\phl \lfresh \hcv{C},
   \left(n\ \phl\right) \cdot \phi}
 \end{equation}

 The corresponding CDA$^\sharp$ in-contexts can store any name $\star$
 locally fresh wrt $\hcv{C}$.
 By induction hypothesis, for each instance $\left(n\ \star\right)
 \cdot \bullet_1 \cdots \bullet_k$ of~\eqref{eq:sw} such that $\star
 \lfresh \hcv{C}$ and $\left(n\ \star\right) \circ \phi$ holds, it is
 the case that a state of $\hdof{\pexp{n}{\ne}{m}}$ corresponding to a
 final state of $\hdof{\ne}$ is reached (now on the $1$-st layer of
 $\hdof{\pexp{n}{\ne}{m}}$).
 To help the intuition, consider the following figure
  \begin{figure*}[h!]\centering
   \includegraphics[scale=.25]{hdofdmd}
 \end{figure*}
  (where $q_{f_1}$ and $q_{f_2}$ are final states of $\hdof \ne$).

 Now, to remove an appropriate current value and the last chronicle from the
 final extant chronicle, we have to choose the corresponding $\pclose{i}$ for
 some $i$.
 Thanks to the rules of \ctxc, we can choose $i$ such that $m = \el C
 i$ (recall rule ($\cdmd_{\not=}$) in \figref{fig:contextcalculus} and
 note that the existence of $i$ is guaranteed by our constrains:
 $\pexp{n}{\ne}{m}$ must appear in a scope of $m$).
 Hence, the automaton stops with accounting of the corresponding
 permutations on $\rangle^m_n$.

 The proof of the other cases is similar.
\qed
\end{proof}

\subsection{Each \cda\ has an NRE}
\thmref{thm:noma2nre} shows that each language accepted by a \cda\
can be described by an NRE.
\begin{theorem}
 \label{thm:noma2nre}
 Each language accepted by an CDA$^\sharp$, CA$^\sharp$, DA$^\sharp$ or
 A$^\sharp$ is nominal regular. That is, there exists an up-NRE, u-NRE, p-NRE or
 b-NRE which generates the same language.
\end{theorem}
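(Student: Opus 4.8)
The plan is to prove this converse Kleene direction by induction on the maximal register depth $d = \max_{q} \card q$ of the given automaton, peeling off the topmost layer at each step and encapsulating it inside a single binder. As in Section~\ref{sec:kleene}, I work with automata and expressions \emph{in-contexts}, so that the names currently in scope are recorded by a pre-context $C$ and the translation of a register index into an actual bound name is unambiguous. To make the induction go through I first generalise \cda\ to \emph{generalised} \cda, whose edges may be labelled by arbitrary NREs over the symbols in scope; a \cda\ is the special case where every edge carries a single symbol, and conversely a generalised \cda\ with no intermediate states between initial and final state \emph{is} an NRE. This is the device used in the classical state-elimination proof of Kleene's theorem, which I invoke as a black box on any fragment of the automaton that contains no allocation.

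For the base case $d = 0$ one has $\reg q = \emptyset$ for every state, so the only available labels are letters of $\letters$ and $\emptystr$: there are no $\star$-, $\pclose i$-, $\gft i$- nor register-transitions. Hence $\mathcal H$ is an ordinary finite automaton over $\letters$, and classical Kleene yields a name-free regular expression, which is simultaneously a b-NRE, u-NRE, p-NRE and up-NRE for the same language. For the inductive step $d \ge 1$, let $T = \{q : \card q = d\}$ be the set of top-level states. The crucial observation is that \emph{within $T$ there is no allocation}: a $\star$-transition from a top state would create register $d+1$ and exceed $d$, while every $\pclose i$-transition leaves $T$ for level $d-1$. Consequently the transitions internal to $T$ carry only letters, register-reads $i \in \{1,\ldots,d\}$, relative-global-fresh labels $\gft i$, and $\emptystr$, all over a \emph{fixed} register layout. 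Thus the $T$-fragment is an ordinary NFA over the finite symbol set $\letters \cup \{1,\ldots,d\} \cup \{\gft 1,\ldots,\gft d\}$, and classical Kleene computes, for every entry state $q_{\mathrm{in}} \in T$ (reached by a $\star$ from some level-$(d-1)$ state $p$) and every exit transition $q_{\mathrm{out}} \xrightarrow{\pclose i} p'$, a classical regular expression describing all internal $T$-paths from $q_{\mathrm{in}}$ to $q_{\mathrm{out}}$. Reading each register-symbol $j$ as the name in scope for register $j$ (given by the pre-context), the top symbol $d$ as a freshly bound name $n$, and each $\gft j$ as $\cof{n_j}$ (in particular $\gft d$ as $\cof n$), this expression becomes an NRE $\ne'$; the whole excursion is captured by the binder $\pexp{n}{\ne'}{m}$, where the superscript $m$ is the name in scope for the register index $i$ targeted by the closing $\pclose i$, determined exactly as in rules $(\cdmd_=)$/$(\cdmd_{\neq})$ of \figref{fig:contextcalculus}. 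Installing one new edge $p \xrightarrow{\pexp{n}{\ne'}{m}} p'$ for each such entry/exit pair and then deleting all of $T$ together with the incident $\star$- and $\pclose i$-transitions produces a generalised \cda\ of maximal depth $d-1$ with the same language, to which the induction hypothesis applies.

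The class bookkeeping is read off the same peeling. If $\mathcal H$ is a CA$^\sharp$ then only $\pclose d$ is available, so every exit closes the top register, forcing $m = n$ and hence only plain binders $\npexp{n}{\ne'}$: the resulting expression is a u-NRE. If $\mathcal H$ is a DA$^\sharp$ then only $\gft d$ occurs, so underlines appear only on the freshly bound name, while genuine transpositions may still arise from $\pclose i$ with $i<d$: the expression is a p-NRE. If $\mathcal H$ is an A$^\sharp$ both restrictions hold and we obtain a b-NRE. These invariants are preserved because the restrictions are stated layer-by-layer and we remove exactly one layer per step.

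The main obstacle is the verification that the binder $\pexp{n}{\ne'}{m}$ manufactured from a top-level excursion denotes \emph{exactly} the family of name-sequences the excursion produces. This is the inverse of the soundness statements of Section~\ref{sec:kleene} (Proposition~\ref{prop:base} and its inductive companions) and requires a careful invariant relating a run's extant chronicle $E$ to the in-context triple $\etw{C}{\ne'}{E}$ generated during peeling: one must match the allocation clause for $\star$, the permute-and-dispose clause for $\pclose i$, and the chronicle-freshness clause for $\gft i$ in \defref{def:configuration} against the $(\cdmd)$ rules of CTXC and the $(\ldmd)$, $(\cof{n})$ and $(\lcirc)$ rules of LNGC. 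The subtle point is relative global freshness: the top chronicle accumulates every name allocated inside the excursion, and one has to check that this accumulation coincides with the growth of the $\gfresh^i$-conditions in rule $(\lcirc)$, so that a name read by a $\gft j$-transition is recorded as fresh against precisely the names LNGC would require. Once this invariant is established, the language equality $\langof(\ne) = \mathcal L_{\mathcal H}$ follows from the two inductions above.
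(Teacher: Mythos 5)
Your proposal is essentially the paper's own proof: both arguments peel the automaton layer by layer, exploit the fact that inside a fixed layer (where no $\star$- or $\pclose{i}$-transitions can occur) a CDA$^\sharp$ is a classical finite automaton over the finite alphabet $\letters \cup \reg{q} \cup \{\gft{i} \st i \in \reg{q}\}$, run the classical Kleene construction there, translate register indices to canonically chosen bound names and $\gft{i}$-labels to underlined names, wrap each $\star$/$\pclose{i}$ excursion in a binder $\pexp{n}{\ne}{m}$, and recurse on layers --- the only cosmetic difference being that the paper first determinises (per-layer powerset plus a second powerset across $\star$- and $\pclose{i}$-transitions) and then ``calculates paths'', while you use state elimination on NRE-labelled edges, and the chronicle-versus-LNGC invariant you flag at the end is precisely the verification the paper leaves implicit. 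The one substantive wrinkle is your subclass bookkeeping for DA$^\sharp$: the claim that an expression in which ``underlines appear only on the freshly bound name'' is a p-NRE is inconsistent as written (p-NREs admit no underlines at all), though this is inherited from Definition~\ref{def:cda} itself, which forbids $\gft{i}$ only for $i \in \{1,\ldots,\card{q}-1\}$ and hence still admits $\gft{\card{q}}$-transitions whose languages (e.g.\ $\simples$) genuinely require underlines.
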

 \begin{proof}
  This is almost the same as the proof in \cite{KurzST12tcs}. The only
  difference is that we have $\gft{i}$ transitions. For those transitions, we
  just take the corresponding names with underlines.

    Let $\mathcal{H}$ be a CDA$^\sharp$.
    Since each layer, if we ignore $\star$-transitions and
    $\pclose{i}$-transitions, is a classical automaton.
    Hence, by the well known method ($\emptystr$-closure and the powerset
    construction), we make each layer deterministic.
    For all $\star$-transitions and $\pclose{i}$-transitions, we make another
    powerset construction to connect each layer as follows: for each state $Q_j
    = \left\{q^j_1,\ldots,q^j_k\right\}$ on the $j$-th layer, remember that each
    state is a subset of states because of the first powerset construction, we
    let
    \begin{align*}
     \trans'(Q_j,\star) &\mmdef \left\{q^{j+1} \sep \exists q^j \in Q_j.\
     q^{j+1} \in \trans(q^j,\star)\right\}\\
     \trans'(Q_j,\pclose{i}) &\mmdef \left\{q^{j-1} \sep \exists q^j \in Q_j.\
     q^{j-1} \in \trans(q^j,\pclose{i})\right\}
    \end{align*}
    where $\trans$ and $\trans'$ are transitions after the first powerset
    construction and the second one, respectively.
    So the CDA$^\sharp$ is now deterministic.

    For the obtained automaton, as in the case of the classical language theory,
    we calculate paths inductively.
    But, in our case, the inductive steps are also separated into two steps.
    Namely, the first step is on the highest layer of the automaton, which is
    almost the same as the classical method.
    The only difference is that in our automaton, names are labeled by natural
    numbers.
    After that, we make another induction on layers, see also
    \cite{KurzST12fossacs}).
    Notice that, to bind names, we use a canonical naming,
    i.e.~$\left[n_1,\ldots,n_h\right]$ ($n_i$ is allocated to the label $i$).
    Hence the translation from accepted paths to expressions are
    straightforward.

    Finally, the definition of subclasses of CDA$^\sharp$ tells their
    corresponding types of nominal regular expressions (CDA$^\sharp$,
    CA$^\sharp$, DA$^\sharp$ and A$^\sharp$ corresponding to up-NREs, u-NREs,
    p-NREs and NREs, respectively).
    \qed
 \end{proof}
 \begin{corollary}
  Nominal regular languages are closed under union, concatenation and Kleene
  star.
 \end{corollary}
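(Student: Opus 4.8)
The plan is to read the corollary off the two Kleene theorems together with the way the interpretation $\langof$ was defined. By definition a language is nominal regular exactly when it is $\langof(\ne)$ for some \emph{closed} up-NRE $\ne$, so it suffices to show that closed up-NREs are closed under the three formation operators and that $\langof$ sends $+$, $\circ$ and $\_^\ast$ to set union, concatenation and Kleene star, respectively. Closure of the syntax is immediate from the grammar of up-NREs together with the free-name bookkeeping: if $\ne_1,\ne_2$ are closed then so are $\ne_1+\ne_2$, $\ne_1\circ\ne_2$ and $\ne_1^\ast$.

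First I would treat $+$. The context calculus visits $\etw{[]}{\ne_1+\ne_2}{[]}$ only through the rules $(\cplus_1)$ and $(\cplus_2)$, so the proof trees for $\ne_1+\ne_2$ are exactly the disjoint union of those for $\ne_1$ and for $\ne_2$; since $\langof(\ne)$ is obtained by taking the union of the schematic-word languages over all such trees, $\langof(\ne_1+\ne_2)=\langof(\ne_1)\cup\langof(\ne_2)$. Next, concatenation. Here I would exploit that the expressions are closed, so that the top-level application of $(\ccirc)$ produces the two premises $\etw{[]}{\ne_1}{\nec{[]}}$ and $\etw{[]}{\ne_2}{[]}$, with $\nec{[]}$ the empty extant chronicle $[]$; moreover, running the language calculus on a closed expression returns to the empty post-context $E_1=[]$, because every opened register is eventually deallocated. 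Consequently the transfer permutation $\pi_{\left[{C}{\rhd}{E_1}\right]}$ in rule $(\lcirc)$ is the identity and $(\lcirc)$ degenerates to plain juxtaposition of schematic words together with the plain conjunction of their freshness conditions. Hence $\langof(\ne_1\circ\ne_2)=\langof(\ne_1)\cdot\langof(\ne_2)$. The Kleene-star case then follows from $(\cks)$: unfolding gives $\langof(\ne_1^\ast)=\bigcup_{h\ge 0}\langof(\ne_1^{\circ h})$, and iterating the concatenation identity yields $\bigcup_{h\ge 0}(\langof(\ne_1))^h=(\langof(\ne_1))^\ast$.

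The only delicate point — and the reason the closedness hypothesis matters — is the concatenation clause: over an infinite alphabet, gluing two languages generally requires the chronicle and permutation bookkeeping carried by $(\lcirc)$, and it is precisely the emptiness of the extant chronicle at the junction (guaranteed, for closed expressions, by the accepting condition $E=[]$ of Definition~\ref{def:configuration}) that collapses this bookkeeping to ordinary concatenation. I expect verifying this collapse to be the main obstacle, since one must check that no residual freshness constraint leaks from the first factor into the second.

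Alternatively, one may argue entirely on automata: by Theorem~\ref{thm:nre2noma} take CDA$^\sharp$ recognising $L_1$ and $L_2$, combine them with the union, sequencing and star constructions of Fig.~\ref{fig:indsteps} specialised to the empty context, observe that the result is again a CDA$^\sharp$, and apply Theorem~\ref{thm:noma2nre} to recover a generating up-NRE. The same junction-is-empty observation makes the sequencing construction accept exactly $L_1\cdot L_2$, so the two routes coincide.
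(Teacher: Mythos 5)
Your proof is correct and takes essentially the same route as the paper, which states the corollary as an immediate consequence of the Kleene theorems together with the fact that up-NREs are syntactically closed under $+$, $\circ$, $\_^\ast$ and that $\langof$ interprets these as union, concatenation and Kleene star. The one point the paper leaves implicit --- that for closed factors the post-context at the junction is empty, so rule ($\lcirc$) collapses to plain juxtaposition and no freshness constraint leaks across --- is exactly the detail you verify, and your verification is sound.
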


 For the languages with explicit binders considered in~\cite{KurzST12fossacs} it
 is possible to define a notion of \emph{resource-sensitive complementation} and
 prove that such languages are closed under resource-sensitive complementation.
 This is not possible when considering languages over infinite alphabets
 \emph{without} explicit binders.

 As a corollary of our theory, we describe how to define nominal regular
 expressions for fresh-register automata and register automata.
 Consider the following subclass of up-NREs (that we call \emph{first-degree}
 up-NREs):
 \[
 \fne ::= 1 \sep 0 \sep n_i \sep \underline{n_i} \sep s \sep
 \pexp{n_{h+1}}{n_{h+1}}{n_{i}} \sep \fne + \fne \sep \fne \circ \fne \sep
 \fne^\ast
 \]
 where $n_1, \ldots, n_{h+1}$ are pairwise distinct names and $s \in \letters$.
 Furthermore, an \emph{h-prefixed first-degree} up-NREs is a first-degree up-NRE
 of the form $\npexp{n_1}{\cdots\npexp{n_h}{\fne}\cdots}$ where $\fne$ is a
 binder-free first-degree up-NRE.
 Then we can prove the following result:
 \begin{theorem}
  For every FRA (RA), there is an up-NRE (p-NRE) which generates the accepted
  language. More precisely, the up-NRE (p-NRE) is \emph{h-prefixed
  first-degree}. Hence every FRA (RA) is expressible by an h-prefixed
  first-degree up-NRE (p-NRE) $\npexp{n_1}{\cdots\npexp{n_h}{\fne}\cdots}$.
 \end{theorem}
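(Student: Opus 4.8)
The plan is to obtain the desired up-NRE (resp.\ p-NRE) in three moves: first read off a \emph{classical} regular expression from the control graph of the given FRA (resp.\ RA), then translate it fragment-by-fragment into a first-degree NRE, and finally prefix it with the $h$ binders $\npexp{n_1}{\cdots\npexp{n_h}{\,\cdot\,}\cdots}$. Concretely, I would forget the register data of the FRA $\mathcal A$ and regard it as an ordinary finite automaton over the finite \emph{action alphabet} consisting of the letters $s\in\letters$ together with the symbols $\text{known-}i$, $\text{lfresh-}i$ and $\text{gfresh-}i$ for $i\in\{1,\dots,h\}$ (read a name equal to register $i$; read a name locally fresh for the current registers and store it in $i$; read a name globally fresh for the whole history and store it in $i$). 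Classical Kleene then yields a regular expression $r$ over this alphabet whose language is exactly the set of accepting action sequences of $\mathcal A$.

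I would then apply the homomorphism sending $s\mapsto s$, $\text{known-}i\mapsto n_i$, $\text{gfresh-}i\mapsto\cof{n_i}$, $\text{lfresh-}i\mapsto\pexp{n_{h+1}}{n_{h+1}}{n_i}$ and commuting with $+$, $\circ$ and $\_^\ast$, and set $\ne=\npexp{n_1}{\cdots\npexp{n_h}{r[\cdot]}\cdots}$. By inspection $\ne$ is $h$-prefixed first-degree in the sense of the grammar for $\fne$. Feeding $\ne$ to the construction of \thmref{thm:nre2noma} produces a \cda\ $\hdof{\ne}$ of a very rigid shape: a staircase of $h$ initial $\star$-transitions allocates registers $1,\dots,h$; the body runs the action automaton \emph{entirely on layer $h$}, with $\text{known-}i$ becoming an $i$-labelled edge, $\text{gfresh-}i$ a $\gft{i}$-edge, and each $\text{lfresh-}i$ a transient $\star$/read/$\pclose$ excursion to layer $h+1$ that overwrites register $i$ with a locally fresh name; and finally $h$ $\pclose$-transitions deallocate the registers. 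This automaton manifestly simulates $\mathcal A$ at the level of control, and its register contents mirror those of $\mathcal A$.

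The heart of the argument, and the step I expect to be the main obstacle, is the \emph{freshness-matching lemma}: the \emph{relative} global freshness enforced by a $\gft{i}$-transition (freshness with respect to the chronicle of register $i$ only) must coincide, at the level of the accepted language, with the \emph{absolute} global freshness of the FRA (freshness with respect to the whole history). The key invariant is that in the $h$-prefixed shape no register is ever deallocated inside the body, so every name read — by a $\star$, by a $\gft{j}$, or by the read inside an lfresh excursion — is appended to \emph{all} chronicles; hence at each point register $i$'s chronicle contains a set of initial dummy names together with all distinct names read so far. Since the dummies are introduced by $\star$-transitions and never consume input, they never occur in the accepted word, and because the languages concerned are closed under permutation (\propref{prop:alpha}) and every word is finite, for any candidate word we may choose these dummies outside it. Under this choice the $\gft{i}$ condition on a word-name reduces precisely to ``distinct from all names read so far'', matching FRA global freshness, and the companion reduction for $\star$ matches FRA local freshness.

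Putting these together, $\hdof{\ne}$ and $\mathcal A$ have the same runs up to this choice of dummies, so $\langof(\ne)$ equals the language of $\mathcal A$, and by construction $\ne$ is $h$-prefixed first-degree. For an RA the action alphabet omits $\text{gfresh-}i$, the homomorphism never emits an underline, and $\ne$ is a permutation-only p-NRE; the freshness-matching lemma is then trivial, since only the local-freshness half is used. Alternatively one could bypass $\hdof{\ne}$ and argue directly that the schematic-word semantics produced by the language calculus of \secref{sec:exp2lang} for each fragment agrees with the corresponding FRA step — the concatenation rule $(\lcirc)$ supplying exactly the chronicle-append needed in the $\gft{i}$ case — after which \thmref{thm:noma2nre} guarantees that the extracted expression retains the $h$-prefixed first-degree shape.
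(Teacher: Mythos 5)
You should first be aware that the paper itself offers no proof of this theorem: it is an extended abstract, the result is announced with ``Then we can prove the following result'' and left unproved, so there is no official argument to compare against. On its own terms, your strategy is the natural reconstruction and clearly the intended one --- the grammar of $\fne$ mirrors exactly the three FRA transition types (known-$i$ $\mapsto n_i$, globally fresh $\mapsto \cof{n_i}$, locally fresh $\mapsto \pexp{n_{h+1}}{n_{h+1}}{n_i}$), and running classical Kleene on the action-labelled control graph before substituting these fragments and prefixing the $h$ binders is the right skeleton. Your identification of the crucial point, namely that relative global freshness against chronicle $i$ must collapse to absolute global freshness because registers $1,\dots,h$ are never deallocated in the body (so every freshly read name is appended to every live chronicle), is also correct, modulo the small inaccuracy that a known-$i$ read does not extend chronicles (it does not need to, as the name is already recorded) and that chronicle $i$ only contains the dummies allocated at or after register $i$.

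There is, however, a genuine gap where the control-level and data-level descriptions are glued together: the classical regular expression $r$ enumerates \emph{all} action sequences of the control graph, including ones that the FRA cannot realise, and your homomorphism makes those sequences denote nonempty languages. Concretely, take an FRA with one register, initially empty, a single state that is both initial and final, and a self-loop labelled known-$1$. Its language is $\{\emptystr\}$, since the known-$1$ transition can never fire; your construction yields $r = (\text{known-}1)^{\ast}$ and hence the up-NRE $\npexp{n_1}{n_1^{\ast}}$, whose language is $\{m^{k} \mid m \in \names,\ k \geq 0\}$. The problem is a read-before-write: in the NRE the register always holds a value (the dummy), so $n_i$ can be consumed at any time, whereas in the FRA an empty register blocks. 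The fix is routine but must be stated: before applying Kleene, refine the state space by the (finite) record of which registers have been written, and erase unrealisable transitions. If instead you adopt the convention that all registers are initially filled, the same example shows your claim that ``the dummies never occur in the accepted word'' is false (a known-$i$ read prior to any write reads the dummy), and one must additionally observe that the theorem can only hold for FRAs whose languages are permutation-closed --- by \propref{prop:alpha} every closed NRE denotes a permutation-closed language --- so the fixed initial assignment has to be abstracted into the existentially quantified binder names. Either repair is short, but without one of them the freshness-matching lemma is applied to a simulation that does not actually hold, and the constructed expression can accept strictly more than the automaton.
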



\section{Conclusion}

We studied different types of automata and languages over
  infinite alphabets and gave Kleene type theorems characterising them
  by regular expressions.  On the one hand, this extends the work on
  automata over infinite alphabets begun in \cite{KaminskiF94}, on the
  other hand the automata we propose are variations on the HD-automata
  of \cite{MontanariP97,MontanariP05} (in particular, our
transitions allocating fresh-names and  permuting names are borrowed
and adapted from HDA).
 As HDA are automata
  internal in the category of named sets, this also means, see
  \cite{GadducciMM06},  that our work can be seen in the context of
  nominal sets \cite{GabbayP99} and the more recent line of research
  on nominal automata \cite{BojanczykKL11}.

  Regular expressions for register automata were investigated in
  \cite{KaminskiT06,KaminskiZ10}. A difference is that the NREs of
  this paper have primitives for allocation and deallocation and
  permutations. Moreover, we also introduced NREs for relative global
  freshness.

The novel notion of relative global freshness is closely
related to the recent \cite{GrigoreDPT13,TzevelekosG13}. Whereas we
are interested in choreographies, \cite{GrigoreDPT13} use register
automata to monitor the execution of Java programs that generate a
potentially unbounded number of names, albeit without using global
freshness or histories.
The history register automata (HRA) of \cite{TzevelekosG13} share with
\cda\ the ability to "forget" names since reset transitions can modify
histories.
We observe that~\cite{TzevelekosG13} makes no attempt at finding a
class of corresponding regular expressions. A detailed comparison as
well as the definition of NREs for HRA have to be left as future work.


\bibliographystyle{abbrv}
\bibliography{stringdef-short,all}

\appendix
\newpage
\section{An example}
\label{sec:complex}

As a more complex example, we consider the language and the automaton for the
following up-NRE:
\[
 \npexp{n}{n \npexp{m}{m \pexp{l}{l}{m} m \npexp{l}{\cof{n} l \cof{m}} } }
\]

By \ctxc\ in Fig.~\ref{fig:contextcalculus}, we obtain the proof tree in
Fig.~\ref{ex:apndctxt}.
In the tree, it is not necessary that $c$ is different from $d$.
It is also not important ow to choose their names.
The only thing we have to care is to keep pre-contexts $C$ \emph{pairwise
distinct}.
Note that, in Fig.~\ref{ex:apndctxt}, Fig.~\ref{ex:apndlngc1} and
Fig.~\ref{ex:apndlngc2}, capital alphabets A, B, C, D, E, F are added to
elementary NREs in-contexts, and round-bracketed numbers (1) - (10) are added to
point out NREs on each inductive step.
Also, double-dashed lines are used for simplifications (in particular, to remove
repeating names in chronicles).

\begin{figure}[p]
{\small
 \begin{minipage}{\linewidth}
  \rootAtTop
  \AxiomC{$\mathcal{C}_1$}
  \AxiomC{B}
  \noLine
  \UnaryInfC{$\vdots$}
  \noLine
  \UnaryInfC{$\etw{[a, b]}{b}{[\ihsv{ab}{a}, \ihsv{b}{b}]}$}
  \AxiomC{C}
  \noLine
  \UnaryInfC{$\vdots$}
  \noLine
  \UnaryInfC{$\etw{[a, b, c]}{c}{[\ihsv{abc}{a}, \ihsv{bc}{c}, \ihsv{cb}{b}]}$}
  \RightLabel{($\cdmd_{\not=}$)}
  \UnaryInfC{(5): $\etw{[a, b]}{\pexp{l}{l}{b}}{[\ihsv{ab}{a}, \ihsv{b}{b}]}$}
  \RightLabel{($\ccirc$)}
  \BinaryInfC{(6): $\etw{[a, b]}{b \pexp{l}{l}{b}}{[\ihsv{ab}{a},
  \ihsv{b}{b}]}$}
  \AxiomC{$\mathcal{C}_2$}
  \RightLabel{($\ccirc$)}
  \BinaryInfC{(7): $\etw{[a, b]}{b \pexp{l}{l}{b} b \npexp{l}{\cof{a} l
  \cof{b}}}{[\ihsv{ab}{a}, \ihsv{b}{b}]}$}
  \RightLabel{($\cdmd_=$)}
  \UnaryInfC{(8): $\etw{[a]}{\npexp{m}{m \pexp{l}{l}{m} m \npexp{l}{\cof{a} l
  \cof{m}}}}{[\ihsv{a}{a}]}$}
  \RightLabel{($\ccirc$)}
  \BinaryInfC{(9): $\etw{[a]}{a \npexp{m}{m \pexp{l}{l}{m} m \npexp{l}{\cof{a} l
  \cof{m}}}}{[\ihsv{a}{a}]}$}
  \RightLabel{($\cdmd_=$)}
  \UnaryInfC{(10): $\etw{[]}{\npexp{n}{n \npexp{m}{m \pexp{l}{l}{m} m
  \npexp{l}{\cof{n} l \cof{m}}}}}{[]}$}
  \DisplayProof

  \vspace*{10pt}

  \rootAtTop
  \AxiomC{A}
  \noLine
  \UnaryInfC{$\vdots$}
  \noLine
  \UnaryInfC{$\etw{[a]}{a}{[\ihsv{a}{a}]}$}
  \dashedLine
  \UnaryInfC{$\mathcal{C}_1$}
  \DisplayProof

 \vspace*{10pt}

  \rootAtTop
  \AxiomC{B}
  \noLine
  \UnaryInfC{$\vdots$}
  \noLine
  \UnaryInfC{$\etw{[a, b]}{b}{[\ihsv{ab}{a}, \ihsv{b}{b}]}$}
  \AxiomC{$\mathcal{C}_3$}
  \AxiomC{F}
  \noLine
  \UnaryInfC{$\vdots$}
  \noLine
  \UnaryInfC{$\etw{[a, b, d]}{\cof{b}}{[\ihsv{abd}{a}, \ihsv{bd}{b},
  \ihsv{d}{d}]}$}
  \RightLabel{($\ccirc$)}
  \BinaryInfC{(2): $\etw{[a, b, d]}{\cof{a} d \cof{b}}{[\ihsv{abd}{a},
  \ihsv{bd}{b}, \ihsv{d}{d}]}$}
  \RightLabel{($\cdmd_=$)}
  \UnaryInfC{(3): $\etw{[a, b]}{\npexp{l}{\cof{a} l \cof{b}}}{[\ihsv{ab}{a},
  \ihsv{b}{b}]}$}
  \RightLabel{($\ccirc$)}
  \BinaryInfC{(4): $\etw{[a, b]}{b \npexp{l}{\cof{a} l \cof{b}}}{[\ihsv{ab}{a},
  \ihsv{b}{b}]}$}
  \dashedLine
  \UnaryInfC{$\mathcal{C}_2$}
  \DisplayProof

  \vspace*{10pt}

  \rootAtTop
  \AxiomC{D}
  \noLine
  \UnaryInfC{$\vdots$}
  \noLine
  \UnaryInfC{$\etw{[a, b, d]}{\cof{a}}{[\ihsv{abd}{a}, \ihsv{bd}{b},
  \ihsv{d}{d}]}$}
  \AxiomC{E}
  \noLine
  \UnaryInfC{$\vdots$}
  \noLine
  \UnaryInfC{$\etw{[a, b, d]}{d}{[\ihsv{abd}{a}, \ihsv{bd}{b}, \ihsv{d}{d}]}$}
  \RightLabel{($\ccirc$)}
  \BinaryInfC{(1): $\etw{[a, b, d]}{\cof{a}d}{[\ihsv{abd}{a}, \ihsv{bd}{b},
  \ihsv{d}{d}]}$}
  \dashedLine
  \UnaryInfC{$\mathcal{C}_3$}
  \DisplayProof
 \end{minipage}
 }
 \caption{\ctxc\ for $\npexp{n}{n \npexp{m}{m \pexp{l}{l}{m} m \npexp{l}{\cof{n}
 l \cof{m}} } }$\label{ex:apndctxt}}
\end{figure}

 For the derivation tree, by \lngc\ in Fig.~\ref{fig:languagecalculus}, we
 compute the schematic word from the backward direction (i.e.~from leaves to
 the root) as in Fig.~\ref{ex:apndlngc1} and Fig.~\ref{ex:apndlngc2}.
 The schematic word we obtain is
 \[
  \phw{\phl \phl_5 \phl_4 \phl_4 \phl_1 \phl_3\phl_2}{
  \left(
  \begin{array}{c}
   \phl \not= \phl_1, \phl \not= \phl_3, \phl \not= \phl_4, \phl \not= \phl_5,
    \phl_1 \not= \phl_2,\\
   \phl_1 \not= \phl_3, \phl_1 \not= \phl_4, \phl_1 \not= \phl_5, \phl_2 \not=
    \phl_3,\\
   \phl_2 \not= \phl_4, \phl_2 \not= \phl_5, \phl_3 \not= \phl_4, \phl_4 \not=
    \phl_5
  \end{array}
  \right)}
 \]
 so the nominal regular language is
 \begin{align*}
  \bigl\{
  abccdef \in \names^\ast \mid a \not= b, &a \not= c, a \not= d, a \not= e, b
  \not= c, b \not= d, b \not= f,\\
  &c \not= d, c \not= e, c \not= f, d \not= e, d \not= f, e \not= f
  \bigr\}.
 \end{align*}
 Notice that $a$ and $b$ can appear as $f$ and $e$, respectively, in this
 language.

\begin{figure}[p]
{\small
 \vspace*{50pt}
\begin{sideways}
 \begin{minipage}{\textheight}
  \AxiomC{D}
  \noLine
  \UnaryInfC{$\vdots$}
  \noLine
  \UnaryInfC{$\etw{[a, b, d]}{\cof{a}}{[\ihsv{abd}{a}, \ihsv{bd}{b},
  \ihsv{d}{d}]}$}
  \RightLabel{($\cof{a}$)}
  \UnaryInfC{$\etw{[a, b, d]}{\phw{\phl_1}{\phl_1 \lfresh abd, \phl_1 \gfresh^1
  abd}}{[\ihsv{abd \phl_1}{\phl_1}, \ihsv{bd \phl_1}{b}, \ihsv{d \phl_1}{d}]}$}
  \AxiomC{E}
  \noLine
  \UnaryInfC{$\vdots$}
  \noLine
  \UnaryInfC{$\etw{[a, b, d]}{d}{[\ihsv{abd}{a}, \ihsv{bd}{b}, \ihsv{d}{d}]}$}
  \RightLabel{($d$)}
  \UnaryInfC{$\etw{[a, b, d]}{\phw{d}{}}{[\ihsv{abd}{a}, \ihsv{bd}{b},
  \ihsv{d}{d}]}$}
  \RightLabel{($\lcirc$)}
  \BinaryInfC{$\etw{[a, b, d]}{\phw{\phl_1 d}{\phl_1 \lfresh abd, \phl_1
  \gfresh^1 abd}}{[\ihsv{abd \phl_1 \phl_1 bd}{\phl_1}, \ihsv{b}{bd \phl_1 bd},
  \ihsv{d \phl_1 d}{d}]}$}
  \dashedLine
  \doubleLine
  \UnaryInfC{$\etw{[a, b, d]}{\phw{\phl_1 d}{\phl_1 \lfresh abd, \phl_1
  \gfresh^1 abd}}{[\ihsv{abd \phl_1}{\phl_1}, \ihsv{bd \phl_1}{b}, \ihsv{d
  \phl_1}{d}]}$}
  \dashedLine
  \UnaryInfC{$\mathcal{L}_3$}
  \DisplayProof

  \vspace*{10pt}

  \AxiomC{B}
  \noLine
  \UnaryInfC{$\vdots$}
  \noLine
  \UnaryInfC{$\etw{[a, b]}{b}{[\ihsv{ab}{a}, \ihsv{b}{b}]}$}
  \RightLabel{($b$)}
  \UnaryInfC{$\etw{[a, b]}{\phw{b}{}}{[\ihsv{ab}{a}, \ihsv{b}{b}]}$}
  \AxiomC{$\mathcal{L}_3$}
  \AxiomC{F}
  \noLine
  \UnaryInfC{$\vdots$}
  \noLine
  \UnaryInfC{$\etw{[a, b, d]}{\cof{b}}{[\ihsv{abd}{a}, \ihsv{bd}{b},
  \ihsv{d}{d}]}$}
  \RightLabel{($\cof{b}$)}
  \UnaryInfC{$\etw{[a, b, d]}{\phw{\phl_2}{\phl_2 \lfresh abd, \phl_2 \gfresh^2
  bd}}{[\ihsv{abd \phl_2}{a}, \ihsv{bd \phl_2}{\phl_2}, \ihsv{d \phl_2}{d}]}$}
  \RightLabel{($\lcirc$)}
  \BinaryInfC{$\etw{[a, b, d]}{\phw{\phl_1 d \phl_2}{\phl_1 \lfresh abd, \phl_1
  \gfresh^1 abd, \phl_2 \lfresh \phl_1 bd, \phl_2 \gfresh^2 bd \phl_1
  bdbd}}{[\ihsv{abd \phl_1 \phl_1 bd \phl_2}{\phl_1}, \ihsv{bd \phl_1 bd
  \phl_2}{\phl_2}, \ihsv{d \phl_1 d \phl_2}{d}]}$}
  \doubleLine
  \dashedLine
  \UnaryInfC{$\etw{[a, b, d]}{\phw{\phl_1 d \phl_2}{\phl_1 \lfresh abd, \phl_1
  \gfresh^1 abd, \phl_2 \lfresh \phl_1 bd, \phl_2 \gfresh^2 bd
  \phl_1}}{[\ihsv{abd \phl_1 \phl_2}{\phl_1}, \ihsv{bd \phl_1 \phl_2}{\phl_2},
  \ihsv{d \phl_1 \phl_2}{d}]}$}
  \RightLabel{($\ldmd$)}
  \UnaryInfC{$\etw{[a, b]}{\phw{\phl_1 \phl_3 \phl_2}{\phl_3 \lfresh ab, \phl_1
  \lfresh ab \phl_3, \phl_1 \gfresh^1 ab \phl_3, \phl_2 \lfresh \phl_1 b \phl_3,
  \phl_2 \gfresh^2 b \phl_3 \phl_1}}{[\ihsv{ab \phl_3 \phl_1 \phl_2}{\phl_1},
  \ihsv{b \phl_3 \phl_1 \phl_2}{\phl_2}]}$}
  \RightLabel{($\lcirc$)}
  \BinaryInfC{$\etw{[a, b]}{\phw{b \phl_1 \phl_3 \phl_2}{\phl_3 \lfresh ab,
  \phl_1 \lfresh ab \phl_3, \phl_1 \gfresh^1 abab \phl_3, \phl_2 \lfresh \phl_1
  b \phl_3, \phl_2 \gfresh^2 bb \phl_3 \phl_1}}{[\ihsv{abab \phl_3 \phl_1
  \phl_2}{\phl_1}, \ihsv{bb \phl_3 \phl_1 \phl_2}{\phl_2}]}$}
  \doubleLine
  \dashedLine
  \UnaryInfC{$\etw{[a, b]}{\phw{b \phl_1 \phl_3 \phl_2}{\phl_3 \lfresh ab,
  \phl_1 \lfresh ab \phl_3, \phl_1 \gfresh^1 ab \phl_3, \phl_2 \lfresh \phl_1 b
  \phl_3, \phl_2 \gfresh^2 b \phl_3 \phl_1}}{[\ihsv{ab \phl_3 \phl_1
  \phl_2}{\phl_1}, \ihsv{b \phl_3 \phl_1 \phl_2}{\phl_2}]}$}
  \dashedLine
  \UnaryInfC{$\mathcal{L}_2$}
  \DisplayProof

  \vspace*{10pt}

  \AxiomC{A}
  \noLine
  \UnaryInfC{$\vdots$}
  \noLine
  \UnaryInfC{$\etw{[a]}{a}{[\ihsv{a}{a}]}$}
  \RightLabel{($a$)}
  \UnaryInfC{$\etw{[a]}{\phw{a}{}}{[\ihsv{a}{a}]}$}
  \dashedLine
  \UnaryInfC{$\mathcal{L}_1$}
  \DisplayProof
 \end{minipage}
\end{sideways}
}
 \caption{First half of \lngc\ for $\npexp{n}{n \npexp{m}{m \pexp{l}{l}{m} m
 \npexp{l}{\cof{n} l \cof{m}} } }$\label{ex:apndlngc1}}
\end{figure}

\begin{figure}[p]
{\small
 \vspace*{70pt}
\begin{sideways}
 \begin{minipage}{\textheight}\centering
  \AxiomC{$\mathcal{L}_1$}
  \AxiomC{B}
  \noLine
  \UnaryInfC{$\vdots$}
  \noLine
  \UnaryInfC{$\etw{[a, b]}{b}{[\ihsv{ab}{a}, \ihsv{b}{b}]}$}
  \RightLabel{($b$)}
  \UnaryInfC{$\etw{[a, b]}{\phw{b}{}}{[\ihsv{ab}{a}, \ihsv{b}{b}]}$}
  \AxiomC{C}
  \noLine
  \UnaryInfC{$\vdots$}
  \noLine
  \UnaryInfC{$\etw{[a, b, c]}{c}{[\ihsv{abc}{a}, \ihsv{bc}{c}, \ihsv{cb}{b}]}$}
  \RightLabel{($c$)}
  \UnaryInfC{$\etw{[a, b, c]}{\phw{c}{}}{[\ihsv{abc}{a}, \ihsv{bc}{c},
  \ihsv{cb}{b}]}$}
  \RightLabel{($\ldmd$)}
  \UnaryInfC{$\etw{[a, b]}{\phw{\phl_4}{\phl_4 \lfresh ab}}{[\ihsv{ab
  \phl_4}{a}, \ihsv{b \phl_4}{\phl_4}]}$}
  \RightLabel{($\lcirc$)}
  \BinaryInfC{$\etw{[a, b]}{\phw{b \phl_4}{\phl_4 \lfresh ab}}{[\ihsv{abab
  \phl_4}{a}, \ihsv{bb \phl_4}{\phl_4}]}$}
  \doubleLine
  \dashedLine
  \UnaryInfC{$\etw{[a, b]}{\phw{b \phl_4}{\phl_4 \lfresh ab}}{[\ihsv{ab
  \phl_4}{a}, \ihsv{b \phl_4}{\phl_4}]}$}
  \AxiomC{$\mathcal{L}_2$}
  \RightLabel{($\lcirc$)}
  \BinaryInfC{$\etw{[a, b]}{\phw{b \phl_4 \phl_4 \phl_1 \phl_3 \phl_2}{\phl_4
  \lfresh ab, \phl_3 \lfresh a \phl_4, \phl_1 \lfresh a \phl_4 \phl_3, \phl_1
  \gfresh^1 ab \phl_4 ab \phl_3, \phl_2 \lfresh \phl_1 \phl_4 \phl_3, \phl_2
  \gfresh^2 b \phl_4 b \phl_3 \phl_1}}{[\ihsv{ab \phl_4 a \phl_4 \phl_3 \phl_1
  \phl_2}{\phl_1}, \ihsv{b \phl_4 \phl_4 \phl_3 \phl_1 \phl_2}{\phl_2}]}$}
  \doubleLine
  \dashedLine
  \UnaryInfC{$\etw{[a, b]}{\phw{b \phl_4 \phl_4 \phl_1 \phl_3 \phl_2}{\phl_4
  \lfresh ab, \phl_3 \lfresh a \phl_4, \phl_1 \lfresh a \phl_4 \phl_3, \phl_1
  \gfresh^1 ab \phl_4 \phl_3, \phl_2 \lfresh \phl_1 \phl_4 \phl_3, \phl_2
  \gfresh^2 b \phl_4 \phl_3 \phl_1}}{[\ihsv{ab \phl_4 \phl_3 \phl_1
  \phl_2}{\phl_1}, \ihsv{b \phl_4 \phl_3 \phl_1 \phl_2}{\phl_2}]}$}
  \RightLabel{($\ldmd$)}
  \UnaryInfC{$\etw{[a]}{\phw{\phl_5 \phl_4 \phl_4 \phl_1 \phl_3 \phl_2}{\phl_5
  \lfresh a, \phl_4 \lfresh a \phl_5, \phl_3 \lfresh a \phl_4, \phl_1 \lfresh a
  \phl_4 \phl_3, \phl_1 \gfresh^1 a \phl_5 \phl_4 \phl_3, \phl_2 \lfresh \phl_1
  \phl_4 \phl_3, \phl_2 \gfresh^2 \phl_5 \phl_4 \phl_3 \phl_1}}{[\ihsv{a \phl_5
  \phl_4 \phl_3 \phl_1 \phl_2}{\phl_1}]}$}
  \RightLabel{($\lcirc$)}
  \BinaryInfC{$\etw{[a]}{\phw{a \phl_5 \phl_4 \phl_4 \phl_1 \phl_3
  \phl_2}{\phl_5 \lfresh a, \phl_4 \lfresh a \phl_5, \phl_3 \lfresh a \phl_4,
  \phl_1 \lfresh a \phl_4 \phl_3, \phl_1 \gfresh^1 aa \phl_5 \phl_4 \phl_3,
  \phl_2 \lfresh \phl_1 \phl_4 \phl_3, \phl_2 \gfresh^2 \phl_5 \phl_4 \phl_3
  \phl_1}}{[\ihsv{aa \phl_5 \phl_4 \phl_3 \phl_1 \phl_2}{\phl_1}]}$}
  \doubleLine
  \dashedLine
  \UnaryInfC{$\etw{[a]}{\phw{a \phl_5 \phl_4 \phl_4 \phl_1 \phl_3 \phl_2}{\phl_5
  \lfresh a, \phl_4 \lfresh a \phl_5, \phl_3 \lfresh a \phl_4, \phl_1 \lfresh a
  \phl_4 \phl_3, \phl_1 \gfresh^1 a \phl_5 \phl_4 \phl_3, \phl_2 \lfresh \phl_1
  \phl_4 \phl_3, \phl_2 \gfresh^2 \phl_5 \phl_4 \phl_3 \phl_1}}{[\ihsv{a \phl_5
  \phl_4 \phl_3 \phl_1 \phl_2}{\phl_1}]}$}
  \RightLabel{($\ldmd$)}
  \UnaryInfC{$\etw{[]}{\phw{\phl \phl_5 \phl_4 \phl_4 \phl_1 \phl_3
  \phl_2}{\phl_5 \lfresh \phl, \phl_4 \lfresh \phl \phl_5, \phl_3 \lfresh \phl
  \phl_4, \phl_1 \lfresh \phl \phl_4 \phl_3, \phl_1 \gfresh^1 \phl \phl_5 \phl_4
  \phl_3, \phl_2 \lfresh \phl_1 \phl_4 \phl_3, \phl_2 \gfresh^2 \phl_5 \phl_4
  \phl_3 \phl_1}}{[]}$}
  \doubleLine
  \dashedLine
  \UnaryInfC{$\etw{[]}{\phw{\phl \phl_5 \phl_4 \phl_4 \phl_1 \phl_3 \phl_2}{\phl
  \not= \phl_1, \phl \not= \phl_3, \phl \not= \phl_4, \phl \not= \phl_5, \phl_1
  \not= \phl_2, \phl_1 \not= \phl_3, \phl_1 \not= \phl_4, \phl_1 \not= \phl_5,
  \phl_2 \not= \phl_3, \phl_2 \not= \phl_4, \phl_2 \not= \phl_5, \phl_3 \not=
  \phl_4, \phl_4 \not= \phl_5}}{[]}$}
  \DisplayProof
 \end{minipage}
\end{sideways}
}
 \caption{Second half of \lngc\ for $\npexp{n}{n \npexp{m}{m \pexp{l}{l}{m} m
 \npexp{l}{\cof{n} l \cof{m}} } }$\label{ex:apndlngc2}}
\end{figure}
The languages in-contexts and automata in-contexts for base cases are considered
as follows:
Note that, as the example does not possess unions nor Kleene stars, we can
denote the ``real'' extant chronicles in the post-contexts (hence we do so
below, instead of showing the same post-contests as NREs in-contexts and
automata in-contexts).
\begin{enumerate}[A:]
 \item For the NRE in-contexts $\etw{[a]}{a}{[\ihsv{a}{a}]}$, the language
       in-contexts and the \cda\ are
       \begin{align*}
	\etw{[a]}{\left\{a\right\}}{[\ihsv{a}{a}]}
       \end{align*}
       \begin{minipage}[c]{\linewidth}\centering
	\includegraphics[scale=.4]{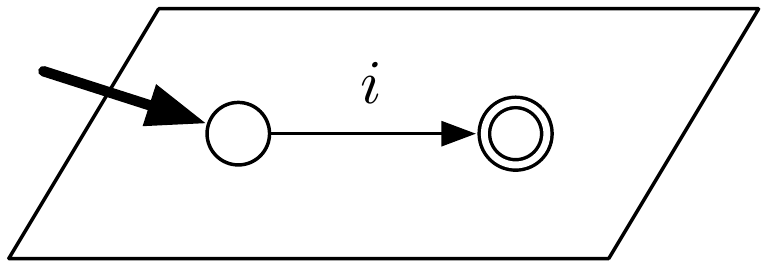}
       \end{minipage}
       \\[.5pc]
       In this case, since the length of $[a]$ is $1$, the automaton is on a
       first layer, hence each state has only one register.
       In the picture, $i$ should be $1$ and the initial assignment of the
       register is: $1 \mapsto a$ with the natural extant chronicle $[a]$.
       The run is: $\tuple{q_0, a, [\ihsv{a}{a}]} \xrightarrow{1} \tuple{q_1,
       \epsilon, [\ihsv{a}{a}]}$.
       So, the \cda\ recognises the language in-contexts.

 \item For the NRE in-contexts $\etw{[a, b]}{b}{[\ihsv{ab}{a}, \ihsv{b}{b}]}$,
       the language in-contexts and the \cda\ are
       \begin{align*}
	\etw{[a, b]}{\left\{b\right\}}{[\ihsv{ab}{a}, \ihsv{b}{b}]}
       \end{align*}
       \begin{minipage}[c]{\linewidth}\centering
	\includegraphics[scale=.4]{hdexn}
       \end{minipage}
       \\[.5pc]
       In this case, since the length of $[a, b]$ is $2$, the automaton is on a
       second layer, hence each state has two registers.
       In the picture, $i$ should be $2$ and the initial assignment of the
       registers is: $1 \mapsto a$ and $2 \mapsto b$ with the natural extant
       chronicle $[ab, b]$.
       The run is: $\tuple{q_0, b, [\ihsv{ab}{a}, \ihsv{b}{b}]} \xrightarrow{2}
       \tuple{q_1, \epsilon, [\ihsv{ab}{a}, \ihsv{b}{b}]}$.
       So, the \cda\ recognises the language in-contexts.

 \item For the NRE in-contexts $\etw{[a, b, c]}{c}{[\ihsv{abc}{a}, \ihsv{bc}{c},
       \ihsv{cb}{b}]}$, the language in-contexts and the \cda\ are
       \begin{align*}
	\etw{[a, b, c]}{\left\{c\right\}}{[\ihsv{abc}{a}, \ihsv{bc}{c},
	\ihsv{cb}{b}]}
       \end{align*}
       \begin{minipage}[c]{\linewidth}\centering
	\includegraphics[scale=.4]{hdexn}
       \end{minipage}
       \\[.5pc]
       In this case, since the length of $[a, b, c]$ is $3$, the automaton is on
       a third layer, hence each state has three registers.
       In the picture, $i$ should be $3$ and the initial assignment of the
       registers is: $1 \mapsto a$, $2 \mapsto b$ and $3 \mapsto c$ with the
       natural extant chronicle $[abc, bc, c]$.
       The run is: $\tuple{q_0, c, [\ihsv{abc}{a}, \ihsv{bc}{b}, \ihsv{c}{c}]}
       \xrightarrow{3} \tuple{q_1, \epsilon, [\ihsv{abc}{a}, \ihsv{bc}{b},
       \ihsv{c}{c}]}$.
       So, the \cda\ recognises the language in-contexts.

 \item For the NRE in-contexts $\etw{[a, b, d]}{\cof{a}}{[\ihsv{abd}{a},
       \ihsv{bd}{b}, \ihsv{d}{d}]}$, the language in-contexts and the \cda\ are
       \begin{align*}
	\etw{[a, b, d]}{\left\{n \mid n \lfresh abd, (n \gfresh^1 abd)
	\right\}}{[\ihsv{abdn}{n}, \ihsv{bdn}{b}, \ihsv{dn}{d}]}
       \end{align*}
       \begin{minipage}{\linewidth}\centering
	\includegraphics[scale=.4]{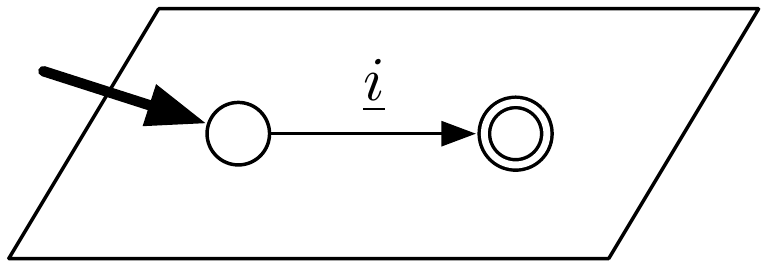}
       \end{minipage}
       \\[.5pc]
       In this case, since the length of $[a, b, d]$ is $3$, the automaton is on
       a third layer, hence each state has three registers.
       In the picture, $i$ should be $1$ and the initial assignment of the
       registers is: $1 \mapsto a$, $2 \mapsto b$ and $3 \mapsto d$ with the
       natural extant chronicle $[abd, bd, d]$.
       The run is:
       \[
	\tuple{q_0, n, [\ihsv{abd}{a}, \ihsv{bd}{b},
       \ihsv{d}{d}]} \xrightarrow{\underline{1}} \tuple{q_1, \epsilon,
       [\ihsv{abdn}{n}, \ihsv{bdn}{b}, \ihsv{dn}{d}]}
       \]
       where any name $n \lfresh abd$.
       So, the \cda\ recognises the language in-contexts.
       One may feel that there seems no difference between $\lfresh$ and
       $\gfresh^1$.
       However, the difference appear when we concatenate it with other
       languages: recall rule ($\lcirc$) in Fig.~\ref{fig:languagecalculus}.
       That is, the natural chronicle is used as a bookmark for the later use
       here, for permutations and concatenations.

 \item For the NRE in-contexts $\etw{[a, b, d]}{d}{[\ihsv{abd}{a}, \ihsv{b}{bd},
       \ihsv{d}{d}]}$, the language in-contexts and the \cda\ are
       \begin{align*}
	\etw{[a, b, d]}{\left\{d\right\}}{[\ihsv{abd}{a}, \ihsv{bd}{b},
	\ihsv{d}{d}]}
       \end{align*}
       \begin{minipage}{\linewidth}\centering
	\includegraphics[scale=.4]{hdexn}
       \end{minipage}
       \\[.5pc]
       In this case, since the length of $[a, b, d]$ is $3$, the automaton is on
       a third layer, hence each state has three registers.
       In this picture, $i$ should be $3$ and the initial assignment of the
       registers is: $1 \mapsto a$, $2 \mapsto b$ and $3 \mapsto d$ with the
       natural extant chronicle $[abd, bd, d]$.
       The run is: $\tuple{q_0, d, [\ihsv{abd}{a}, \ihsv{bd}{b}, \ihsv{d}{d}]}
       \xrightarrow{3} \tuple{q_1, \epsilon, [\ihsv{abd}{a}, \ihsv{bd}{b},
       \ihsv{d}{d}]}$.
       So, the \cda\ recognises the language in-contexts.

 \item For the NRE in-contexts $\etw{[a, b, d]}{\cof{b}}{[\ihsv{abd}{a},
       \ihsv{bd}{b}, \ihsv{d}{d}]}$, the language in-contexts and the \cda\ are
       \begin{align*}
	\etw{[a, b, d]}{\left\{n \mid n \lfresh abd, (n \gfresh^2
	bd)\right\}}{[\ihsv{abdn}{a}, \ihsv{bdn}{n}, \ihsv{dn}{d}]}
       \end{align*}
       \begin{minipage}{\linewidth}\centering
	\includegraphics[scale=.4]{hdexcofn}
       \end{minipage}
       \\[.5pc]
       In this case, since the length of $[a, b, d]$ is $3$, the automaton is on
       a third layer, hence each state has three registers.
       In this picture, $i$ should be $2$ and the initial assignment of the
       registers is: $1 \mapsto a$, $2 \mapsto b$ and $3 \mapsto d$ with the
       natural extant chronicle $[abd, bd, d]$.
       The run is: $\tuple{q_0, n, [\ihsv{abd}{a}, \ihsv{bd}{b}, \ihsv{d}{d}]}
       \xrightarrow{\underline{2}} \tuple{q_1, \epsilon, [\ihsv{abdn}{a},
       \ihsv{bdn}{n}, \ihsv{dn}{d}]}$, where any name $n \lfresh abd$.
       Therefore, the \cda\ recognises the language in-contexts.
       As $\cof{b}$ is a relative global fresh transition with respect to the
       second chronicle, we take $\star_2 \gfresh^2 bd$ by means of the natural
       chronicle for the register $2$ as a bookmark.
\end{enumerate}

The languages in-contexts and the automata in-contests for inductive steps are
as follows (note that we simplify chronicles or remove some
$\epsilon$-transitions):
\begin{enumerate}[(1)]
 \item For the NRE $\etw{[a, b, d]}{\cof{a}d}{[\ihsv{abd}{a}, \ihsv{bd}{b},
       \ihsv{d}{d}]}$, the language in-contexts and the \cda\ are
       \begin{align*}
	\etw{C}{\langof(\ne)}{E} = \etw{[a, b, d]}{\left\{nd \mid n \lfresh abd,
	(n \gfresh^1 abd) \right\}}{[\ihsv{abdn}{n}, \ihsv{bdn}{b},
	\ihsv{dn}{d}]}
       \end{align*}
       \begin{minipage}{\linewidth}\centering
	\includegraphics[scale=.4]{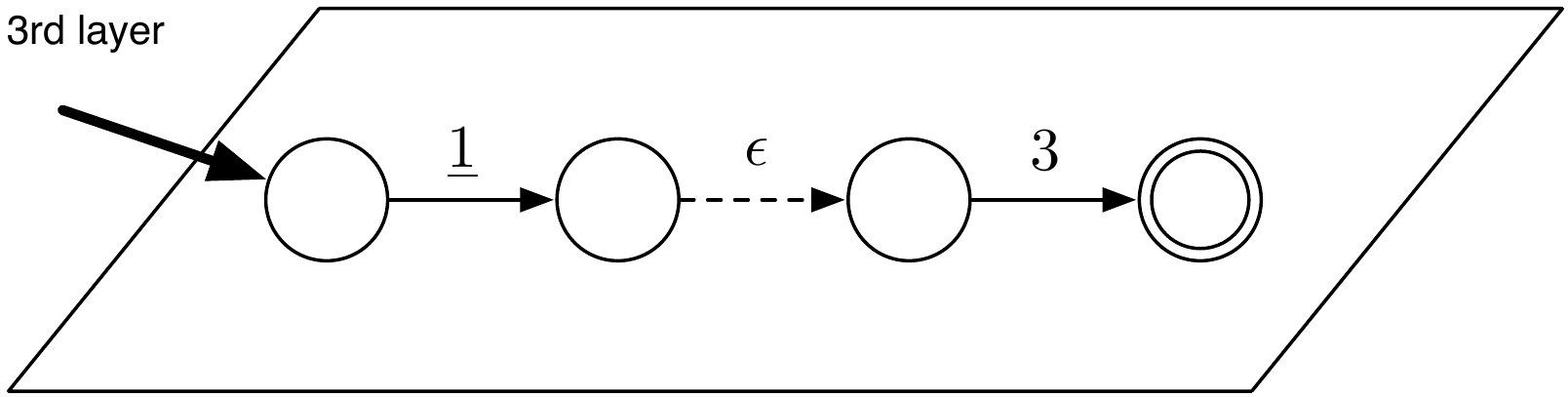}
       \end{minipage}
       \\[.5pc]
       In the language, we have simplified chronicles.
       This step is a concatenation of cases D and E on a third layer.
       The initial assignment of the registers is: $1 \mapsto a$, $2 \mapsto b$
       and $3 \mapsto d$ with the natural extant chronicle $[abd, bd, d]$.
       The run is:
       \begin{align*}
	\tuple{q_0, nd, [\ihsv{abd}{a}, \ihsv{bd}{b}, \ihsv{d}{d}]}
	&\xrightarrow{\cof{1}} \tuple{q_1, d, [\ihsv{abdn}{n}, \ihsv{bdn}{b},
	\ihsv{dn}{d}]}\\
	&\xrightarrow{3} \tuple{q_2, \epsilon, [\ihsv{abdn}{n}, \ihsv{bdn}{b},
	\ihsv{dn}{d}]}
       \end{align*}
       where any name $n \lfresh abd$.
       So, it is easy to see that the \cda\ accepts the language in-contexts.

 \item For the NRE $\etw{[a, b, d]}{\cof{a} d \cof{b}}{[\ihsv{abd}{a},
       \ihsv{bd}{b}, \ihsv{d}{d}]}$, the language in-contexts and the \cda\ are
       \begin{align*}
	C:& [a, b, d]\\
	\langof(\ne):& \left\{ndm \mid n \lfresh abd, (n \gfresh^1 abd), m
	\lfresh nbd, (m \gfresh^2 bdn) \right\}\\
	E:& [\ihsv{abdnm}{n}, \ihsv{bdnm}{m}, \ihsv{dnm}{d}]
       \end{align*}
       \begin{minipage}{\linewidth}\centering
	\includegraphics[scale=.4]{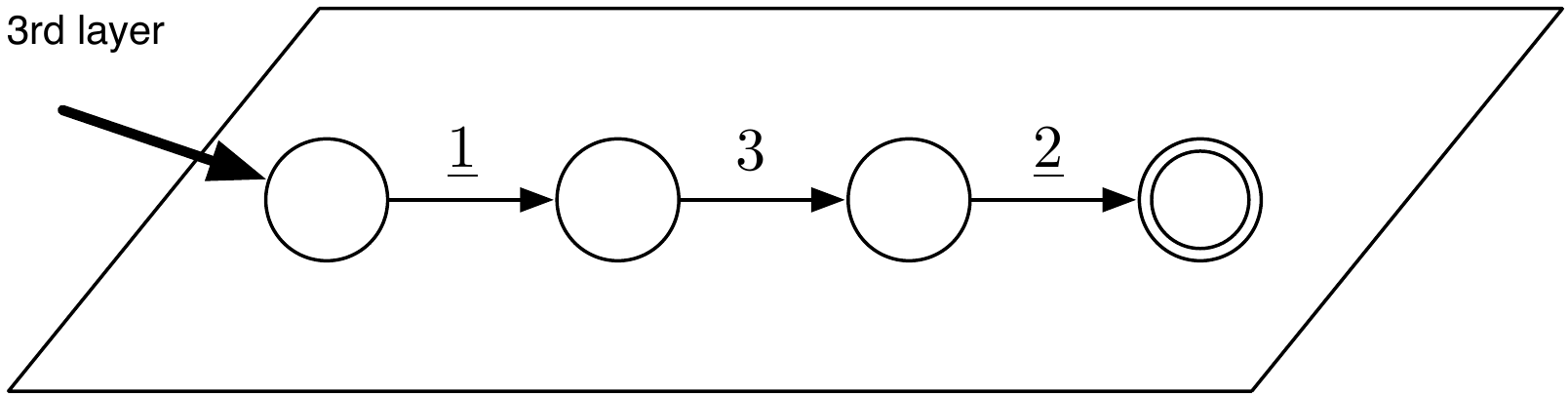}
       \end{minipage}
       \\[.5pc]
       This step is the concatenation of cases (1) and F on a third layer.
       The initial assignment of the registers is: $1 \mapsto a$, $2 \mapsto b$
       and $3 \mapsto d$ with the natural extant chronicle $[abd, bd, d]$.
       The run is:
       \begin{align*}
	\tuple{q_0, ndm, [\ihsv{abd}{a}, \ihsv{bd}{b}, \ihsv{d}{d}]}
	&\xrightarrow{\cof{1}} \tuple{q_1, dm, [\ihsv{abdn}{n}, \ihsv{bdn}{b},
	\ihsv{dn}{d}]}\\
	&\xrightarrow{3} \tuple{q_2, m, [\ihsv{abdn}{n}, \ihsv{bdn}{b},
	\ihsv{dn}{d}]}\\
	&\xrightarrow{\cof{2}} \tuple{q_3, \epsilon, [\ihsv{abdnm}{n},
	\ihsv{bdnm}{m}, \ihsv{dnm}{d}]}
       \end{align*}
       where any names $n \lfresh abd$ and $m \lfresh bdn$.
       Hence, the \cda\ accepts the language in-contexts.
       Notice that, when we concatenate the languages, the latter words are
       permute $a$ with $\phl_1$ and updated chronicles (and relative-global
       freshness).
       One may find that $m$ can be $a$, because of $m \lfresh bdn$.
       The fact reflects the relative global freshness (with respect to the
       chronicle $2$).

 \item For the NRE $\etw{[a, b]}{\npexp{l}{\cof{a} l \cof{b}}}{[\ihsv{ab}{a},
       \ihsv{b}{b}]}$, the language in-contexts and the \cda\ are
       \begin{align*}
	C:& [a, b]\\
	\langof(\ne):& \left\{nlm \mid l \lfresh ab, n \lfresh abl, (n \gfresh^1
	abl), m \lfresh nbl, (m \gfresh^2 bln) \right\}\\
	E:& [\ihsv{ablnm}{n}, \ihsv{blnm}{m}]
       \end{align*}
       \begin{minipage}{\linewidth}\centering
	\includegraphics[scale=.4]{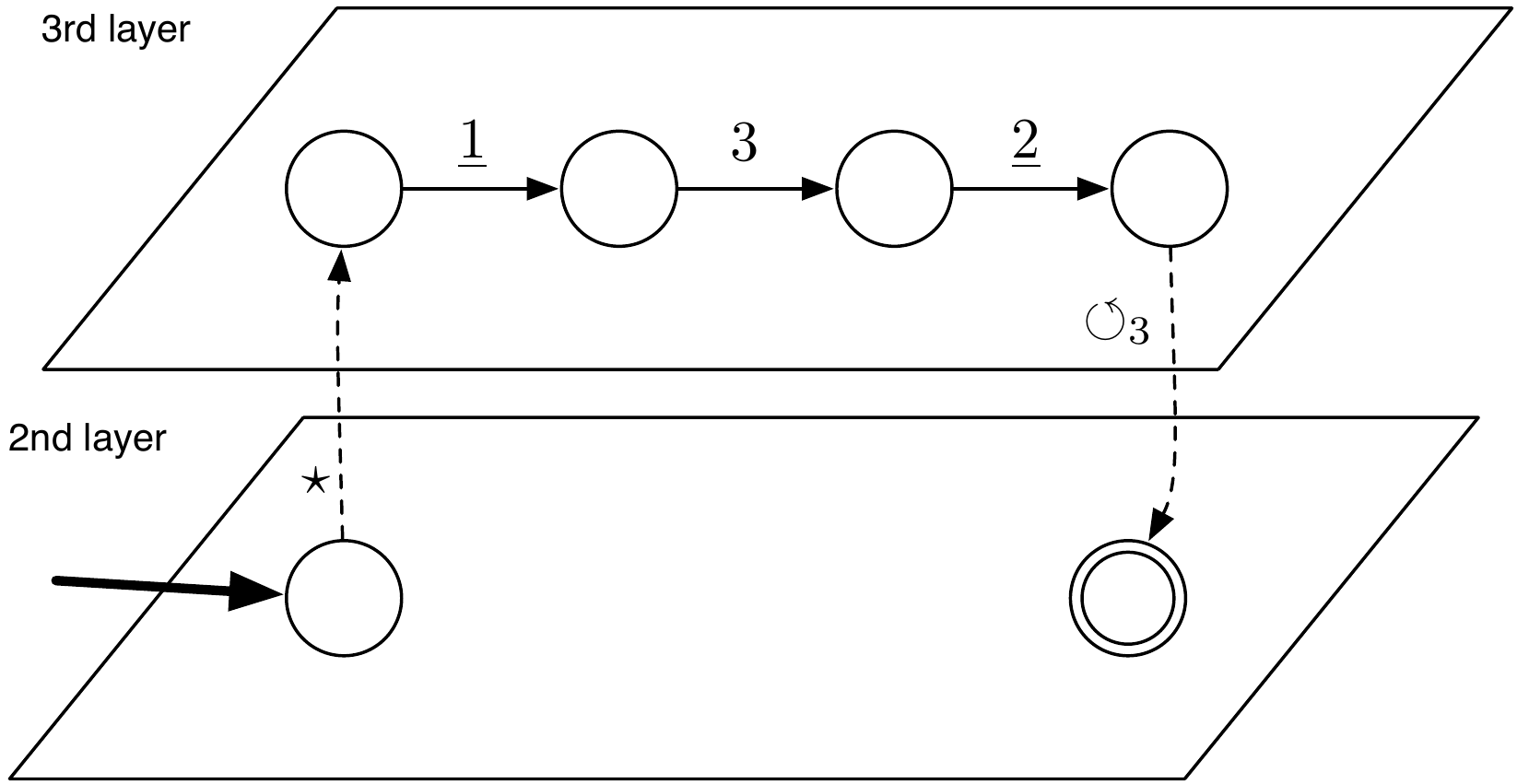}
       \end{minipage}
       \\[.5pc]
       This step abstracts case (2).
       The initial assignment of the registers is: $1 \mapsto a$ and $2 \mapsto
       b$ with the natural extant chronicle $[ab, b]$.
       The run is:
       \begin{align*}
	\tuple{q_0, nlm, [\ihsv{ab}{a}, \ihsv{b}{b}]} &\xrightarrow{\star}
	\tuple{q_1, nlm, [\ihsv{abl}{a}, \ihsv{bl}{b}, \ihsv{l}{l}]}\\
	&\xrightarrow{\cof{1}} \tuple{q_2, lm, [\ihsv{abln}{n}, \ihsv{bln}{b},
	\ihsv{ln}{l}]}\\
	&\xrightarrow{3} \tuple{q_3, m, [\ihsv{abln}{n}, \ihsv{bln}{b},
	\ihsv{ln}{l}]}\\
	&\xrightarrow{\cof{2}} \tuple{q_4, \epsilon, [\ihsv{ablnm}{n},
	\ihsv{blnm}{m}, \ihsv{lnm}{l}]}\\
	&\xrightarrow{\circlearrowleft{3}} \tuple{q_5, \epsilon,
	[\ihsv{ablnm}{n}, \ihsv{blnm}{m}]}
       \end{align*}
       where any names $l \lfresh ab$, $n \lfresh abl$ and $m \lfresh bln$.
       Hence, the \cda\ accepts the language in-contexts.

 \item For the NRE $\etw{[a, b]}{b \npexp{l}{\cof{a} l \cof{b}}}{[\ihsv{ab}{a},
       \ihsv{b}{b}]}$, the language in-contexts and the \cda\ are
       \begin{align*}
	C:& [a, b]\\
	\langof(\ne):& \left\{bnlm \mid l \lfresh ab, n \lfresh abl, (n
	\gfresh^1 abl), m \lfresh nbl, (m \gfresh^2 bln) \right\}\\
	E:& [\ihsv{ablnm}{n}, \ihsv{blnm}{m}]
       \end{align*}
       \begin{minipage}{\linewidth}\centering
	\includegraphics[scale=.4]{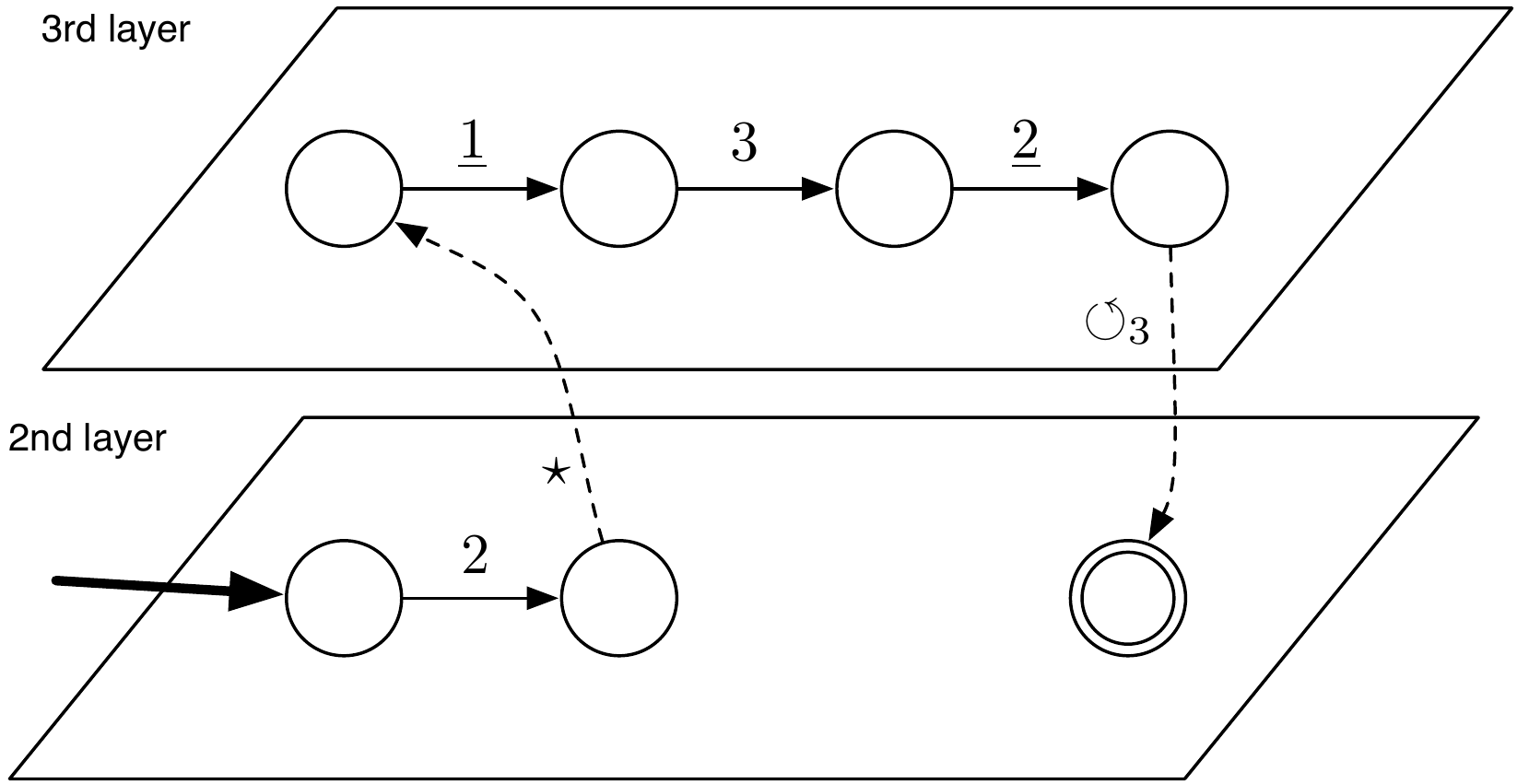}
       \end{minipage}
       \\[.5pc]
       This step concatenates case B with case (3).
       The initial assignment of the registers is: $1 \mapsto a$ and $2 \mapsto
       b$ with the natural extant chronicle $[ab, b]$.
       The run is:
       \begin{align*}
	\tuple{q_0, bnlm, [\ihsv{ab}{a}, \ihsv{b}{b}]} &\xrightarrow{2}
	\tuple{q_1, nlm, [\ihsv{ab}{a}, \ihsv{b}{b}]}\\
	&\xrightarrow{\star} \tuple{q_2, nlm, [\ihsv{abl}{a}, \ihsv{bl}{b},
	\ihsv{l}{l}]}\\
	&\xrightarrow{\cof{1}} \tuple{q_3, lm, [\ihsv{abln}{n}, \ihsv{bln}{b},
	\ihsv{ln}{l}]}\\
	&\xrightarrow{3} \tuple{q_4, m, [\ihsv{abln}{n}, \ihsv{bln}{b},
	\ihsv{ln}{l}]}\\
	&\xrightarrow{\cof{2}} \tuple{q_5, \epsilon, [\ihsv{ablnm}{n},
	\ihsv{blnm}{m}, \ihsv{lnm}{l}]}\\
	&\xrightarrow{\pclose{3}} \tuple{q_6, \epsilon, [\ihsv{ablnm}{n},
	\ihsv{blnm}{m}]}
       \end{align*}
       where any names $l \lfresh ab$, $n \lfresh abl$ and $m \lfresh bln$.
       Hence, the \cda\ accepts the language in-contexts.

 \item For the NRE $\etw{[a, b]}{\pexp{l}{l}{b}}{[\ihsv{ab}{a}, \ihsv{b}{b}]}$,
       the language in-contexts and the \cda\ are
       \begin{align*}
	\etw{C}{\langof(\ne)}{E} = \etw{[a, b]}{\left\{n \mid n \lfresh
	ab\right\}}{[\ihsv{abn}{a}, \ihsv{bn}{n}]}
       \end{align*}
       \begin{minipage}{\linewidth}\centering
	\includegraphics[scale=.4]{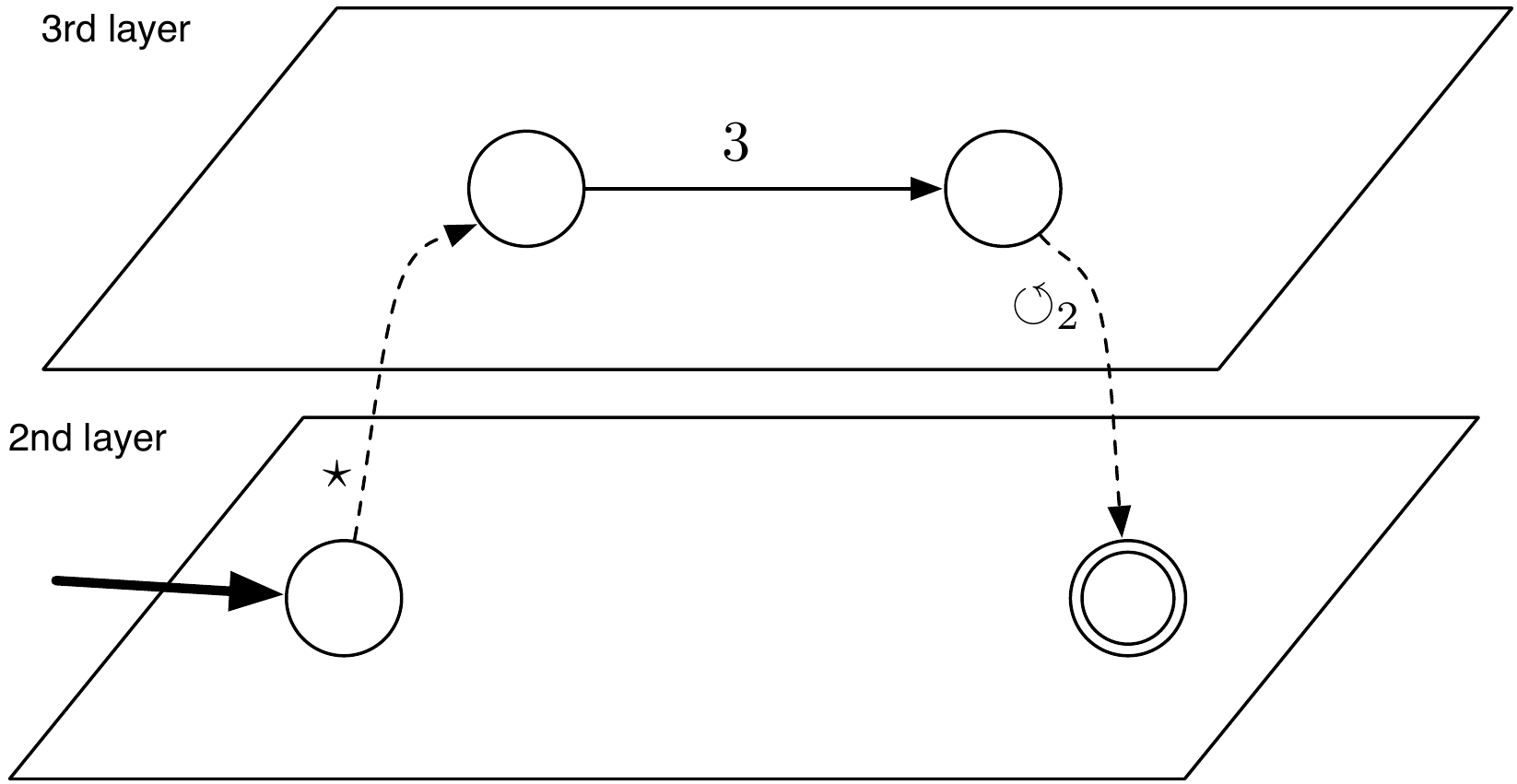}
       \end{minipage}
       \\[.5pc]
       This step abstracts case C.
       Since the post extant chronicle of case C remembers the permutation, the
       transition to the final state labelled with $\pclose{2}$ not with
       $\pclose{3}$.
       Notice that the permutation effect is left in the post extant chronicle,
       see the current value of the register 2.
       The initial assignment of the registers is: $1 \mapsto a$ and $2 \mapsto
       b$ with the natural extant chronicle $[ab, b]$.
       The run is:
       \begin{align*}
	\tuple{q_0, n, [\ihsv{ab}{a}, \ihsv{a}]} &\xrightarrow{\star}
	\tuple{q_1, n, \ihsv{abn}{a}, \ihsv{bn}{b}, \ihsv{n}{n}}\\
	&\xrightarrow{3} \tuple{q_2, \epsilon, [\ihsv{abn}{a}, \ihsv{bn}{b},
	\ihsv{n}{n}]}\\
	&\xrightarrow{\pclose{2}} \tuple{q_3, \epsilon, [\ihsv{abn}{a}, \ihsv{bn}{n}]}
       \end{align*}
       where any name $n \lfresh ab$.
       Therefore, the \cda\ accepts the language in-contexts with keeping the
       permutation action on the post-context.
       But, in the final state, the configuration of the registers turns to be
       on the level of a schematic word: $1 \mapsto a$ and $2 \mapsto \phl_4$
       (not $b$) with the extant chronicle $[\ihsv{ab \phl_4}{a}, \ihsv{b
       \phl_4}{\phl_4}]$.

 \item For the NRE $\etw{[a, b]}{b \pexp{l}{l}{b}}{[\ihsv{ab}{a},
       \ihsv{b}{b}]}$, the language in-contexts and the \cda\ are
       \begin{align*}
	\etw{C}{\langof(\ne)}{E} = \etw{[a,b]}{\left\{bn \mid n \lfresh ab
	\right\}}{[\ihsv{abn}{a}, \ihsv{bn}{n}]}
       \end{align*}
       \begin{minipage}{\linewidth}\centering
	\includegraphics[scale=.4]{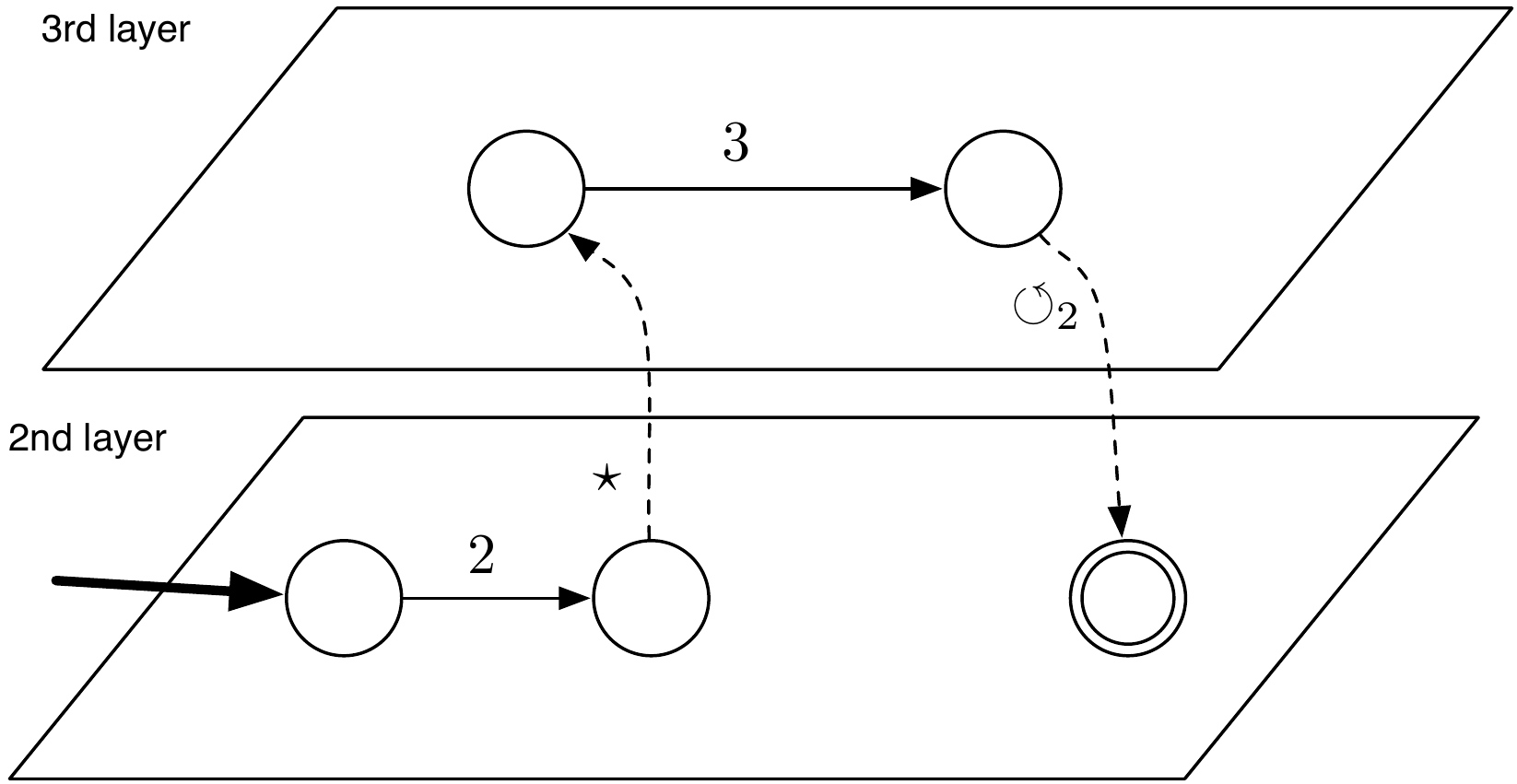}
       \end{minipage}
       \\[.5pc]
       This step concatenates case B with case (5).
       The initial assignment of the registers is: $1 \mapsto a$ and $2 \mapsto
       b$ with the natural extant chronicle $[ab, b]$.
       The run is:
       \begin{align*}
	\tuple{q_0, bn, [\ihsv{ab}{a}, \ihsv{b}{b}]} &\xrightarrow{2}
	\tuple{q_1, n, [\ihsv{ab}{a}, \ihsv{b}{b}]}\\
	&\xrightarrow{\star} \tuple{q_2, n, [\ihsv{abn}{a}, \ihsv{bn}{b},
	\ihsv{n}{n}]}\\
	&\xrightarrow{3} \tuple{q_3, \epsilon, [\ihsv{abn}{a}, \ihsv{bn}{b},
	\ihsv{n}{n}]}\\
	&\xrightarrow{\pclose{2}} \tuple{q_4, \epsilon, [\ihsv{abn}{a},
	\ihsv{bn}{n}]}
       \end{align*}
       Hence, the \cda\ accepts the language in-contexts.
       Note that the permutation action in $\pclose{2}$ is still preserved in
       safe in the post-context as we expect, i.e.~the extant chronicle is
       $[\ihsv{ab \phl_4}{a}, \ihsv{b \phl_4}{\phl_4}]$ on the level of a
       schematic word.

 \item For the NRE $\etw{[a, b]}{b \pexp{l}{l}{b} b \npexp{l}{\cof{a} l
       \cof{b}}}{}$, the language in-contexts and the \cda\ are
       \begin{align*}
	C:& [a, b]\\
	\langof(\ne):& \bigl\{bnnn'lm \mid n \lfresh ab, l \lfresh an, n'
	\lfresh anl, n' \gfresh^1 abnl, m \lfresh nn'l, m \gfresh^2 b nln'
	\bigr\}\\
	E:& [\ihsv{abnln'm}{n'}, \ihsv{bnln'm}{m}]
       \end{align*}
       \begin{minipage}{\linewidth}\centering
	\includegraphics[scale=.4]{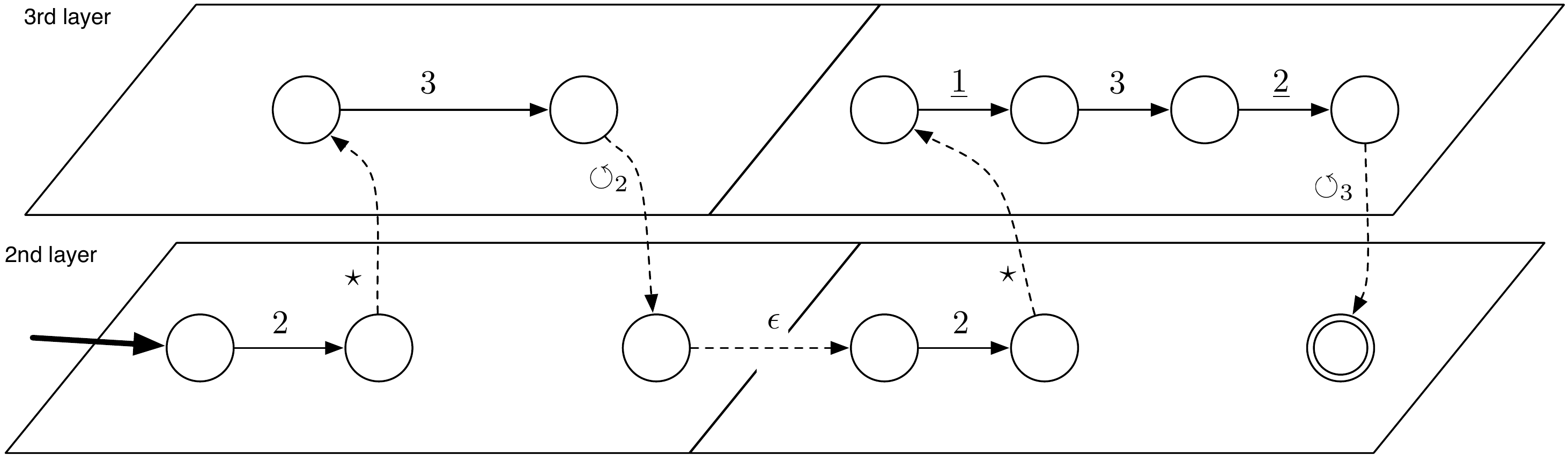}
       \end{minipage}
       \\[.5pc]
       This step concatenates cases (4) and (6).
       The initial assignment of the registers is: $1 \mapsto a$ and $2 \mapsto
       b$ with the natural extant chronicle $[ab, b]$.
       The run is:
       \begin{align*}
	\tuple{q_0, bnnn'lm, [\ihsv{ab}{a}, \ihsv{b}{b}]} &\xrightarrow{2}
	\tuple{q_1, nnn'lm, [\ihsv{ab}{a}, \ihsv{b}{b}]}\\
	&\xrightarrow{\star} \tuple{q_2, nnn'lm, [\ihsv{abn}{a}, \ihsv{bn}{b},
	\ihsv{n}{n}]}\\
	&\xrightarrow{3} \tuple{q_3, nn'lm, [\ihsv{abn}{a}, \ihsv{bn}{b},
	\ihsv{n}{n}]}\\
	&\xrightarrow{\pclose{2}} \tuple{q_4, nn'lm, [\ihsv{abn}{a},
	\ihsv{bn}{n}]}\\
	&\xrightarrow{2} \tuple{q_5, n'lm, [\ihsv{abn}{a}, \ihsv{bn}{n}]}\\
	&\xrightarrow{\star} \tuple{q_6, n'lm, [\ihsv{abnl}{a}, \ihsv{bnl}{n},
	\ihsv{l}{l}]}\\
	&\xrightarrow{\cof{1}} \tuple{q_7, lm, [\ihsv{abnln'}{n'},
	\ihsv{bnln'}{n}, \ihsv{ln'}{l}]}\\
	&\xrightarrow{3} \tuple{q_8, m, [\ihsv{abnln'}{n'}, \ihsv{bnln'}{n},
	\ihsv{ln'}{l}]}\\
	&\xrightarrow{\cof{2}} \tuple{q_9, \epsilon, [\ihsv{abnln'm}{n'},
	\ihsv{bnln'm}{m}, \ihsv{ln'm}{l}]}\\
	&\xrightarrow{\pclose{3}} \tuple{q_{10}, \epsilon, [\ihsv{abnln'm}{n'},
	\ihsv{bnln'm}{m}]}
       \end{align*}
       where any names $n \lfresh ab$, $l \lfresh an$, $n' \lfresh abnl$ and $m
       \lfresh bnln'$.
       The important things are: 1. when we take the $\epsilon$-transition, the
       name of the register $2$ is $n$ ($\phl_4$ on the level of a schematic
       word).
       However, the concatenation of languages rule ($\lcirc$) reflects the fact
       as the permutation $\pi_{[{C}{\rhd}{E_1}]}$ and the relative global
       freshness by appending corresponding chronicles in $E_1$.
       Therefore, the \cda\ accepts the language in-contexts without any
       problem.
       Also, all the information are kept in the post-context in safe again.

 \item For the NRE $\etw{[a]}{\npexp{m}{m \pexp{l}{l}{m} m \npexp{l}{\cof{a} l
       \cof{m}}}}{[\ihsv{a}{a}]}$, the language in-contexts and the \cda\ are
       \begin{align*}
	C:& [a]\\
	\langof(\ne):& \bigl\{mnnn'lm' \mid m \lfresh a, n \lfresh am, l \lfresh
	an, n' \lfresh anl, n' \gfresh^1 amnl, m' \lfresh n'nl, m' \gfresh^2
	mnln' \bigr\}\\
	E:& [\ihsv{amnln'm'}{n'}]
       \end{align*}
       \begin{minipage}{\linewidth}\centering
	\includegraphics[scale=.4]{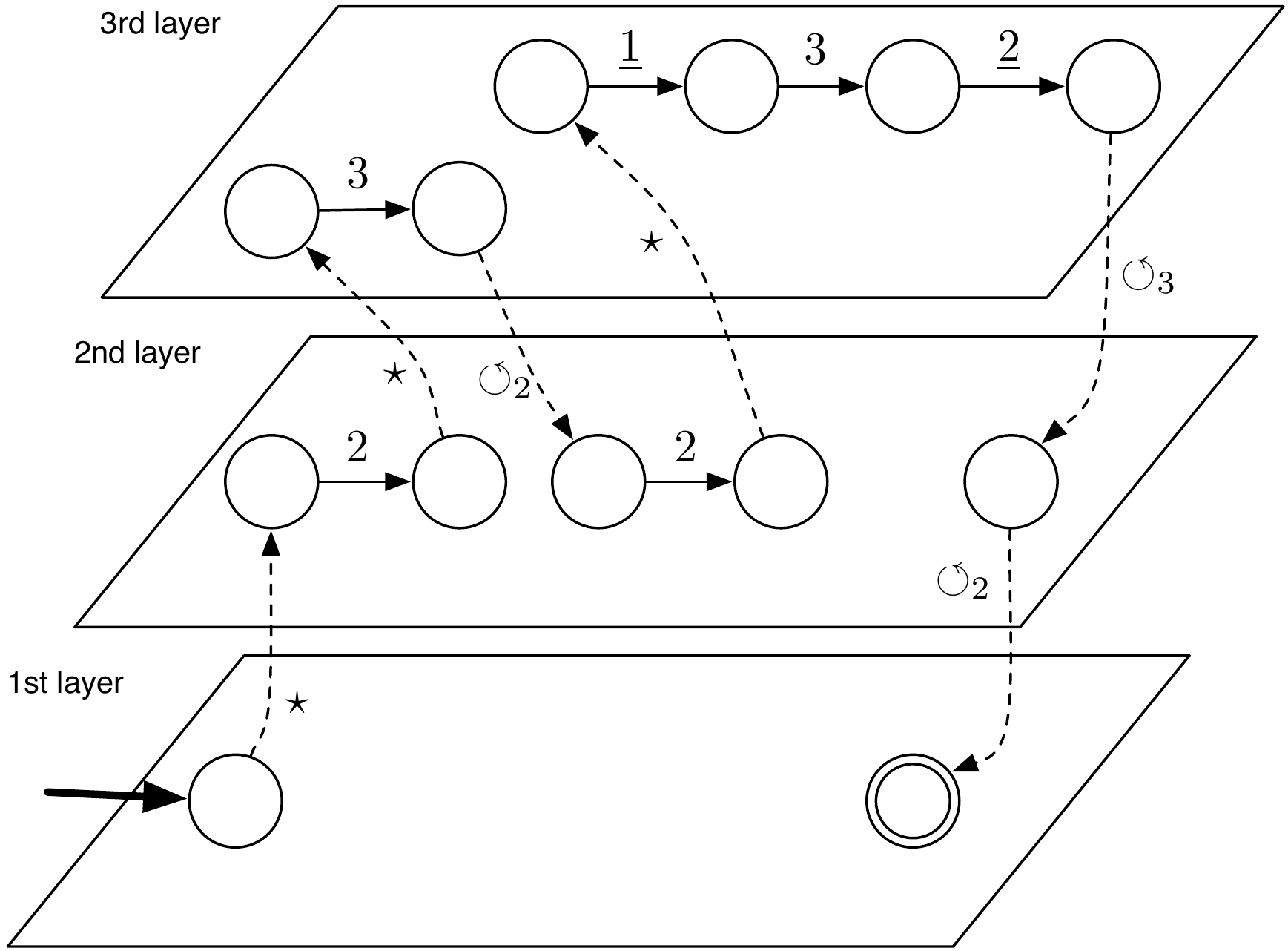}
       \end{minipage}
       \\[.5pc]
       This step abstracts the second register of case (7).
       The initial assignment of the register is: $1 \mapsto a$ with the natural
       extant chronicle $[a]$.
       The run is:
       \begin{align*}
	\tuple{q_0, mnnn'lm', [\ihsv{a}{a}]} &\xrightarrow{\star} \tuple{q_1,
	mnnn'lm', [\ihsv{am}{a}, \ihsv{m}{m}]}\\
	&\xrightarrow{2} \tuple{q_2, nnn'lm', [\ihsv{am}{a}, \ihsv{m}{m}]}\\
	&\xrightarrow{\star} \tuple{q_3, nnn'lm', [\ihsv{amn}{a}, \ihsv{mn}{m},
	\ihsv{n}{n}]}\\
	&\xrightarrow{3} \tuple{q_4, nn'lm', [\ihsv{amn}{a}, \ihsv{mn}{m},
	\ihsv{n}{n}]}\\
	&\xrightarrow{\pclose{2}} \tuple{q_5, nn'lm', [\ihsv{amn}{a},
	\ihsv{mn}{n}]}\\
	&\xrightarrow{2} \tuple{q_6, n'lm', [\ihsv{amn}{a}, \ihsv{mn}{n}]}\\
	&\xrightarrow{\star} \tuple{q_7, n'lm', [\ihsv{amnl}{a}, \ihsv{mnl}{n},
	\ihsv{l}{l}]}\\
	&\xrightarrow{\cof{1}} \tuple{q_8, lm', [\ihsv{amnln'}{n'},
	\ihsv{mnln'}{n}, \ihsv{ln'}{l}]}\\
	&\xrightarrow{3} \tuple{q_9, m', [\ihsv{amnln'}{n'}, \ihsv{mnln'}{n},
	\ihsv{ln'}{l}]}\\
	&\xrightarrow{\cof{2}} \tuple{q_{10}, \epsilon, [\ihsv{amnln'm'}{n'},
	\ihsv{mnln'm'}{m'}, \ihsv{ln'm'}{l}]}\\
	&\xrightarrow{\pclose{3}} \tuple{q_{11}, \epsilon, [\ihsv{amnln'm'}{n'},
	\ihsv{mnln'm'}{m'}]}\\
	&\xrightarrow{\pclose{2}} \tuple{q_{12}, \epsilon,
	[\ihsv{amnln'm'}{n'}]}
       \end{align*}
       where any names $m \lfresh a$, $n \lfresh am$, $l \lfresh an$, $n'
       \lfresh amnl$ and $m' \lfresh mnln'$.
       So, the abstracted $m$ ($\phl_5$ on a schematic word) can be any name
       except $a$.
       Accordingly, we replace $b$ by $\phl_5$.
       Hence, the \cda\ accepts the language in-contexts.

 \item For the NRE $\etw{[a]}{a \npexp{m}{m \pexp{l}{l}{m} m \npexp{\cof{a} l
       \cof{m}}}}{[\ihsv{a}{a}]}$, the language in-contexts and the \cda\ are
       \begin{align*}
	C:& [a]\\
	\langof(\ne):& \bigl\{amnnn'lm' \mid m \lfresh a, n \lfresh am, l
	\lfresh an, n' \lfresh anl, n' \gfresh^1 amnl, m' \lfresh n'nl, m' \gfresh^2 mnln'
	\bigr\}\\
	E:& [\ihsv{amnln'm'}{n'}]
       \end{align*}
       \begin{minipage}{\linewidth}\centering
	\includegraphics[scale=.4]{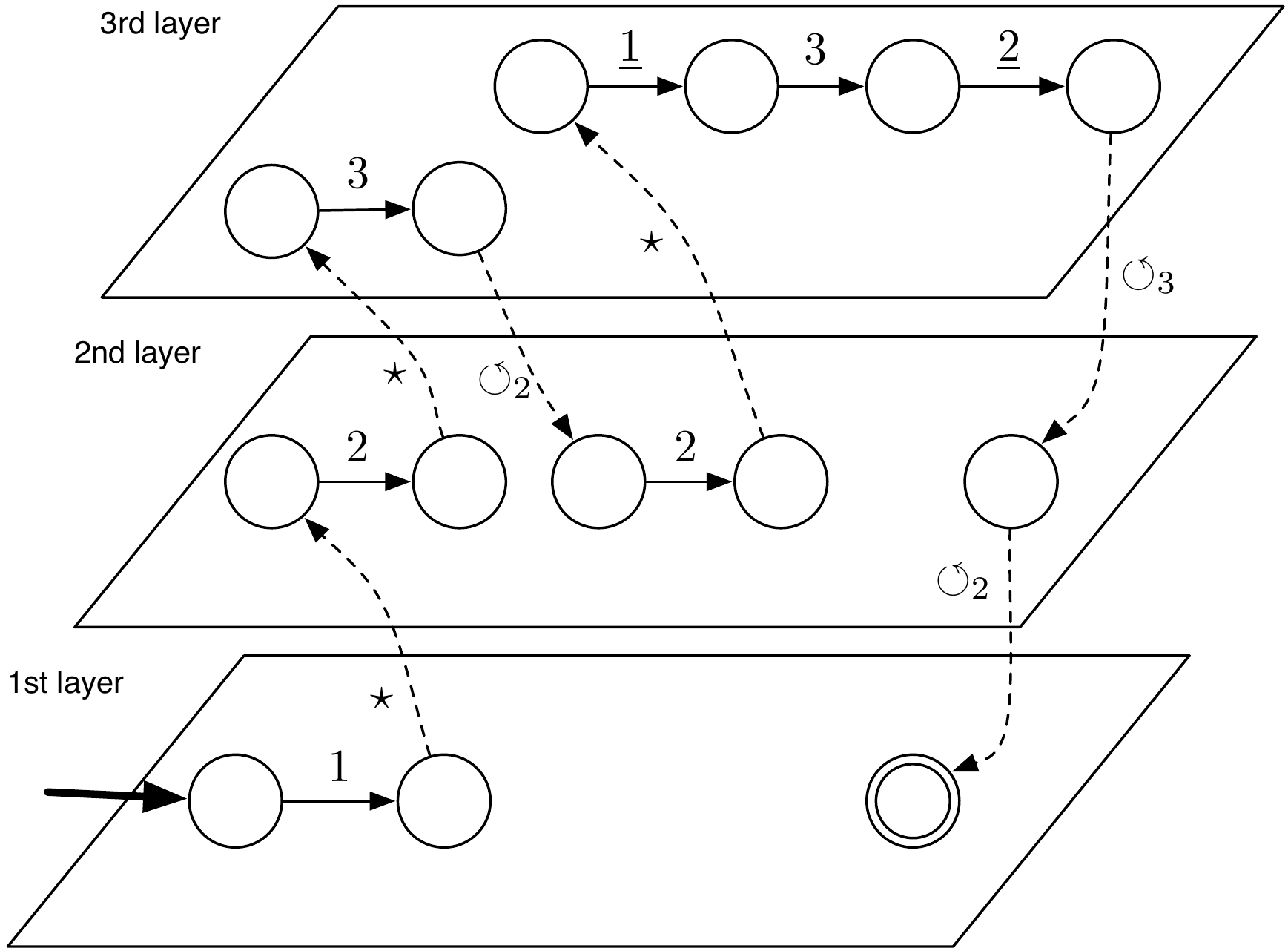}
       \end{minipage}
       \\[.5pc]
       This step concatenates case A with case (8).
       The initial assignment of the register is: $1 \mapsto a$ with the natural
       extant chronicle $[a]$.
       The run is:
       \begin{align*}
	\tuple{q_0, amnnn'lm', [\ihsv{a}{a}]} &\xrightarrow{1} \tuple{q_1,
	mnnn'lm', [\ihsv{a}{a}]}\\
	&\xrightarrow{\star} \tuple{q_2, mnnn'lm', [\ihsv{am}{a},
	\ihsv{m}{m}]}\\
	&\xrightarrow{2} \tuple{q_3, nnn'lm', [\ihsv{am}{a}, \ihsv{m}{m}]}\\
	&\xrightarrow{\star} \tuple{q_4, nnn'lm', [\ihsv{amn}{a}, \ihsv{mn}{m},
	\ihsv{n}{n}]}\\
	&\xrightarrow{3} \tuple{q_5, nn'lm', [\ihsv{amn}{a}, \ihsv{mn}{m},
	\ihsv{n}{n}]}\\
	&\xrightarrow{\pclose{2}} \tuple{q_6, nn'lm', [\ihsv{amn}{a},
	\ihsv{mn}{n}]}\\
	&\xrightarrow{2} \tuple{q_7, n'lm', [\ihsv{amn}{a}, \ihsv{mn}{n}]}\\
	&\xrightarrow{\star} \tuple{q_8, n'lm', [\ihsv{amnl}{a}, \ihsv{mnl}{n},
	\ihsv{l}{l}]}\\
	&\xrightarrow{\cof{1}} \tuple{q_9, lm', [\ihsv{amnln'}{n'},
	\ihsv{mnln'}{n}, \ihsv{ln'}{l}]}\\
	&\xrightarrow{3} \tuple{q_{10}, m', [\ihsv{amnln'}{n'}, \ihsv{mnln'}{n},
	\ihsv{ln'}{l}]}\\
	&\xrightarrow{\cof{2}} \tuple{q_{11}, \epsilon, [\ihsv{amnln'm'}{n'},
	\ihsv{mnln'm'}{m'}, \ihsv{ln'm'}{l}]}\\
	&\xrightarrow{\pclose{3}} \tuple{q_{12}, \epsilon, [\ihsv{amnln'm'}{n'},
	\ihsv{mnln'm'}{m'}]}\\
	&\xrightarrow{\pclose{2}} \tuple{q_{13}, \epsilon,
	[\ihsv{amnln'm'}{n'}]}
       \end{align*}
       where any names $m \lfresh a$, $n \lfresh am$, $l \lfresh an$, $n'
       \lfresh amnl$ and $m' \lfresh mnln'$.
       Hence, the \cda\ accepts the language in-contexts.

 \item For the NRE $\etw{[]}{\npexp{n}{n \npexp{m}{m \pexp{l}{l}{m} m
       \npexp{l}{\cof{n} l \cof{m}}}}}{[]}$, the language in-contexts and the
       \cda\ are
       \begin{align*}
	C:& []\\
	\langof(\ne):& \bigl\{lmnnn'l'm' \mid m \lfresh l, n \lfresh lm,
	l' \lfresh ln, n' \lfresh lnl', n' \gfresh^1 lmnl', m' \lfresh n'nl', m' \gfresh^2
	mnl'n' \bigr\}\\
	E:& []
       \end{align*}
       \begin{minipage}{\linewidth}\centering
	\includegraphics[scale=.4]{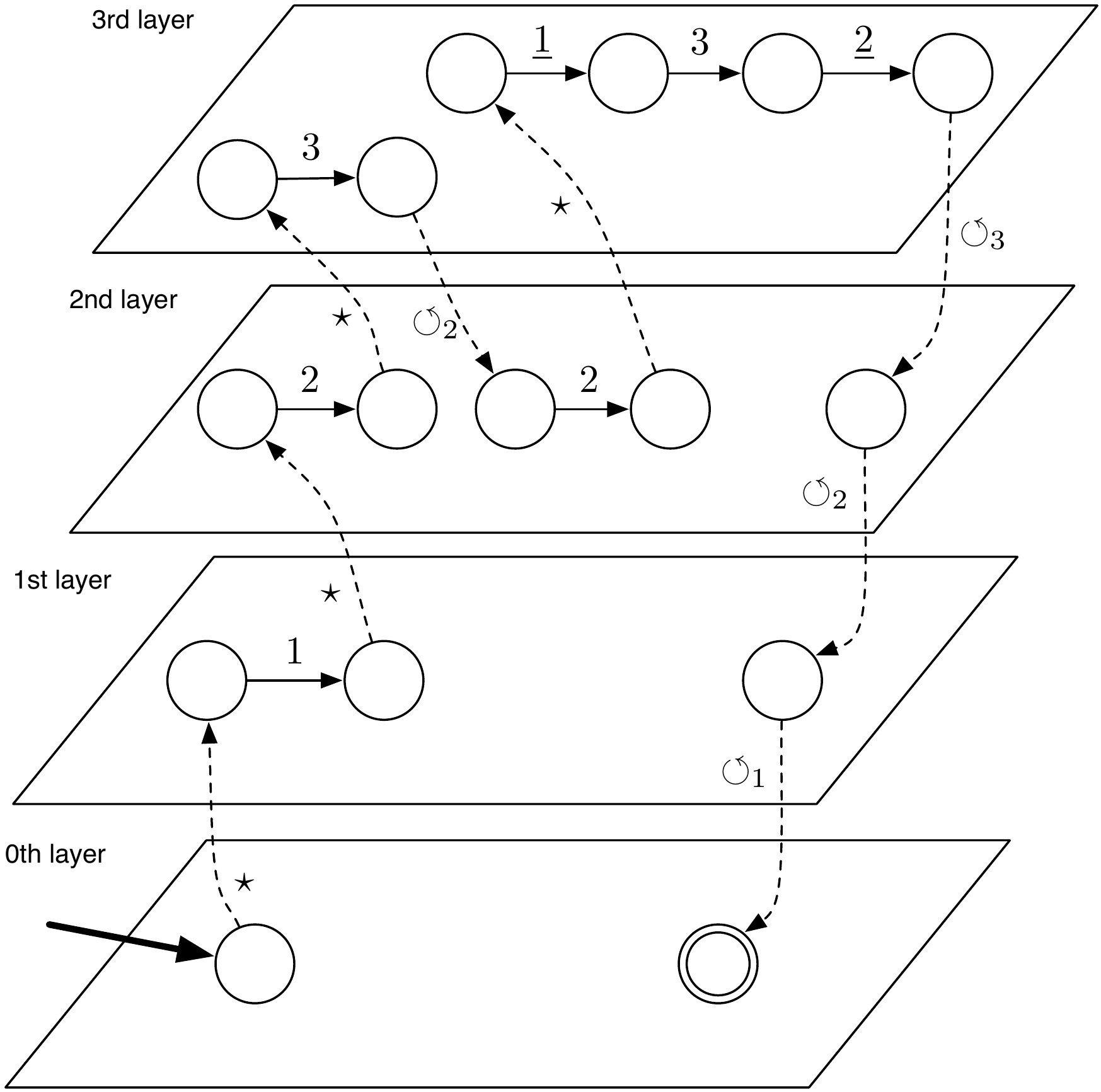}
       \end{minipage}
       \\[.5pc]
       This step abstracts the first register in case (9).
       The initial assignment is empty.
       The run is:
       \begin{align*}
	\tuple{q_0, lmnnn'l'm', []} &\xrightarrow{\star} \tuple{q_1, lmnnn'l'm',
	[\ihsv{l}{l}]}\\
	&\xrightarrow{1} \tuple{q_2, mnnn'l'm', [\ihsv{l}{l}]}\\
	&\xrightarrow{\star} \tuple{q_3, mnnn'l'm', [\ihsv{lm}{l},
	\ihsv{m}{m}]}\\
	&\xrightarrow{2} \tuple{q_4, nnn'l'm', [\ihsv{lm}{l}, \ihsv{m}{m}]}\\
	&\xrightarrow{\star} \tuple{q_5, nnn'l'm', [\ihsv{lmn}{l}, \ihsv{mn}{m},
	\ihsv{n}{n}]}\\
	&\xrightarrow{3} \tuple{q_6, nn'l'm', [\ihsv{lmn}{l}, \ihsv{mn}{m},
	\ihsv{n}{n}]}\\
	&\xrightarrow{\pclose{2}} \tuple{q_7, nn'l'm', [\ihsv{lmn}{l},
	\ihsv{mn}{n}]}\\
	&\xrightarrow{2} \tuple{q_8, n'l'm', [\ihsv{lmn}{l}, \ihsv{mn}{n}]}\\
	&\xrightarrow{\star} \tuple{q_9, n'l'm', [\ihsv{lmnl'}{l},
	\ihsv{mnl'}{n}, \ihsv{l'}{l'}]}\\
	&\xrightarrow{\cof{1}} \tuple{q_{10}, l'm', [\ihsv{lmnl'n'}{n'},
	\ihsv{mnl'n'}{n}, \ihsv{l'n'}{l'}]}\\
	&\xrightarrow{3} \tuple{q_{11}, m', [\ihsv{lmnl'n'}{n'},
	\ihsv{mnl'n'}{n}, \ihsv{l'n'}{l'}]}\\
	&\xrightarrow{\cof{2}} \tuple{q_{12}, \epsilon, [\ihsv{lmnl'n'm'}{n'},
	\ihsv{mnl'n'm'}{m'}, \ihsv{l'n'm'}{l'}]}\\
	&\xrightarrow{\pclose{3}} \tuple{q_{13}, \epsilon,
	[\ihsv{lmnl'n'm'}{n'}, \ihsv{mnl'n'm'}{m'}]}\\
	&\xrightarrow{\pclose{2}} \tuple{q_{14}, \epsilon,
	[\ihsv{lmnl'n'm'}{n'}]}\\
	&\xrightarrow{\pclose{1}} \tuple{q_{15}, \epsilon, []}
       \end{align*}
       where any names $l$, $m \lfresh l$, $n \lfresh lm$, $l' \lfresh ln$, $n'
       \lfresh lmnl'$ and $m' \lfresh mnl'n'$.
       Therefore, the \cda\ accepts the language for the NRE $\npexp{n}{n
       \npexp{m}{m \pexp{l}{l}{m} m \npexp{l}{\cof{n} l \cof{m}}}}$.
\end{enumerate}


\end{document}